\newtheorem{definition}{Definition}[section]
\newtheorem{example}{Example}[section]
\newtheorem{theorem}{Theorem}[section]
\newtheorem{lemma}{Lemma}[section]
\newtheorem{proposition}{Proposition}[section]
\newtheorem{corollary}{Corollary}[section]
\newtheorem{remark}{Remark}
\newcommand{\norme}[1]{\left\Vert #1\right\Vert}
\def\bbR{\mathbb{R}}
\def\bbB{\mathbb{B}}
\newcommand{\op}{\mathrm{op}}
\newcommand{\BL}{\mathrm{BL}}
\newcommand{\LL}{\mathrm{L}}
\newcommand{\proj}{\mathrm{p}}
\newcommand{\diam}{\mathrm{diam}}
\newcommand{\diamnoise}{\mathrm{D}}
\newcommand{\Lip}{\mathrm{Lip}}
\renewcommand{\d}{\mathrm{d}}
\newcommand{\NN}{\mathrm{N}}
\newcommand{\n}{\mathbf{n}}
\newcommand{\Pow}{\mathrm{Pow}}
\newcommand{\Wass}{\mathrm{W}}
\newcommand{\B}{\mathrm{B}}
\begin{document}

\title{Robust Geometry Estimation using the Generalized Voronoi Covariance Measure\footnote{This research has been supported in part by the ANR grants DigitalSnow ANR-11-BS02-009, KIDICO ANR-2010-BLAN-0205 and TopData ANR-13-BS01-0008.}
\footnote{An extended abstract of this work has been presented at EuroCG 2014.}}

\author[1,2]{Louis Cuel}
\author[2]{Jacques-Olivier Lachaud }
\author[1,3]{Quentin M\'erigot}
\author[2]{Boris Thibert}
\affil[1]{Laboratoire Jean Kuntzman, Universit\'e Grenoble-Alpes, France}
\affil[2]{Laboratoire de Math\'ematiques (LAMA), Universit\'e de Savoie, France}
\affil[3]{CNRS}
\maketitle
\begin{abstract}
The Voronoi Covariance Measure of a compact set $K$ of $\bbR^d$ is a tensor-valued measure that encodes geometrical information on $K$ and which is known to be resilient to Hausdorff noise but sensitive to outliers. In this article, we generalize  this notion to any distance-like function $\delta$ and define the $\delta$-VCM. Combining the VCM with the distance to a measure and also with the witnessed-$k$-distance, we get a provably good tool for normal estimation that is resilient to Hausdorff noise and to outliers. 
We present  experiments showing the robustness of our approach for normal and curvature estimation and sharp feature detection.
\end{abstract}

\noindent \textbf{Keywords:} Geometric Inference, Normal estimation, Curvature estimation, Distance to a measure, Voronoi Covariance Measure, Power Diagram.

\section{Introduction}
The estimation of normal vectors and other differential quantities,
such as principal curvature directions and direction of sharp
features, has many application in computer graphics, geometry
processing or computational geometry. The output of most acquisition
devices is a raw point cloud, often corrupted with noise. For
applications such as surface reconstruction, it is therefore very
useful to be able to perform these estimations directly on such data.
Many methods have been proposed for estimating normal and curvature
directions directly from a point cloud, such as principal component
analysis \cite{Hoppe}, local fitting of polynomials quantities
\cite{Pouget}, integral estimates \cite{pottmann_integral_2009},
moving least squares \cite{MLS-outliers}, statistical normal
clustering \cite{sgp-2012}, to name but a few. Our work is in 
the continuity of Voronoi-based methods for normal estimations, which have
been introduced in \cite{poles} and refined by many authors
\cite{dey2006normal,voronoi-based,vcm}. These methods are robust to
Hausdorff noise but not outliers. Our contribution is to generalize
the Voronoi-based approach, and combine it to the notion of distance
to a measure \cite{dist-measure,witnessed} so as to gain resilience to
outliers.

\paragraph{Voronoi-based estimation}
Classical principal component analysis methods try to estimate normal vectors
by fitting a tangent plane.  In contrast, Voronoi-based methods try to
fit the normal cones to the underlying shape, either geometrically
\cite{poles,dey2006normal} or  using covariance matrices
of Voronoi cells \cite{voronoi-based,vcm}.
In \cite{voronoi-based}, the authors estimate the normal at a data
point in two steps. They start by considering the covariance matrix of
the union of Voronoi cells of nearby points, with respect to the
center of mass of this union. Then, the normal estimation is given by
the the eigenvector corresponding to the largest eigenvalue of this
matrix. In \cite{vcm}, the authors improved this method by changing the domain of
integration and the averaging process. The authors showed that it is
possible to associate to any compact set $K$ a tensor-valued measure,
which they called the \textit{Voronoi Covariance Measure} of $K$
(VCM). Then, they proved that this notion is Hausdorff-stable, in the
sense that if two compact sets are close in the Hausdorff sense, their
VCM are also close to each other. The VCM of a smooth surface encodes
the normal vector field to this surface; this stability result
therefore ensures that this information is kept in the VCM of a point
cloud which is Hausdorff-close to the surface.

\paragraph{Distance to a measure} All the aforementioned Voronoi-based
methods for normal estimation rely directly or indirectly on the
notion of distance function. Recall that the distance function to a
compact subset $K$ of $\bbR^d$ is the function on $\bbR^d$ defined by
the formula $\d_K(x) = \min_{p\in K} \norme{p-x}$. The stability
properties of geometric inference methods based on the distance
function is derived from the stability property of the distance
function, namely that if $K$ and $L$ are Hausdorff-close, then $\d_K$
and $\d_L$ are uniformly close.  Unfortunately, geometric data is
usually corrupted with outliers, and the hypothesis of Hausdorff noise
is not realistic in practice. To make things worse, even the addition
of a single outlier to a point cloud can perturb the distance function
to that point cloud drastically. To counter this difficulty a robust
variant of the notion of distance function to a compact set, called
the \emph{distance to the measure}, was proposed in
\cite{dist-measure}. This new definition of distance is able to cope
with the presence of some outliers. Moreover, the distance to a
measure is \emph{distance-like}: this means that it possesses the
regularity properties of distance functions to compact sets which
makes them amenable to geometric inference.

\paragraph{Contributions} 
\begin{itemize}
\item We extend the definition of Voronoi covariance measure of a 
compact set. More precisely, we associate to any \emph{distance-like} function $\delta$, 
a family of tensor-valued measures called the $\delta$-Voronoi covariance 
measure ($\delta$-VCM). 

\item We show the stability of the $\delta$-VCM. Our main general theorem (Theorem \ref{thm:main-stability}) 
asserts  that if a distance-like function $\delta$ well approximates the distance function to a compact set $K$, 
then the $\delta$-VCM is close to the VCM of $K$. When applied to a point cloud $P$ approximating a surface 
$S$ of $\bbR^3$, this implies that one can recover the normal vectors of $S$ very accurately (Proposition \ref{prop:stability-distance-to-a-measure}). 
This estimation is Hausdorff stable and  robust to outliers.

\item The distance to a measure of a point cloud being not computable in practice, we replace it by the witnessed $k$-distance \cite{witnessed}. We show that the associated VCM still well approximates the VCM of the underlying surface (Proposition \ref{prop:stability-witness-distance}), which opens the door to practical computations. 

\item We show on various examples that the $\delta$-VCM provides a
  robust normal estimator resilient to Hausdorff noise and to
  outliers. For the experiments, we introduce another distance-like function, the \textit{median-$k$-distance}. Although we do not have any guarantee for the VCM based on the median-$k$-distance, it gives very robust estimations in practice.
   We also use the $\delta$-VCM to estimate curvatures and sharp
  features. Our estimators improve the results based on the VCM
  \cite{vcm} or on the Jet Fitting \cite{Pouget}, even when there are
  no outliers. They are also compared favorably to the
  normal classifier of Boulch {\em et al.} \cite{sgp-2012}.

 
\end{itemize}

\paragraph{Notation} In the following we denote by $\|.\|$ the
Euclidean norm of $\bbR^d$. We will call \emph{tensor} a square
matrix. The tensor product $\mathbf{v} \otimes \mathbf{w}$ of two
vectors $\mathbf{v}, \mathbf{w}$ is the $d \times d$ matrix whose
entries are defined by $(\mathbf{v} \otimes \mathbf{w})_{ij} =
v_iw_j$. The $d$-dimensional ball of center $x$ and radius $r$ is
denoted by $\bbB(x,r)$.

\section{$\delta$-Voronoi Covariance Measure}
As remarked in \cite{dist-measure}, many inference results rely only
on two properties of the distance function to a compact set. Any
functions satisfying these properties is called \emph{distance-like};
in particular, the usual distance function to a compact set is
distance-like.
The goal of this section is to extend the definition of Voronoi
Covariance Measure of a compact set, introduced in \cite{vcm} for the
purpose of geometric inference. We associate to any distance-like
function $\delta$ a tensor-valued measure called the
\emph{$\delta$-Voronoi covariance measure} or $\delta$-VCM.  Our main
theoretical result is Theorem \ref{thm:main-stability}, which asserts
in a quantitative way that if a distance-like function $\delta$ is
uniformly close to the distance function to a compact set, then the
$\delta$-VCM is close to the VCM of this compact set. Informally, this
theorem shows that one can recover geometric information about a
compact set using an approximation of its distance function by a
distance-like function.







\subsection{$\delta$-Voronoi Covariance Measure}\label{subsection:defVCM}

In this paragraph, we introduce the definitions necessary for
the precise statement of the main theorem. We start by the definition
of distance-like function, following \cite{dist-measure}. Note that
we used the remark following Proposition~3.1 in \cite{dist-measure} to
drop the $1$-Lipschitz assumption, which follows from the two other
assumptions.

\begin{definition}[Distance-like function]\label{def:dist-like}
  A function $\delta:\bbR^d \to \bbR^+$ is called \emph{distance-like} if
\begin{itemize}
\item $\delta$ is proper, i.e. $\lim_{\|x\| \to \infty} \delta(x)= \infty$.
\item $\delta^2$ is $1$-semiconcave, that is $\delta^2(.) - \|.\|^2 $
  is concave.
\end{itemize}
\end{definition}

The typical examples of distance-like functions that we will consider
are power distances.  Given a point cloud $P$ and a family of
non-negative weights $(\omega_p)_{p\in P}$, we call \emph{power
  distance} to $P$ the distance-like function $\delta_P$ defined by
\begin{equation}
\label{eq:def-power-distance}
\delta_{P}(x) := \left(\min_{p\in P} \left( \| x - p \|^2 + \omega_p \right)\right)^{1/2}.
\end{equation}
Note that when the weights are all zero, the power distance is nothing
but the distance function to the point cloud $P$.

\begin{definition}[$\delta$-VCM]
  The {\em $\delta$-Voronoi Covariance Measure} is a tensor-valued
  measure. Given any non-negative \emph{probe function} $\chi$, i.e. an
  integrable function on $\bbR^d$, we associate a positive
  semi-definite matrix defined by
\begin{equation}
\mathcal{V}_{\delta,R}(\chi) := \int_{\delta^R}\mathbf{n}_{\delta}(x) \otimes \mathbf{n}_{\delta}(x) \mathbf{.} 
\chi \left( x - \mathbf{n}_{\delta}(x)  \right)\d x, 
\label{eq:vcm}
\end{equation}
where $\delta^R :=\delta^{-1}((-\infty,R])$ and where
$\mathbf{n}_{\delta}(x):= \frac{1}{2}\nabla\delta^2(x)$. Note that
this vector field $\mathbf{n}_{\delta}$ is defined at almost every point in $\bbR^d$ by the
$1$-concavity property of distance-like functions.
\end{definition}
The tensor $\mathcal{V}_{\delta,R}(\chi)$ is a convolution of a tensor with the function $\chi$ that localizes the calculation on the support of $\chi$. Intuitively, if $\chi$ is the indicatrix of a ball $\B\subset \bbR^d$, then the VCM $\mathcal{V}_{\delta,R}(\chi)$  is the integral of $\mathbf{n}_{\delta} \otimes \mathbf{n}_{\delta}$ over the set of points of $\delta^R$ that are ``projected'' into the ball $\B$. 



\begin{example}[$\delta$-VCM of a distance function]
When considering the distance function to a compact set $K$, the set
$\d_K^R$ coincides with the offset of $K$ with radius $R$, thus
explaining our choice of notation. Moreover, the vector field
$\mathbf{n}_{d_K}$ has an explicit expression in term of the
projection function $\proj_K$ on $K$, that is the application that
maps a point to its closest neighbor in $K$: 
\begin{equation}
  \mathbf{n}_{d_K}(x) =
  x-\proj_K(x).
\end{equation}
Comparing Eq.~\eqref{eq:vcm} to Eq.~(4.2) in \cite{vcm} and using
this remark, one sees that the $\d_K$-VCM defined here
matches the original definition of the VCM of $K$:
$$
\mathcal{V}_{d_K,R}(\chi) = \int_{K^R}(x-p_K(x)) \otimes (x-p_K(x)) \mathbf{.} \chi(p_K(x))\d x.
$$

Consider a smooth compact surface $S$ of $\bbR^3$, with exterior unit
normal $\n$. If $R$ is chosen small enough the following
  expansion holds as $r\to 0$, where $\norme{.}_\op$ is the operator norm \cite{vcm}:
\begin{equation}\label{eq:vcm-normal}
  \norme{\mathcal{V}_{d_S,R}(\mathbf{1}_{\B(p,r)}) -
    \frac{2\pi}{3} R^3r^2\left[
      \n(p) \otimes \n(p)
    \right]}_\op = O(r^3).
\end{equation}

This equation shows that one can recover local information about
differential quantities from the VCM of a surface. Note that curvature
information is also present in the matrix.
\end{example}

\begin{example}[$\delta$-VCM of a power distance]
  Let us give a closed form expression for the VCM of a power
  distance, which we will use in the computations.  Each power
  distance defines a decomposition of the space into a family of
  convex polyhedra, called power cells, on which the function
  $\delta_P^2$ is quadratic. The power cell of the point $p$ in $P$ is
  defined by
\begin{equation}
\Pow_{P}(p)=\{x\in \bbR^d; \forall q\in P, \|x-p\|^2 + \omega_p \leq \|x-q\|^2 + \omega_{q} \}.
\end{equation}
When the weight vector $\omega$ vanishes, we recover the notion of
Voronoi cell. The following Lemma generalizes Eq.~(2.1) in
\cite{vcm}, and shows that computing the VCM of a power distance
amounts to computing the covariance matrix of the intersection of each
power cell with a ball (see also Algorithm~\ref{algo:iterative}).
\end{example}

\begin{lemma}\label{lemma:power-distance-vcm}
  Let $(P,\omega)$ be a weighted point cloud. Given a probe function $\chi$,
\begin{equation}
\mathcal{V}_{\delta_{P},R}(\chi) = \sum_{p\in P} \chi(p)\ M_p,
\end{equation}
where  $M_p$ is the covariance matrix of $C_p:=\Pow_P(p) \cap \bbB(p, (R^2 - \omega_p)^{1/2})$,
\begin{equation}
M_p:=\int_{C_p} (x - p) \otimes (x-p) \d x.
\end{equation}
\end{lemma}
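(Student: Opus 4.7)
The plan is to exploit the piecewise-quadratic structure of $\delta_P^2$ on power cells and then unfold the definition \eqref{eq:vcm}.

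First I would observe that the power cells $\{\Pow_P(p)\}_{p\in P}$ form a decomposition of $\bbR^d$ into convex polyhedra whose pairwise intersections lie on finitely many hyperplanes and therefore have Lebesgue measure zero. On the interior of $\Pow_P(p)$, the minimum in \eqref{eq:def-power-distance} is uniquely attained at $p$, so $\delta_P^2(x) = \|x-p\|^2 + \omega_p$ there. Differentiating gives
\begin{equation*}
\mathbf{n}_{\delta_P}(x) = \tfrac{1}{2}\nabla \delta_P^2(x) = x - p
\quad\text{for a.e. } x \in \Pow_P(p),
\end{equation*}
which handles the a.e.\ definedness of $\mathbf{n}_{\delta_P}$ coming from the $1$-semiconcavity.

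Next I would rewrite the two parts of the integrand on $\Pow_P(p)$: the tensor factor becomes $(x-p)\otimes(x-p)$, and the shift inside the probe function becomes $\chi(x - \mathbf{n}_{\delta_P}(x)) = \chi(p)$, a constant in $x$ that can be pulled out of the integral. For the domain, the sublevel set restricted to $\Pow_P(p)$ is
\begin{equation*}
\delta_P^R \cap \Pow_P(p) = \{x \in \Pow_P(p) : \|x-p\|^2 + \omega_p \leq R^2\} = \Pow_P(p) \cap \bbB\bigl(p, (R^2-\omega_p)^{1/2}\bigr) = C_p,
\end{equation*}
with the convention that the cell contributes nothing when $\omega_p > R^2$ (the ball is empty).

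Finally I would split the integral in \eqref{eq:vcm} over the partition $\{\Pow_P(p)\}$, use the previous two observations on each piece, and assemble:
\begin{equation*}
\mathcal{V}_{\delta_P,R}(\chi) = \sum_{p \in P} \int_{C_p} (x-p)\otimes(x-p)\, \chi(p)\, \d x = \sum_{p\in P} \chi(p)\, M_p.
\end{equation*}
The only subtlety, and the one I would be careful with, is justifying that the gradient identity and the characterization of the minimizing site really hold almost everywhere and not just on the open cell interiors; but this follows from the boundaries of the power diagram being a finite union of hyperplanes and hence Lebesgue-negligible, so the contribution of points where $\mathbf{n}_{\delta_P}$ is ambiguous is zero. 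No further estimates are needed.
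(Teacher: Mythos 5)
Your proof is correct and follows essentially the same route as the paper's: decompose the integral in \eqref{eq:vcm} over the power cells, use that $\mathbf{n}_{\delta_P}(x)=x-p$ on the interior of $\Pow_P(p)$ so that the probe factor is the constant $\chi(p)$, and identify $\Pow_P(p)\cap\delta_P^R$ with $C_p$ (with the empty-ball convention when $\omega_p>R^2$). The only difference is that you make explicit the Lebesgue-negligibility of the cell boundaries, which the paper leaves implicit.
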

\begin{proof}
Since for every point $x$ in the interior of the power cell
$\Pow_P(p)$, $\mathbf{n}_{\delta_{P}}(x) = x-p$, hence
$\chi(x-\mathbf{n}_{\delta_{P}}(x))=\chi(p)$ is constant and one can
decompose the integral over power cells
\begin{equation*}
\mathcal{V}_{\delta_{P},R}(\chi) =
\sum_{p\in P} \chi(p) \int_{\Pow_P(p) \cap \delta_{P}^{R}} (x - p) \otimes (x-p) \d x.
\end{equation*}
Furthermore a computation gives
$\Pow_P(p) \cap \delta_{P}^{R} =
 \Pow_P(p) \cap 
\mathbb{B}\left(p, (R^2 - \omega_p)^{1/2}\right)$,
if we consider that $R^2 - \omega_p <0$ defines the empty ball.
\end{proof}

\begin{figure}[t]  
\centering
\includegraphics[width=.66\linewidth]{./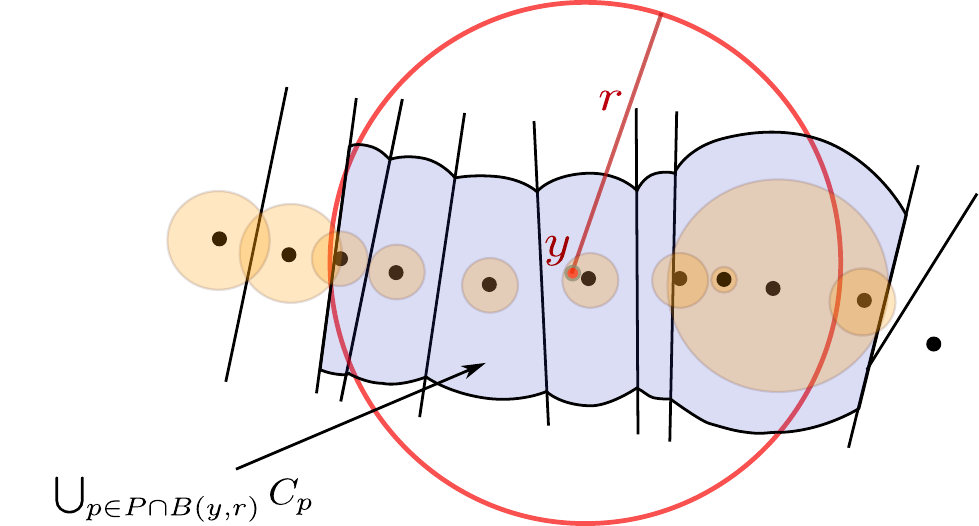}\hspace{0.9cm}
\caption{Integration domain of $\delta_P$-VCM. 
The weighted point cloud $P$ is represented by a union of circles whose radius are the weights. 
We suppose $\chi$ to be the indicatrix of the ball $\bbB(y,r)$ of center $y$ and radius $r$. The number of power cells is less than the cardinal of the point cloud since one cell is empty.}
\label{figure:power2}
\end{figure}

\subsection{Stability of the $\delta$-VCM}\label{subsection:stability}
We are now able to state our main theorem, which is a
generalization of the stability theorem for VCM proven in
\cite{vcm}. Our theorem asserts that if the distance function to a
compact set $K$ is well approximated by a distance-like function
$\delta$, then the VCM of $K$ is also well approximated by the
$\delta$-VCM. The hypothesis of this theorem is satisfied for instance
under the sampling model considered in \cite{witnessed}.

The uniform norm of a function $f$ on $\bbR^d$ is denoted
$\norme{f}_\infty = \sup_{\bbR^d} |f|$. Given a Lipschitz function $f$
with Lipschitz constant $\Lip(f)$, we introduce its bounded-Lipschitz
norm $\norme{f}_\BL = \norme{f}_\infty + \Lip(f)$.
\begin{theorem} \label{thm:main-stability} Let $K$ be a compact set and $\delta$ a distance-like function.  For any bounded Lipschitz function $\chi : \bbR^d \longrightarrow \bbR^+$, one has
\begin{equation*}
\| \mathcal{V}_{\delta,R}(\chi) - \mathcal{V}_{d_K,R}(\chi)  \|_{\op} \leq C_1\ \norme{\chi}_\BL \norme{\delta-d_{K}}_{ \infty}^{\frac{1}{2}},
\end{equation*}
where $C_1$ is a constant that only depends on $R$, $d$ and $\diam(K)$.
\end{theorem}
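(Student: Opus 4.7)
The plan is to bound the operator norm of the difference by splitting $\mathcal{V}_{\delta,R}(\chi) - \mathcal{V}_{d_K,R}(\chi)$ into three terms that isolate, respectively, the discrepancy of the integration domains $\delta^R$ and $d_K^R$, the discrepancy of the gradient-like vector fields $\n_\delta$ and $\n_{d_K}$, and the discrepancy coming from evaluating $\chi$ at the two different ``projections'' $x-\n_\delta(x)$ and $x-\n_{d_K}(x)$. Writing $\varepsilon := \norme{\delta-d_K}_\infty$, the goal is to bound each of the three contributions by $O(\norme{\chi}_\BL \varepsilon^{1/2})$, with constants depending only on $R$, $d$ and $\diam(K)$.

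For the domain contribution, the uniform estimate $\varepsilon$ implies the inclusions $\delta^R\subset d_K^{R+\varepsilon}$ and $d_K^R\subset \delta^{R+\varepsilon}$, so the symmetric difference $\delta^R\Delta d_K^R$ is contained in the thin offset shell $d_K^{R+\varepsilon}\setminus d_K^{(R-\varepsilon)_+}$. Its Lebesgue measure is $O(\varepsilon)$ (with constant depending on $R$, $d$, $\diam(K)$), while on either domain $\norme{\n_\delta(x)}\le R$ and $\norme{\n_{d_K}(x)}\le R$ since $\norme{\n_\delta}^2 = \delta^2$ at points where $\delta$ is differentiable, and similarly for $d_K$. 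Hence this contribution is bounded by $C\norme{\chi}_\infty \varepsilon$, which is already smaller than the target $\varepsilon^{1/2}$ bound.

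The main obstacle — and where the square root appears — is controlling the gradient discrepancy on the common region $\delta^R\cap d_K^R$, which both lie in some fixed bounded region $B\subset \bbR^d$ depending on $R$ and $\diam(K)$. Here I would invoke the classical fact about semi-concave functions: since both $\delta^2$ and $d_K^2$ are $1$-semi-concave (so $\delta^2-\norme{\cdot}^2$ and $d_K^2-\norme{\cdot}^2$ are concave) and $\norme{\delta^2-d_K^2}_\infty \le (2R+\diam(K))\varepsilon$ on $B$, an elementary argument via integration against bump functions (or, equivalently, via the Alexandrov differentiability of concave functions) yields
\begin{equation*}
\int_{B} \norme{\tfrac12\nabla\delta^2(x)-\tfrac12\nabla d_K^2(x)}\,\d x \;\le\; C_2(R,d,\diam(K))\,\varepsilon^{1/2}.
\end{equation*}
This is the standard lemma used already in the Hausdorff-stability proof of \cite{vcm}; I would simply restate and apply it here. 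Combined with the pointwise bound $\norme{\n_\delta}+\norme{\n_{d_K}}\le 2R$, the identity $a\otimes a - b\otimes b = (a-b)\otimes a + b\otimes(a-b)$ converts this $L^1$ bound on $\n_\delta-\n_{d_K}$ into an operator-norm bound of the required form on the integrand $\n\otimes\n$.

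Finally, for the $\chi$ contribution, the Lipschitz bound gives
\begin{equation*}
|\chi(x-\n_\delta(x))-\chi(x-\n_{d_K}(x))|\le \Lip(\chi)\,\norme{\n_\delta(x)-\n_{d_K}(x)},
\end{equation*}
so integrating against the bounded tensor $\n_\delta\otimes\n_\delta$ (of operator norm $\le R^2$) and reusing the same $L^1$ estimate produces a term of order $\Lip(\chi)\,\varepsilon^{1/2}$. Adding the three contributions and collecting the constants yields the claimed inequality with some $C_1 = C_1(R,d,\diam(K))$. The only technically delicate step is the semi-concave gradient estimate, and I expect reproducing the proof from \cite{vcm} almost verbatim — the stability of VCM was designed for exactly this kind of comparison, and the passage from $d_L$ (another distance function) to a general distance-like $\delta$ does not change the argument, since only $1$-semi-concavity of the squared functions and uniform closeness are used.
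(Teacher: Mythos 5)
Your proposal is correct and follows essentially the same route as the paper: the same three-way split (domain discrepancy of measure $O(\varepsilon)$, the tensor identity $a\otimes a-b\otimes b=(a-b)\otimes a+b\otimes(a-b)$, and the Lipschitz term for $\chi$), with the square root entering through the $\LL^1$ stability of gradients of the convex functions $\|\cdot\|^2-\delta^2$ and $\|\cdot\|^2-d_K^2$, which is exactly the paper's Corollary \ref{coro:gradient} (obtained from Theorem 3.5 of \cite{boundary}). The only slip is cosmetic: $\|\mathbf{n}_{\delta}\|^2=\delta^2$ need not hold for a general distance-like function (one only has $\|\nabla\delta\|\le 1$, hence $\|\mathbf{n}_{\delta}\|\le\delta\le R$ on $\delta^R$), but that inequality is all your argument actually uses.
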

In practice choosing a probe function $\chi$ supported in a small ball
allows one to recover local information from the $\delta$-VCM.

\begin{remark}
A notable feature of this theorem is that the constant in the upper
bound depends only on the diameter of $K$ and not on its local geometry or
on its regularity.
\end{remark}

We now recall a simplified version of Davis-Kahan Theorem \cite{matrix-perturbation}. 
\begin{theorem}[Davis-Kahan]
Let $M$ and $M'$ be two symmetric matrices, $\lambda$ an eigenvalue of M and $\delta_{\lambda}(M)$ be the minimum distance between $\lambda$ and any other eigenvalue of $M$. Then for every eigenvector $v$ of $M$ with eigenvalue $\lambda$, there exists an eigenpair $(\lambda',v')$  of $M'$ such that
$$
| \lambda - \lambda '| \leq \sqrt{2}\|M-M'\|_{\op}
\quad \mbox{and}\quad
\|v - v'\| \leq \frac{\sqrt{2}\|M-M'\|_{\op}}{\delta_{\lambda}(M)},
$$
provided that $\|M-M'\|_{\op} \leq \sqrt{2}\ \delta_{\lambda}(M)$.
\end{theorem}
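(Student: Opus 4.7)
The plan is to view both inequalities as classical matrix perturbation theory: the eigenvalue bound is essentially Weyl's inequality, while the eigenvector bound is a manifestation of the Davis--Kahan sine-theta theorem, combined with the elementary identity $\|v-v'\|^2 = 2(1-\langle v, v'\rangle) \le 2\sin^2\Theta$ valid for unit vectors whose angle $\Theta$ lies in $[0,\pi/2]$.

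For the eigenvalue estimate, I would appeal to the Courant--Fischer min-max characterization: the $k$-th eigenvalue (with eigenvalues listed in decreasing order) of a symmetric matrix is $1$-Lipschitz with respect to the operator norm. Taking $\lambda'$ to be the eigenvalue of $M'$ with the same index as $\lambda$ in $M$ immediately gives $|\lambda-\lambda'|\le \|M-M'\|_{\op}$, which is actually stronger than the claimed inequality by a factor of $\sqrt{2}$.

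For the eigenvector estimate, I would diagonalize $M'$ in an orthonormal basis $(v'_i)_i$ with eigenvalues $(\lambda'_i)_i$ and expand the unit eigenvector $v=\sum_i c_i v'_i$. Using $Mv=\lambda v$ to write $(M'-\lambda I)v=(M'-M)v$ and taking squared Euclidean norms, one obtains
\begin{equation*}
\sum_i (\lambda'_i-\lambda)^2 c_i^2 \;\le\; \|M-M'\|_{\op}^2.
\end{equation*}
Choose $i_0$ so that $\lambda':=\lambda'_{i_0}$ is an eigenvalue of $M'$ closest to $\lambda$ and set $v':=v'_{i_0}$. For each $i\neq i_0$ one needs a lower bound of order $\delta_\lambda(M)$ on $|\lambda'_i-\lambda|$; this is precisely where the spectral gap assumption enters and where Weyl's inequality is invoked in reverse, pairing each such $\lambda'_i$ with an eigenvalue of $M$ different from $\lambda$. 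This yields $\sum_{i\neq i_0} c_i^2 \le \|M-M'\|_{\op}^2/\delta_\lambda(M)^2$, i.e., $\sin\Theta\le \|M-M'\|_{\op}/\delta_\lambda(M)$, where $\Theta$ is the angle between $v$ and the line $\bbR v'$. After fixing the sign of $v'$ so that $\langle v,v'\rangle\ge 0$, the trigonometric identity above converts this into $\|v-v'\|\le\sqrt{2}\sin\Theta\le\sqrt{2}\|M-M'\|_{\op}/\delta_\lambda(M)$, and the hypothesis $\|M-M'\|_{\op}\le\sqrt{2}\,\delta_\lambda(M)$ is precisely what keeps the output within the trivial range $\|v-v'\|\le 2$ available for two unit vectors.

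The main obstacle in carrying this out cleanly is the ``reverse Weyl'' step used to lower-bound $|\lambda'_i-\lambda|$ for $i\neq i_0$: one has to ensure that eigenvalues of $M'$ other than $\lambda'$ cannot cluster artificially close to $\lambda$, and some care is needed when $\lambda$ is a repeated eigenvalue so that $v'$ must be picked inside the correct eigenspace of $M'$ rather than as a single $v'_{i_0}$. Rather than redo this spectral bookkeeping by hand I would apply the invariant-subspace version of the Davis--Kahan theorem to the one-dimensional $M$-invariant subspace $\bbR v$ and the spectral projector of $M'$ onto the eigenvalues lying outside a small window around $\lambda$, which packages all of the above into a single clean sine-theta inequality.
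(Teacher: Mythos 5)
First, a point of comparison: the paper does not prove this statement at all --- it is recalled verbatim from the matrix perturbation literature (\cite{matrix-perturbation}) and used as a black box in the remark that follows it, so there is no in-paper argument to measure yours against. Judged on its own, your outline follows the standard route (Weyl for the eigenvalue, an eigenbasis expansion for the eigenvector), and the eigenvalue half is complete and correct; Courant--Fischer/Weyl even gives the stronger bound $|\lambda-\lambda'|\le\|M-M'\|_{\op}$, as you note.

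The eigenvector half, however, has a genuine gap at exactly the step you single out as ``the main obstacle,'' and deferring it does not make it go away. After expanding $v=\sum_i c_i v_i'$ and obtaining $\sum_i(\lambda_i'-\lambda)^2c_i^2\le\|M-M'\|_{\op}^2$, you need $|\lambda_i'-\lambda|\gtrsim\delta_\lambda(M)$ for $i\ne i_0$. Reverse Weyl only gives $|\lambda_i'-\lambda|\ge\delta_\lambda(M)-\|M-M'\|_{\op}$, which under the stated hypothesis $\|M-M'\|_{\op}\le\sqrt2\,\delta_\lambda(M)$ can be negative and hence yields nothing; even under a stronger smallness assumption it produces a denominator $\delta_\lambda(M)-\|M-M'\|_{\op}$ rather than the stated $\delta_\lambda(M)$, so the constant $\sqrt2$ is not recovered. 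Invoking ``the invariant-subspace version of Davis--Kahan'' at the end is close to circular for a proof of this theorem, and in any case that theorem's denominator is the separation between part of the spectrum of one matrix and the complementary spectrum of the other, so translating it into $\delta_\lambda(M)$ is precisely the bookkeeping you are trying to avoid. Finally, your caveat about repeated eigenvalues points at a real problem rather than a technicality: the span of the relevant $v_i'$ is in general only an invariant subspace of $M'$, not an eigenspace, so the $v'$ produced by projecting onto it need not be an eigenvector and $(\lambda',v')$ need not be an eigenpair. Indeed, for $M=\mathrm{diag}(0,0,1)$, $v=(1,1,0)/\sqrt2$ and $M'=\mathrm{diag}(\epsilon,-\epsilon,1)$, every eigenvector of $M'$ is at distance at least $\sqrt{2-\sqrt2}$ from $v$ while the claimed bound is $\sqrt2\,\epsilon$; the statement therefore requires $\lambda$ to be simple (as it is in the paper's application, where $M=\n(p)\otimes\n(p)$ and $\lambda=1$). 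A complete argument must assume simplicity, establish $\min_{i\ne i_0}|\lambda_i'-\lambda|$ honestly, and track the constants --- or simply cite the reference, as the paper does.
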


\begin{remark}\label{remark:davis-kahan}
When the compact set $K$ is a smooth surface $S$ and $\chi_p^r$ is the
indicatrix of a ball centered at a point $p$ of $S$, Equation
(\ref{eq:vcm-normal}) implies that the eigenvector associated to the
highest eigenvalue of $\mathcal{V}_{d_K,R}(\chi_p^r)$ gives an estimation
of the normal vector to $S$ at $p$. Indeed, $\lambda = 1$ is the only non-vanishing eigenvalue 
of the matrix $M:=\n(p) \otimes \n(p)$ and $\delta_{\lambda(M)}=1$. Hence, by 
Davis-Kahan theorem, and under reasonable assumptions, an eigenvector associated to the highest 
eigenvalue of $\mathcal{V}_{\delta,R}(\chi_p^r)$ thus gives an estimation of the normal direction of $S$ at the point $p$.
%
\end{remark}

\subsection{Stability of gradients of distance-like functions}\label{subsection:gradients}
We mention in this subsection an intermediary result, Corollary
\ref{coro:gradient}, that guarantees the $\LL^1$-stability of
gradients of distance-like functions. It is a consequence of Theorem
3.5 of \cite{boundary} which gives the stability of gradients of
convex functions.  For any function $f:\bbR^d\to\bbR^k$ and any set $E$
of $\bbR^d$, we put
$$\|f \|_{1,E} = \int_E \|f(x)\| \d x \qquad \hbox{ and } \qquad
\|f\|_{\infty,E}=\sup_{x\in E}\|f(x)\|.$$
If $E$ is rectifiable, we denote by $\mathcal{H}^n(E)$ its $n$-dimensional Hausdorff measure. 
If $f$ is differentiable almost everywhere, one puts $\nabla f(E):=\{\nabla f(x),\ x\in E \cap \Omega_f\}$ where $\Omega_f$ is the set of points where $f$ is differentiable. The diameter of a set $X\subset \bbR^d$ is given by $\diam(X):=\sup_{x,y\in X}\|x-y\|$. We first recall the following theorem.

\begin{theorem}[Theorem 3.5 of \cite{boundary}]\label{thm:3.5}
Let E be an open subset of $\bbR^d$ with rectifiable boundary, and $f$, $g$ be two convex functions from $E$ to $\mathbb{R}$ such that $\diam(\nabla f(E) \cup \nabla g(E)) \leq D$. Then  
$$
\| \nabla f - \nabla g \|_{1,E} \leq C_2\ \left( \mathcal{H}^d(E) + (D + \| f-g \|^{\frac{1}{2}}_{\infty , E}) \mathcal{H}^{d-1}(\partial E) \right) \| f- g \|^{\frac{1}{2}}_{\infty , E},
$$
\label{thm:convex}
where $C_2$ is a constant that only depends on the dimension $d$.
\end{theorem}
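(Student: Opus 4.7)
The plan is to couple a pointwise convexity inequality involving an auxiliary scale $t>0$ with a global bound on the Hessian measure of a convex function, and then to optimize in $t$; write $\eta := \norme{f-g}_{\infty,E}$.

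First I would establish a pointwise estimate. For a.e.\ $x \in E$ at which $f$ and $g$ are differentiable, for any unit vector $v$ and any $t > 0$ with $x \pm tv \in E$, the subgradient characterization of convexity gives
\[
\nabla f(x) \cdot v \;\leq\; \frac{f(x+tv) - f(x)}{t}, \qquad \nabla g(x) \cdot v \;\geq\; \frac{g(x) - g(x-tv)}{t}.
\]
Subtracting and using $|f-g| \leq \eta$ yields
\[
(\nabla f - \nabla g)(x) \cdot v \;\leq\; \frac{g(x+tv) - 2g(x) + g(x-tv)}{t} + \frac{2\eta}{t},
\]
whose first summand is the non-negative second difference of the convex function $g$ in direction $v$.

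Next I would integrate by taking $v = u(x)$, the unit direction of $\nabla f(x) - \nabla g(x)$, on the $t$-interior $E_t := \{x \in E : \mathrm{dist}(x,\partial E) \geq t\}$; on the strip $E \setminus E_t$, whose $d$-volume is at most $Ct\,\mathcal{H}^{d-1}(\partial E)$, I would use the trivial bound $\norme{\nabla f - \nabla g} \leq D$. The main task then reduces to controlling the direction-dependent integral $\int_{E_t} t^{-1}\bigl[g(x+tu(x)) - 2g(x) + g(x-tu(x))\bigr]\,\d x$. Alexandrov's theorem provides a positive semi-definite matrix-valued Hessian measure $\nabla^2 g$ on $E$ whose total trace is controlled by a divergence-theorem computation applied to $\nabla g$, namely $\int_E \mathrm{tr}(\nabla^2 g) \leq D\,\mathcal{H}^{d-1}(\partial E)$. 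After smoothing $g$ by convex convolutions, the Taylor identity $g(x+tv) - 2g(x) + g(x-tv) = \int_{-t}^{t}(t-|s|)\,v^{\top} \nabla^2 g(x+sv)\,v\,\d s$ combined with the PSD bound $v^{\top} \nabla^2 g\,v \leq \mathrm{tr}(\nabla^2 g)$ absorbs the direction dependence and yields an upper bound of order $D t\,\mathcal{H}^{d-1}(\partial E)$.

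Assembling the pieces yields $\norme{\nabla f - \nabla g}_{1,E} \leq C\eta t^{-1}\mathcal{H}^d(E) + C' D t\,\mathcal{H}^{d-1}(\partial E)$ for $t$ in a suitable range. Choosing $t \sim \eta^{1/2}$ (the scale balancing the $\eta/t$ and $t$ factors once $D$ and the boundary measure are treated as fixed) makes the two terms of order $\eta^{1/2}\mathcal{H}^d(E)$ and $D\eta^{1/2}\mathcal{H}^{d-1}(\partial E)$, matching the first two summands of the theorem; the extra $\eta\,\mathcal{H}^{d-1}(\partial E)$ summand arises from a refined analysis of the boundary strip $E \setminus E_t$, where one uses the pointwise estimate with a scale $t' = \mathrm{dist}(x,\partial E)$ adapted to each $x$. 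The main obstacle I anticipate is precisely this integration step: because $u(x)$ varies with $x$, the second difference cannot be handled by a fixed translation, and the essential trick is to isotropize by dominating $v^{\top}\nabla^2 g\,v$ by $\mathrm{tr}(\nabla^2 g)$, so that the direction-dependent integral is controlled by the pure-boundary quantity $\int_E \mathrm{tr}(\nabla^2 g)$ coming from the Laplacian of a convex function.
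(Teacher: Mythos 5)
This statement is not proved in the paper at all: it is imported verbatim as Theorem~3.5 of \cite{boundary}, so there is no internal proof to compare against, and your attempt must be judged on its own merits. Your pointwise convexity inequality and the reduction to the second difference of $g$ are correct, but the proof has a genuine gap at exactly the step you yourself flag as the main obstacle, and the fix you propose does not close it. After the Taylor identity you must control
$\int_{E_t}\int_{-t}^{t}(t-|s|)\,u(x)^{\top}\nabla^2 g\bigl(x+s\,u(x)\bigr)\,u(x)\,\d s\,\d x$.
Dominating $v^{\top}\nabla^2 g\,v$ by $\mathrm{tr}(\nabla^2 g)$ removes the direction dependence from the quadratic form but not from the \emph{point of evaluation} $x+s\,u(x)$: since $u$ is the normalized gradient difference it is merely measurable, the map $x\mapsto x+s\,u(x)$ is neither a translation nor injective and has no controlled Jacobian, so you cannot conclude that the integral over $x\in E_t$ is bounded by $\int_E \mathrm{tr}(\nabla^2 g)$. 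Passing to a supremum over directions does not help, because the supremum and the integral in $x$ do not commute. This anisotropic integral is the crux of the theorem, and the proposal asserts rather than establishes that it is controlled by the "pure-boundary quantity". Two smaller issues: the bound $\int_E \mathrm{tr}(\nabla^2 g)\leq D\,\mathcal{H}^{d-1}(\partial E)$ needs $\|\nabla g\|\leq D$, which requires first subtracting a common affine function from $f$ and $g$ (the hypothesis only bounds the \emph{diameter} of the gradient image); and $\mathcal{H}^d(E\setminus E_t)\leq C\,t\,\mathcal{H}^{d-1}(\partial E)$ at a fixed scale $t$ is not automatic for a general open set with rectifiable boundary, since Federer's theorem only identifies the Minkowski content in the limit $t\to 0$.

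A route that avoids the obstruction altogether --- and which is, in essence, how the cited reference proceeds --- is to integrate by parts instead of using second differences. For smooth convex $f,g$ with recentered gradients, Green's formula gives
$\int_E\|\nabla f-\nabla g\|^2=-\int_E(f-g)\,\Delta(f-g)+\int_{\partial E}(f-g)\,\partial_n(f-g)
\leq \|f-g\|_{\infty,E}\bigl(\int_E\Delta f+\int_E\Delta g\bigr)+C\,D\,\|f-g\|_{\infty,E}\,\mathcal{H}^{d-1}(\partial E)$,
and the nonnegativity of $\Delta f$ and $\Delta g$ (convexity) turns each volume integral into a boundary flux $\int_{\partial E}\partial_n g\leq C\,D\,\mathcal{H}^{d-1}(\partial E)$, yielding $\int_E\|\nabla f-\nabla g\|^2\leq C\,D\,\|f-g\|_{\infty,E}\,\mathcal{H}^{d-1}(\partial E)$. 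Cauchy--Schwarz against $\mathcal{H}^d(E)$ and the inequality $\sqrt{ab}\leq\frac12(a+b)$ then produce exactly the $\bigl(\mathcal{H}^d(E)+D\,\mathcal{H}^{d-1}(\partial E)\bigr)\|f-g\|_{\infty,E}^{1/2}$ shape of the stated bound, with the residual $\|f-g\|_{\infty,E}^{1/2}\,\mathcal{H}^{d-1}(\partial E)$ term coming from the boundary integral. If you wish to salvage the second-difference strategy, you need a genuinely new idea for the anisotropic integral above; as written, the argument does not go through.
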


\begin{corollary}\label{coro:gradient}
Let $E$ be a set of $\bbR^d$ with rectifiable boundary and $\varepsilon >0$. For any distance-like functions $\delta$ and $\delta'$ such that $\| \delta - \delta' \|_{\infty,E}\leq \varepsilon$, one has
$$
\| \mathbf{n}_{\delta'} - \mathbf{n}_{\delta} \|_{1,E} \leq C_3\  [\mathcal{H}^d(E) + (2\diam(E) + 4R+ \sqrt{2R \varepsilon})\ \mathcal{H}^{d-1}(\partial E)]\ \sqrt{2R \varepsilon},
$$
where $R=\max(\|\delta\|_{\infty,E}, \|\delta'\|_{\infty,E})$ and
$C_3$ is a constant depending only on $d$.
\end{corollary}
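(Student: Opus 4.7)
The plan is to reduce the statement to Theorem \ref{thm:3.5} by exhibiting convex potentials whose gradients differ from $\mathbf{n}_\delta$ and $\mathbf{n}_{\delta'}$ by the harmless term $x$. Since $\delta^2(\cdot)-\|\cdot\|^2$ is concave by the definition of distance-likeness, the function
\[
f(x) := \|x\|^2 - \delta^2(x)
\]
is convex on $\bbR^d$, and likewise $g(x):=\|x\|^2 - \delta'(x)^2$ is convex. A direct differentiation where the gradients exist gives $\mathbf{n}_\delta(x)=\frac{1}{2}\nabla\delta^2(x) = x - \frac{1}{2}\nabla f(x)$, and analogously for $\delta'$, so that
\[
\mathbf{n}_{\delta'}(x)-\mathbf{n}_\delta(x) = \tfrac{1}{2}\bigl(\nabla f(x) - \nabla g(x)\bigr).
\]
Hence the $\LL^1$-norm we want to bound equals $\tfrac{1}{2}\|\nabla f - \nabla g\|_{1,E}$, and we only need to control the two input quantities of Theorem \ref{thm:3.5}: the sup-norm distance $\|f-g\|_{\infty,E}$ and the diameter $D$ of $\nabla f(E)\cup \nabla g(E)$.

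For the sup-norm, I would factor $f-g = \delta'^2-\delta^2 = (\delta'-\delta)(\delta'+\delta)$. Using $\|\delta-\delta'\|_{\infty,E}\le\varepsilon$ and $\max(\|\delta\|_{\infty,E},\|\delta'\|_{\infty,E})=R$, this gives $\|f-g\|_{\infty,E}\le 2R\varepsilon$, so that $\|f-g\|_{\infty,E}^{1/2}\le \sqrt{2R\varepsilon}$, which matches exactly the factor appearing in the statement. For the gradient diameter, recall that distance-likeness implies that $\delta$ is $1$-Lipschitz, so $\|\mathbf{n}_\delta(x)\| = \delta(x)\|\nabla\delta(x)\|\le \delta(x)\le R$ on $E$, and similarly for $\delta'$. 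For any $x,y\in E$ where both gradients exist,
\[
\|\nabla f(x)-\nabla g(y)\| \le 2\|x-y\| + 2\|\mathbf{n}_\delta(x)\| + 2\|\mathbf{n}_{\delta'}(y)\| \le 2\diam(E) + 4R,
\]
so $D\le 2\diam(E)+4R$.

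Plugging these two estimates into Theorem \ref{thm:3.5} yields
\[
\|\nabla f-\nabla g\|_{1,E} \le C_2\Bigl(\mathcal{H}^d(E) + \bigl(2\diam(E)+4R+\sqrt{2R\varepsilon}\bigr)\mathcal{H}^{d-1}(\partial E)\Bigr)\sqrt{2R\varepsilon},
\]
and dividing by $2$ gives the desired inequality with $C_3 := C_2/2$, which depends only on $d$. The only technical point to be a little careful about is that $\nabla f$ and $\nabla g$ exist only almost everywhere, but this is precisely the setting of Theorem \ref{thm:3.5}, and the set $\Omega_f\cap\Omega_g$ of points where both gradients exist has full Lebesgue measure in $E$ by Rademacher (the functions being locally Lipschitz). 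No step looks like it poses a real obstacle; the only thing that required a small observation was to recognize the distance-like condition as the exact convexity hypothesis needed to feed Theorem \ref{thm:3.5} with $f,g$ after shifting by $\|\cdot\|^2$.
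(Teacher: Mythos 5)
Your proof is correct and follows essentially the same route as the paper: the authors likewise introduce the convex potentials $\psi_{\delta}(x)=\|x\|^2-\delta^2(x)$ and $\psi_{\delta'}(x)=\|x\|^2-\delta'^2(x)$, establish the same two bounds $\|\psi_{\delta}-\psi_{\delta'}\|_{\infty,E}\leq 2R\varepsilon$ and $\diam(\nabla\psi_{\delta}(E)\cup\nabla\psi_{\delta'}(E))\leq 2\diam(E)+4R$ (their Lemma \ref{lemma:gradient}, using the $1$-Lipschitz property of distance-like functions exactly as you do), and then apply Theorem \ref{thm:3.5}.
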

We introduce $\psi_{\delta}(x)=\|x\|^2 - \delta^2(x)$ and $\psi_{\delta'}(x)=\|x\|^2 - \delta'^2(x)$. For almost every $x$, one has 
$$
\mathbf{n}_{\delta'}(x) - \mathbf{n}_{\delta}(x) = \frac{1}{2} (\nabla \delta'^2(x) -\nabla \delta^2(x) )
= \frac{1}{2} (\nabla \psi_{\delta}(x) - \nabla \psi_{\delta'}(x)).
$$

Using the convexity of $\psi_{\delta}$ and $\psi_{\delta'}$ and Theorem \ref{thm:3.5}, Corollary \ref{coro:gradient} follows from the following lemma.
\begin{lemma}\label{lemma:gradient}Under the assumptions of Corollary \ref{coro:gradient}, one has
\begin{itemize}
\item[(i)] $\forall x \in E, \; \vert \psi_{\delta}(x) - \psi_{\delta'}(x) \vert \leq 2R \varepsilon$
\item[(ii)] $\diam(\nabla \psi_{\delta}(E) \cup \nabla\psi_{\delta'}(E)) \leq 2\diam(E) + 4R.$
\end{itemize}
\end{lemma}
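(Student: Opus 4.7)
The plan is to handle the two parts separately, since each is essentially a one-step bound once the right identity is used.

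For part (i), I would factor the difference of squares:
$$|\psi_\delta(x) - \psi_{\delta'}(x)| = |\delta^2(x) - \delta'^2(x)| = |\delta(x) - \delta'(x)|\cdot|\delta(x) + \delta'(x)|.$$
The first factor is bounded by $\varepsilon$ by hypothesis, while on $E$ both $\delta(x)$ and $\delta'(x)$ are bounded by $R$, so the second factor is at most $2R$. Multiplying gives $2R\varepsilon$. This part is routine.

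For part (ii), the central observation is that for any distance-like function $\delta$, one has the pointwise bound $\|\mathbf{n}_\delta(x)\| \leq \delta(x)$. This follows from writing $\delta^2$ as the Moreau envelope of a non-negative weight function, so that $\mathbf{n}_\delta(x) = x - y^*(x)$ for some $y^*(x)$ realizing the minimum in $\delta^2(x) = \|x-y^*(x)\|^2 + w(y^*(x))$, and then $\|\mathbf{n}_\delta(x)\|^2 = \delta^2(x) - w(y^*(x)) \leq \delta^2(x)$. (Alternatively, this can be invoked directly from the distance-like framework of Chazal--Cohen-Steiner--M\'erigot.)

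Given this, I compute $\nabla \psi_\delta(x) = 2x - \nabla \delta^2(x) = 2(x - \mathbf{n}_\delta(x))$, and similarly for $\delta'$. For any two points of $\nabla\psi_\delta(E) \cup \nabla\psi_{\delta'}(E)$, say $u = 2(x - \mathbf{n}_\delta(x))$ and $v = 2(y - \mathbf{n}_{\delta'}(y))$ with $x, y \in E$, the triangle inequality gives
$$\|u - v\| \leq 2\|x-y\| + 2\|\mathbf{n}_\delta(x)\| + 2\|\mathbf{n}_{\delta'}(y)\| \leq 2\diam(E) + 4R,$$
and the same estimate holds for any other pairing of $\delta$'s and $\delta'$'s. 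Taking the supremum yields the claimed bound.

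The only non-trivial ingredient is the inequality $\|\mathbf{n}_\delta\| \leq \delta$, which is where the distance-like hypothesis is used; everything else is triangle inequality and factoring. I expect no real obstacle here, provided that bound is either cited or justified in one line.
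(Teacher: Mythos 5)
Your proof is correct and follows essentially the same route as the paper: the same difference-of-squares factorization for (i), and the same triangle-inequality decomposition for (ii). The only (minor) divergence is in justifying $\|\mathbf{n}_\delta(x)\|\le R$ on $E$: the paper writes $\mathbf{n}_\delta = \delta\,\nabla\delta$ and invokes the $1$-Lipschitz property of distance-like functions ($\|\nabla\delta\|\le 1$), whereas you derive $\|\mathbf{n}_\delta\|\le\delta$ from the power-distance representation with non-negative weights; both are valid and standard consequences of the distance-like framework.
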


\begin{proof}
For every point $x$ in $E$, one has
\begin{eqnarray*}
| \psi_{\delta}(x) - \psi_{\delta'}(x) | &=& | \delta^2(x) - \delta'^2(x) | \\
&=& | \delta(x) - \delta'(x) | \times | \delta(x) + \delta'(x) | \\
&\leq& 2 R \varepsilon.
\end{eqnarray*}
Let now $X$ and $X'$ be points in $\nabla \psi_{\delta}(E)$ and
$\nabla \psi_{\delta'}(E)$ respectively. There exist $x,x'$ in $E$
such that $X= 2 x - 2\delta(x)\nabla \delta (x)$ and $X'= 2 x' -
2\delta'(x') \nabla \delta' (x')$. Then
\begin{eqnarray*}
\| X-X' \vert \vert & \leq & \| 2x - 2x' \| + \|  2\delta'(x')\nabla \delta' (x') - 2\delta(x)\nabla \delta (x)\| \\
&\leq& 2\ \diam(E) + 2R \| \nabla \delta \|_{\infty,E} + 2R \| \nabla \delta' \|_{\infty,E} \\
&\leq& 2\ \diam(E) + 4R.
\end{eqnarray*}
In the last inequality, we used the fact that a distance-like function
is $1$-Lipschitz (see Proposition~3.1 in \cite{dist-measure}).  The
result still holds if $X,X'$ both belong to $\nabla \psi_{\delta}(E)$
or to $\nabla \psi_{\delta'}(E)$.
\end{proof}

\subsection{Proof of Theorem \ref{thm:main-stability}}
This proof follows the proof of the stability theorem in
\cite{vcm}. The idea is to compare the two integrals on the common set
$E = K^{R-\varepsilon}$ with $\varepsilon = \norme{\delta-d_{K}}_{
  \infty,E}$ and to show that remaining parts are negligible. By
Proposition 4.2 of \cite{boundary}, the set $\partial E$ is
rectifiable. For every $R \in \bbR^+$, we denote by
$\mathcal{N}(\partial K,R)$ the covering number of $\partial K$ with a
radius parameter $R$, namely the minimal number of balls of radius $R$
needed to cover $\partial K$.  Let us first suppose that $\varepsilon
< \frac{R}{2}$. We have
\begin{align*}
{\mathcal{V}}_{d_K,R}(\chi) &= 
\int_{E} \mathbf{n}_{d_K}(x) \otimes \mathbf{n}_{d_K}(x)\chi( x -\mathbf{n}_{d_K}(x) ) \d x \\
&+ \int_{K^R \backslash E} \mathbf{n}_{d_K}(x) \otimes \mathbf{n}_{d_K}(x) \chi(x - \mathbf{n}_{d_K}(x) ) \d x.
\end{align*}
 For every $x \in K^R$, one has $\| \mathbf{n}_{d_K}(x) \| = \| x - p_K(x)\| \leq R$. The fact that $\|\chi\|_\infty \leq \|\chi\|_\BL$ implies
\begin{equation}\label{Eq:bound1}
\left\| {\mathcal{V}}_{d_K,R}(\chi) - 
\int_{E} \mathbf{n}_{d_K}(x) \otimes \mathbf{n}_{d_K}(x)\chi(x-\mathbf{n}_{d_K}(x)) \d x \right\|_{op} 
\leq R^2\cdot \|\chi\|_\BL \cdot \mathcal{H}^{d}(K^R \backslash K^{R-\varepsilon}). 
\end{equation} 
We proceed similarly for the $\delta$-VCM.
By definition, we have $\|\mathbf{n}_{\delta}(x)\| = \|\delta(x) \nabla \delta(x)\| \leq
|\delta(x)| \leq R$, since $\delta$ is 1-Lipschitz. Moreover, one has $K^{ R - \varepsilon} \subset \delta^R \subset K^{ R + \varepsilon}$. Hence, the volume of $\delta^R \setminus K^{ R - \varepsilon}$ is less than the volume of $ K^{ R + \varepsilon} \backslash K^{ R - \varepsilon}$ and
\begin{equation}\label{Eq:bound2}
\left\| {\mathcal{V}}_{\delta,R}(\chi) - 
\int_{E} \mathbf{n}_{\delta}(x) \otimes \mathbf{n}_{\delta}(x)\chi(x-\mathbf{n}_{\delta}(x)) \d x \right\|_{op} 
\leq  R^2\cdot \|\chi\|_\BL \cdot \mathcal{H}^{d}(K^{R+\varepsilon} \backslash K^{R-\varepsilon}). 
\end{equation}
We now bound the volume of $K^{R+\varepsilon} \backslash
K^{R-\varepsilon}$ by using Proposition 4.2 of \cite{boundary}. We set
$\omega_{n}(t)$ to be the volume of the $n$-dimensional ball of radius
$t$.
\begin{align}\label{Eq:bound-volume}
\mathcal{H}^{d}(K^{R+\varepsilon} \backslash K^{R-\varepsilon})
& = \int_{R-\varepsilon}^{R+\varepsilon} \mathcal{H}^{d-1}(\partial K^t)\ \d t \notag\\
& \leq \int_{R-\varepsilon}^{R+\varepsilon}\mathcal{N}(\partial K,t) \omega_{d-1}(2t)\ \d t \notag \\
& \leq \mathcal{N}(\partial K,R-\varepsilon) \omega_{d-1}(2(R+\varepsilon)) 2\varepsilon \notag \\
& \leq  2\ \mathcal{N}(\partial K,\frac{R}{2}) \omega_{d-1}(3R)\ \varepsilon.
\end{align}

We now need to bound the operator norm of the integral $ \int_E[A_{d_K}(x)-A_{\delta}(x)]dx$, where
$$A_{d_K}(x) := \mathbf{n}_{d_K}(x) \otimes \mathbf{n}_{d_K}(x) \chi(x-\mathbf{n}_{d_K}(x) ),$$
and $A_{\delta}$ is defined similarly. The difference of these two terms is decomposed as:
\begin{align}
A_{d_K}(x) - A_{\delta}(x) &= \chi(x-\mathbf{n}_{d_K}(x) )(\mathbf{n}_{d_K}(x) \otimes \mathbf{n}_{d_K}(x)-\mathbf{n}_{\delta}(x) \otimes \mathbf{n}_{\delta}(x)) \nonumber \notag\\
&+\ [\chi(x-\mathbf{n}_{d_K}(x) )-\chi(x-\mathbf{n}_{\delta}(x) )]\mathbf{n}_{\delta}(x) \otimes \mathbf{n}_{\delta}(x) \notag
\end{align}
From the facts that
\begin{displaymath}
\begin{aligned}
&\quad \|\mathbf{n}_{d_K}(x) \otimes \mathbf{n}_{d_K}(x)-\mathbf{n}_{\delta}(x) \otimes \mathbf{n}_{\delta}(x)\| \\
&\leq \| \mathbf{n}_{d_K}(x) \otimes ( \mathbf{n}_{d_K}(x) - \mathbf{n}_{\delta}(x)) \| \nonumber 
+  \|( \mathbf{n}_{\delta}(x) - \mathbf{n}_{d_K}(x)) \otimes \mathbf{n}_{\delta}(x) \|\notag\\
&\leq 2 R\ \|\mathbf{n}_{d_K}(x) - \mathbf{n}_{\delta}(x) \|,\\
\end{aligned}
\end{displaymath}
and also
\begin{displaymath}
\| [\chi(x-\mathbf{n}_{d_K}(x) )-\chi(x - \mathbf{n}_{\delta}(x))]\mathbf{n}_{\delta}(x) \otimes \mathbf{n}_{\delta}(x) \|
\leq \Lip(\chi)\ \| \mathbf{n}_{d_K}(x) - \mathbf{n}_{\delta}(x)\|\ R^2
\end{displaymath}
one has
\begin{displaymath}
\|  A_{d_K}(x)-A_{\delta}(x) \|_{\op} \leq \norme{\chi}_\BL (R^2 + 2 R)\| \mathbf{n}_{d_K}(x)-\mathbf{n}_{\delta}(x)\|,
\end{displaymath}
%
%
%
which by integration leads to
\begin{displaymath}
\left\| \int_E [A_{d_K}(x)-A_{\delta}(x)]dx \right\|_{\op} \leq 
\norme{\chi}_\BL (R^2 + 2 R)\| \mathbf{n}_{d_K}-\mathbf{n}_{\delta}\|_{1,E}.
\end{displaymath}
Corollary \ref{coro:gradient} implies that the norm $\|\mathbf{n}_{d_K} - \mathbf{n}_{\delta} \|_{1,E}$ is bounded by
\begin{displaymath}
C_2 \times [\mathcal{H}^d(E) + (2\diam(E) + 4R+ \sqrt{2R \varepsilon})  \times \mathcal{H}^{d-1}(\partial E)] \times \sqrt{2R \varepsilon}.
\end{displaymath}
Since $\diam(E) \leq \diam(K) +2R$ and $\varepsilon \leq \frac{R}{2}$, the two last equations imply
\begin{align}\label{Eq:bound3}
&\quad \left\| \int_E [A_{d_K}(x)-A_{\delta}(x)]dx \right\|_{\op}  \\
&\leq C_2 \norme{\chi}_\BL (R^2 + 2 R)  [\mathcal{H}^d(K^R) + (2\diam(K) + 9R)   \mathcal{H}^{d-1}(\partial K^{R - \varepsilon})]\ \sqrt{2R}\  \sqrt{\varepsilon}.\notag
\end{align}
Again by Proposition 4.2 of \cite{boundary}, we bound $\mathcal{H}^{d-1}(\partial K^{R-\varepsilon})$ by $\mathcal{N}(\partial K,\frac{R}{2}) \cdot \omega_{d-1}(3R)$.  
%
%
%
Furthermore, since the diameter of $K^R$ is less than $\diam(K)+2R$, its volume $\mathcal{H}^d(K^R)$ is bounded by a constant involving $\diam(K)$, $R$ and $d$. Remark that $\mathcal{N}(\partial K, \frac{R}{2})$ is also bounded by a constant depending on the same quantities. 
%
Hence, the constants involved in Equations (\ref{Eq:bound3}) and (\ref{Eq:bound-volume}) also depend solely on $\diam(K)$, $R$ and $d$. 
We conclude the proof of Theorem \ref{thm:main-stability} by combining (\ref{Eq:bound3}) with (\ref{Eq:bound1}) and (\ref{Eq:bound2}). 

The case where $\varepsilon \geq \frac{R}{2}$ is trivial (and not interesting in practice) since both $\mathcal{V}_{\delta,R}(\chi)$ and $\mathcal{V}_{d_K,R}(\chi)$ can be upper bounded by a quantity depending on $\diam(K)$, $R$ and $d$, which can be put into the constant $C_1$ of the Theorem.

\section{VCM using the distance to a measure}\label{section:distance-to-measure}
The {\em distance to a measure} is a distance-like function that is
known to be resilient to outliers. It is therefore natural to consider
the $\delta$-VCM in the particular case where $\delta$ is a distance
to a measure.  First we recall the definition and some stability
results of the distance to a measure. Then we study the $\delta$-VCM
when $\delta$ is a distance to a measure.

\subsection{Distance to a measure}\label{subsection:distance-to-a-measure}
The distance to a measure has been introduced in \cite{dist-measure}
and is defined for any probability measure $\mu$ on $\bbR^d$. We
denote in the following $\mathrm{supp}(\mu)$ the support of $\mu$.
\begin{figure}
\begin{center}
\includegraphics[width=5.5cm]{./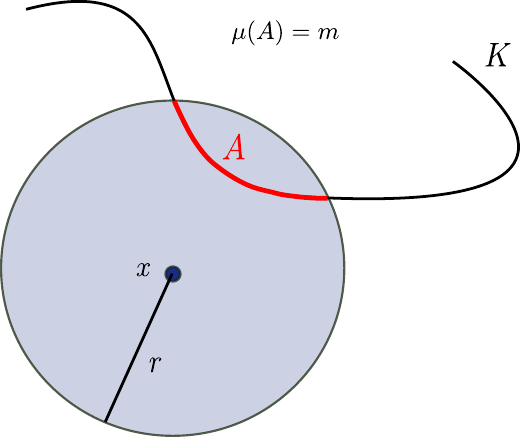}
\end{center}
\caption{Distance to a measure. We suppose here that the measure $\mu$
  is supported on a compact set $K$.  The quantity $\delta_{\mu,m}(x)$
  is the minimal radius $r$ such that $\bbB(x,r)\cap K$ has a mass
  $m$.  } \label{figure:dist-to-measure}
\end{figure}
\begin{definition}
  Let $\mu$ be a probability measure on $\bbR^d$ and $m_0$ a
  regularization parameter in $(0,1)$. The distance function to the
  measure $\mu$ is defined for every point $x$ in $\bbR^d$ by
\begin{equation}
\d_{\mu,m_0} (x) :=\left(\frac{1}{m_0}\int_0^{m_0} \delta^2_{\mu,m}(x) \d m\right)^{1/2},
\label{eq:dm}
\end{equation}
where $\delta_{\mu,m} (x) = \inf \{ r\geq 0, \; \mu(\bbB(x,r)) \geq m \}$.
\end{definition}
The formula defining the function $\delta_{\mu,m}$ mimics a similar
formula for the distance function to a set $K$: $\d_K(x) = \inf \{
r\geq 0, \; K \cap \bbB(x,r) \neq \emptyset \}$, as shown on Figure
\ref{figure:dist-to-measure}.  However it turns out that the function
$\delta_{\mu,m}$ is not distance-like \cite{dist-measure}, while the
regularization defined by Eq.~\eqref{eq:dm} is distance-like.
Furthermore, the distance to the measure has been shown to be
resilient to outliers, and more precisely, Theorem~3.5 of
\cite{dist-measure} states that
$$
\norme{d_{\mu,m_0}-d_{\mu ',m_0}}_\infty \leq \frac{1}{\sqrt{m_0}} \Wass_2( \mu , \mu '),
$$
where $\Wass_2$ is the $2$-Wasserstein distance between measures.  For
more details on Wasserstein distances, which are also known under the
name Earthmover distances in image retrieval and computer science, one
can refer to \cite{Villani}. To give an intuition, when $\mu_K$ and
$\mu_{K'}$ are uniform probability measures on two point clouds $K$
and $K'$ with the same number of points, the distance
$\Wass_2(\mu_K,\mu_{K'})$ is the square root of the cost of a least
square assignment between $K$ and $K'$.

\begin{figure}
\begin{center}
\begin{tabular}{cccc}
\includegraphics[width=3.5cm]{./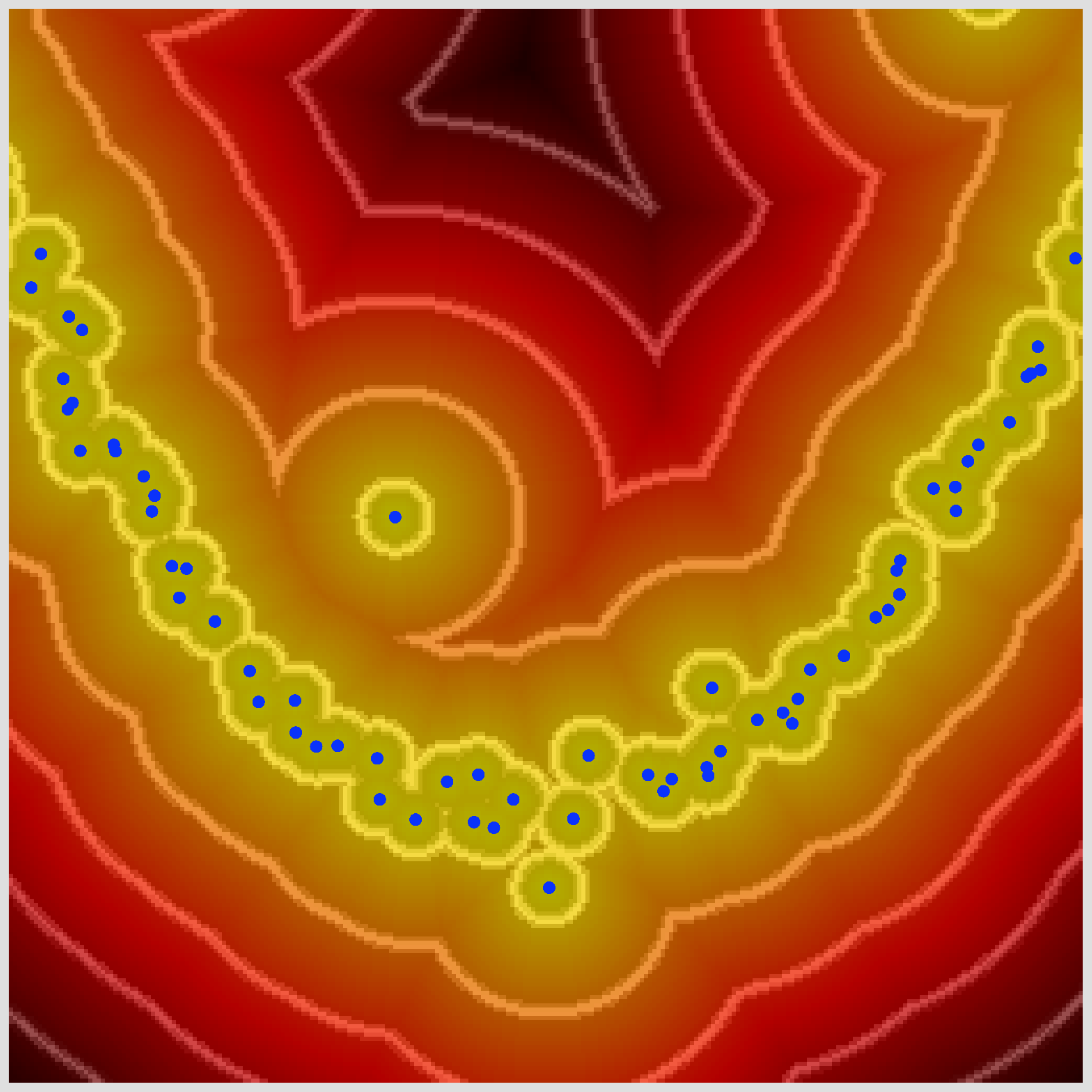}&
\includegraphics[width=3.5cm]{./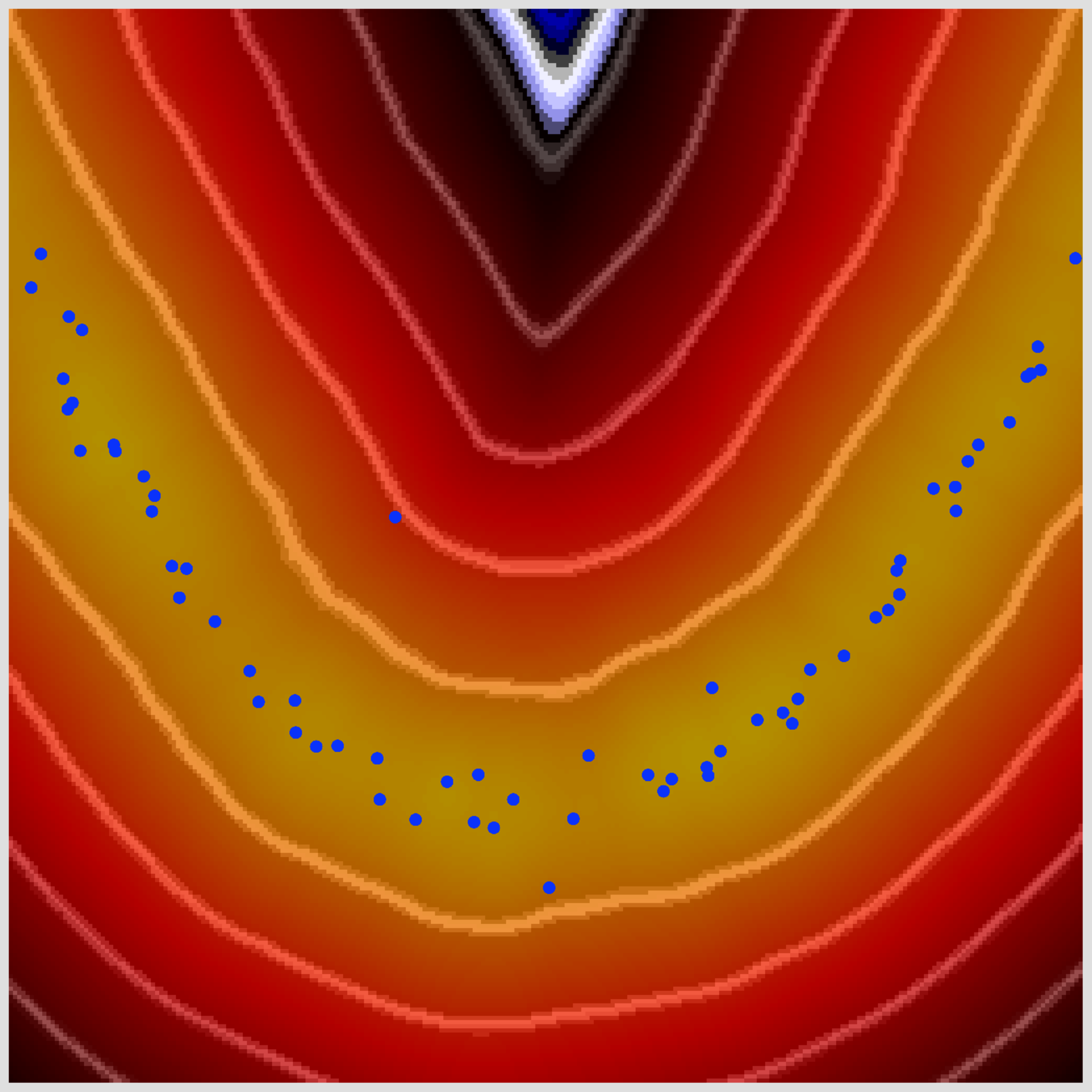}&
\includegraphics[width=3.5cm]{./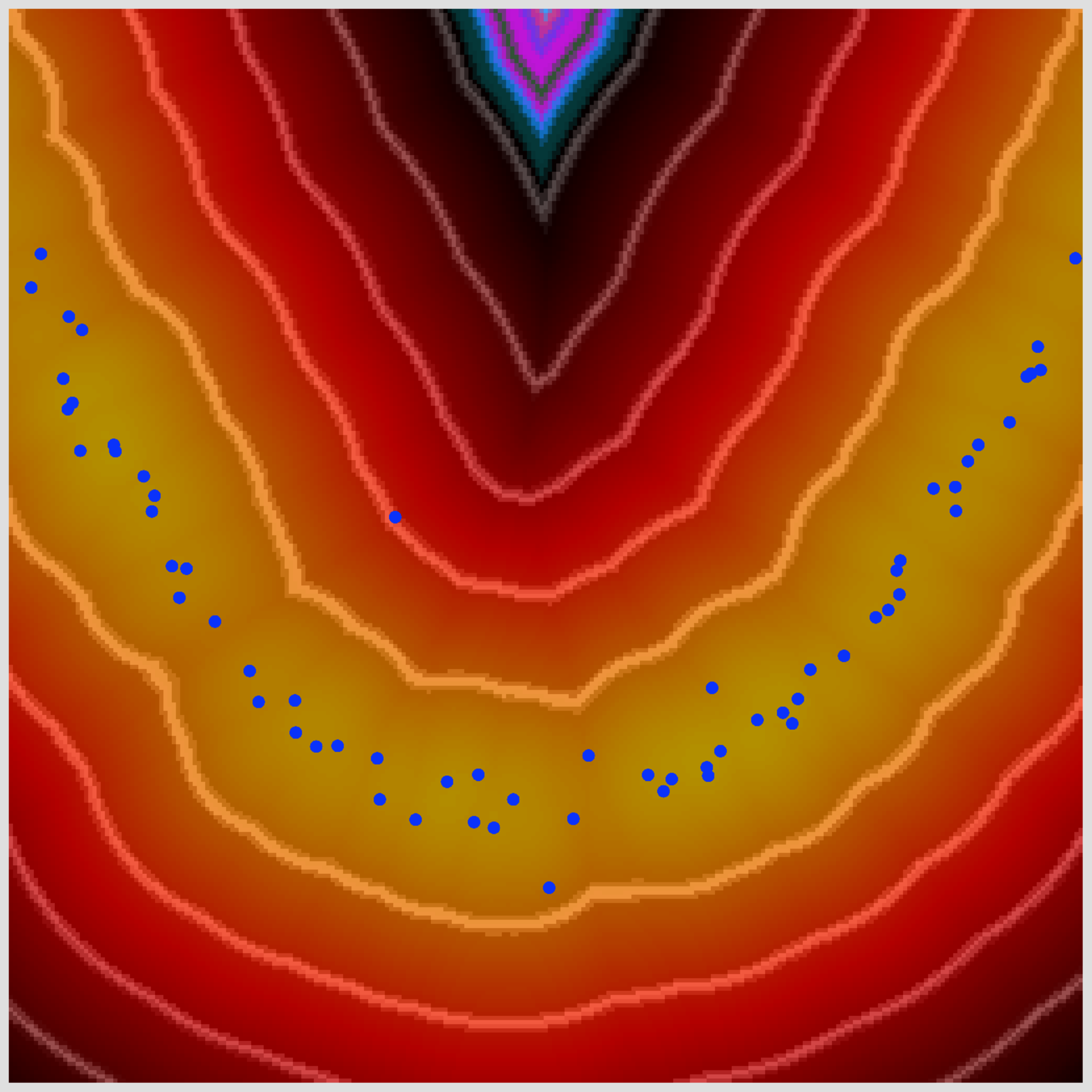}&
\includegraphics[width=3.5cm]{./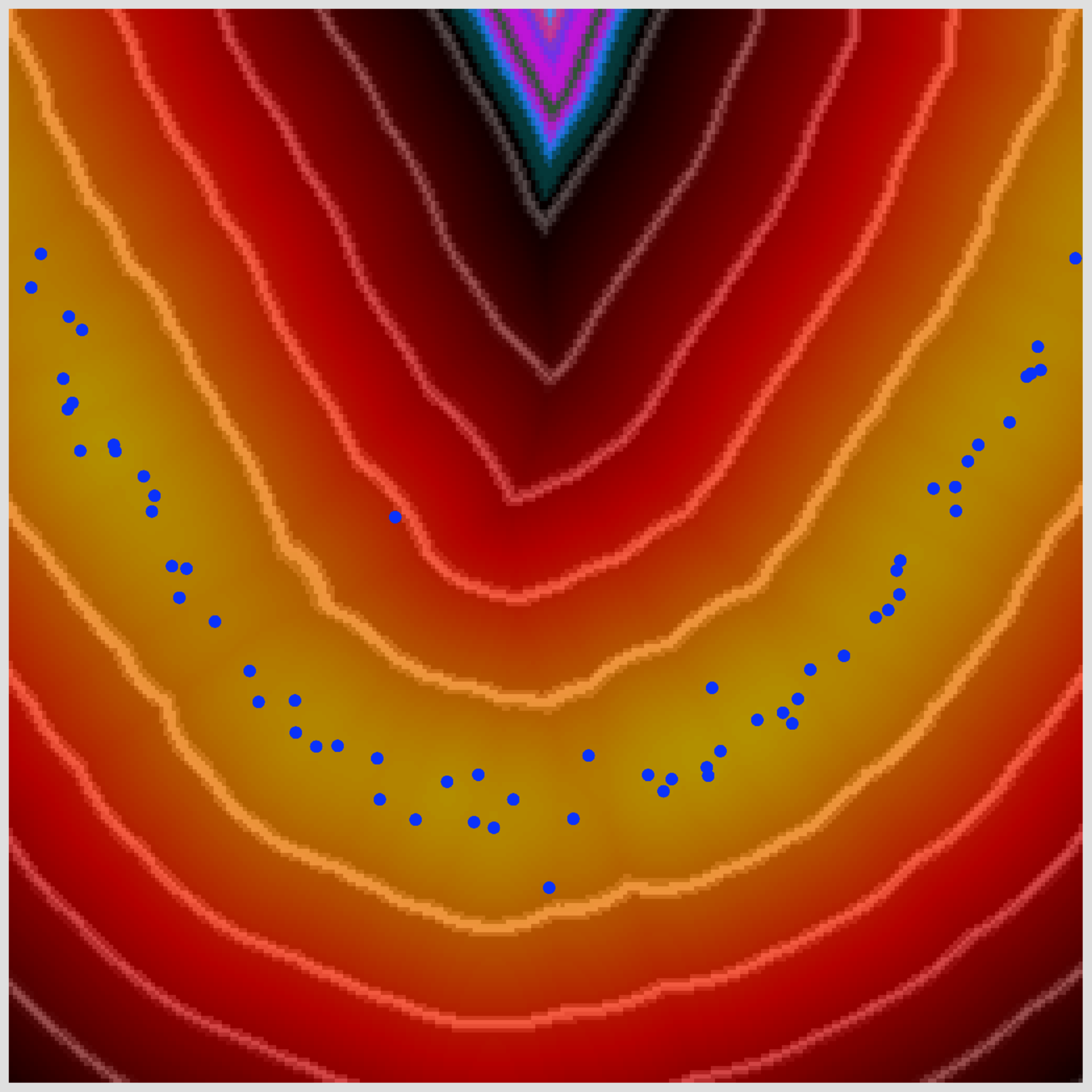}\\
\end{tabular}
\end{center}
\caption{Level sets of distance functions of a noisy point cloud. From left to right: usual distance function to the point cloud; $k$-distance; witnessed-$k$-distance; median-witnessed-$k$-distance}\label{figure:dists}
\vspace{-0.4cm}
\end{figure}

\paragraph{Point cloud case} Let $P \subset \bbR^d$ be a finite point
set with $n$ points, $k\in (0,n)$ a real number and $m_0=
k/n$. Following \cite{witnessed}, we call \textit{$k$-distance to $P$}
and denote $d_{P,k}$ the distance to the measure for the uniform
measure on $P$ for the parameter $m_0$. In the particular case where
$k$ is an integer, a simple calculation \cite{dist-measure} shows that
for every point $x$ in $\bbR^d$,
$$
d^2_{P,k}(x) = \frac{1}{k}\sum_{p_i \in \NN_{P,k}(x)} \norme{x-p_i}^2,  
$$
where $\NN_{P,k}(x)$ are the $k$ nearest neighbors of $x$ in
$P$. Furthermore, the $k$-distance $d_{P,k}$ is a power distance
\cite[Proposition~3.1]{witnessed}.  More precisely, if we denote
$\mathrm{Bary}_{P,k}$ the set of isobarycenters of $k$ distinct points
in $P$, one has
$$
\forall  x\in \bbR^d\quad d^2_{P,k}(x) = \min_{b \in \mathrm{Bary}_{P,k}} \left(\| x - b \|^2 + \omega_{b} \right),
$$
where the weight $\omega_{b} = \frac{1}{k} \sum_{p_i \in
  \NN_{P,k}(b)} \|b - p_i\|^2$. Figure~\ref{figure:dists} illustrates the stability of the
$k$-distance of a point cloud with respect to outliers. Remark that
the level sets of the $k$-distance (second picture), are much smoother and faithful
than the level sets of the distance function (first picture).

\subsection{Stability}\label{subsection:stability-distance-to-a-measure}
We state here a stability result obtained by combining our result with a stability result for  the distance to a measure established in \cite{witnessed}. Although this result can hold for measures supported in $\bbR^d$, we choose to state it for uniform measures on surfaces of $\bbR^3$, so as to give a better intuition.

Let $S$ be a two dimensional surface and $\mu_S$ denote the uniform
probability measure on $S$. An application of G\"unther-Bishop theorem
which can be found explicitely on page 749 of \cite{dist-measure} is
that there exists a constant $\alpha_S > 0$ such that
\begin{equation}
 \forall p\in S,~\forall r
\leq \diam(S),\qquad\mu_S(\bbB(p,r)) \geq \alpha_S r^2.
\end{equation}
\begin{proposition}\label{prop:stability-distance-to-a-measure}
  Let $P$ be a point cloud of cardinal $n$, $S$ a surface of $\bbR^3$
  and let $\mu_P$ and $\mu_S$ be the uniform probability measures on
  $P$ and $S$ respectively. If we define $k = W_2(\mu_S,\mu_P)
  \sqrt{\alpha_{S}} n$, then we have
\begin{equation*}
\| \mathcal{V}_{d_{P,k},R}(\chi) - \mathcal{V}_{d_S,R}(\chi)  \|_{\op} \leq C_4\ \alpha_S^{-1/8}\ \norme{\chi}_\BL\ W_2(\mu_S,\mu_P)^{\frac{1}{4}},
\end{equation*}
where the constant $C_4$  depends on $\diam(S)$ and $R$. 
\end{proposition}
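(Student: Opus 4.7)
The plan is to apply the main stability theorem (Theorem \ref{thm:main-stability}) with compact set $K=S$ and distance-like function $\delta = d_{P,k}$; the whole task then reduces to bounding $\norme{d_{P,k}-d_S}_\infty$ in terms of $W_2(\mu_S,\mu_P)$ and $\alpha_S$, after which the square-root in Theorem \ref{thm:main-stability} will produce the exponent $1/4$ appearing in the conclusion.

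I would split this uniform error via the triangle inequality through the intermediate function $d_{\mu_S,m_0}$, where $m_0 := k/n = W_2(\mu_S,\mu_P)\sqrt{\alpha_S}$. Since $d_{P,k} = d_{\mu_P,m_0}$ by definition of the $k$-distance, the Wasserstein stability of the distance to a measure (Theorem~3.5 of \cite{dist-measure}, recalled in Section~\ref{subsection:distance-to-a-measure}) immediately gives
\[
\norme{d_{\mu_P,m_0} - d_{\mu_S,m_0}}_\infty \;\le\; \frac{W_2(\mu_P,\mu_S)}{\sqrt{m_0}} \;=\; \alpha_S^{-1/4}\, W_2(\mu_P,\mu_S)^{1/2}.
\]

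For the second part of the triangle inequality I would exploit the G\"unther-Bishop type lower bound $\mu_S(\bbB(p,r)) \ge \alpha_S r^2$ available on $S$. For any $x\in \bbR^3$, projecting $x$ onto its closest point $p\in S$ and using $\bbB(p,\sqrt{m/\alpha_S}) \subset \bbB(x,d_S(x)+\sqrt{m/\alpha_S})$ shows that $\delta_{\mu_S,m}(x) \le d_S(x) + \sqrt{m/\alpha_S}$. Squaring and averaging in the definition of $d_{\mu_S,m_0}$, together with the trivial lower bound $d_{\mu_S,m_0}(x) \ge d_S(x)$ (since $\mu_S$ is supported on $S$), yields
\[
\norme{d_{\mu_S,m_0} - d_S}_\infty \;\le\; \sqrt{m_0/\alpha_S} \;=\; \alpha_S^{-1/4}\, W_2(\mu_P,\mu_S)^{1/2}.
\]
The choice $m_0 = W_2(\mu_S,\mu_P)\sqrt{\alpha_S}$ is precisely the one that balances these two error terms, so that summing the two inequalities gives $\norme{d_{P,k}-d_S}_\infty \le 2\, \alpha_S^{-1/4}\, W_2(\mu_P,\mu_S)^{1/2}$. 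Plugging this estimate into Theorem \ref{thm:main-stability} (applied with $K=S$) and simplifying the exponents yields the announced inequality, with $C_4 = \sqrt{2}\, C_1$ depending only on $R$ and $\diam(S)$ (the dimension being fixed to $3$).

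The only non-mechanical step is the bound $\norme{d_{\mu_S,m_0}-d_S}_\infty \le \sqrt{m_0/\alpha_S}$: this is where the geometric content of $S$ enters, through the uniform lower bound $\alpha_S$ on $\mu_S$-measures of small balls. Everything else is a direct composition of the two stability results already at hand (Wasserstein stability of $d_{\mu,m_0}$ and the main Theorem \ref{thm:main-stability} for the VCM), and the delicate point is simply to track the exponents in $\alpha_S$ and $W_2$ through the balancing choice of $m_0$.
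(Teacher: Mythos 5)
Your proof is correct and follows essentially the same route as the paper: the paper simply cites Theorem~1 of \cite{witnessed} for the combined bound $\|d_{P,k}-d_S\|_\infty \le m_0^{-1/2}W_2(\mu_S,\mu_P)+\alpha_S^{-1/2}m_0^{1/2}$, which is exactly the sum of the two terms you derive, then balances $m_0$ and applies Theorem~\ref{thm:main-stability} just as you do. The only difference is that you re-derive that cited inequality from scratch (correctly, via the triangle inequality through $d_{\mu_S,m_0}$ and the G\"unther--Bishop lower bound), which is a harmless elaboration rather than a different approach.
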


\begin{proof}
Let $k\in (0,n)$ and $m_0=k/n$. By Theorem 1 of \cite{witnessed} (Section 4.2), one has
$$
\| d_{P,k} -d_K \|_{\infty} \leq m_0^{-\frac{1}{2}}W_2(\mu_S,\mu_P)  + \alpha_{S}^{-\frac{1}{2}} m_0^{\frac{1}{2}}.
$$
The right hand-side is minimal when $m_0$ satisfies $ m_0^{-\frac{1}{2}}W_2(\mu_S,\mu_P)  = \alpha_{S}^{-\frac{1}{2}} m_0^{\frac{1}{2}} $, namely $m_0=W_2(\mu_S,\mu_P) \sqrt{\alpha_S}$. If $k = W_2(\mu_S,\mu_P) \sqrt{\alpha_{S}} n$, the bound in the previous equation becomes 
$W_2(\mu_S,\mu_P)^{\frac{1}{2}} \alpha_S^{-1/4}$. We conclude by Theorem \ref{thm:main-stability}.
\end{proof}
Remark this result can be easily extended to any dimension $d$ and also to the case where $S$ is a compact set with a dimension at most $l$ (\textit{i.e.} there exists a constant $\alpha_S$ such that $\mu_S(\bbB(p,r)) \geq \alpha_S r^l$).

\section{Tractable variants for the $d_{P,k}$-VCM}
We have seen in the previous section that the $d_{P,k}$-VCM is
resilient to outliers. Unfortunately, it is not practically
computable, because the computation of the $k$-distance is not. We
therefore study in this section variants of the $d_{P,k}$-VCM, by
using variants of the $k$-distance. In Section
\ref{subsection:witnessed}, we propose an efficient relaxation of the
$d_{P,k}$-VCM for which we have a stability result. In Section
\ref{subsection:other-distances}, we investigate another $\delta$-VCM
interesting for its resilience to outliers.

\subsection{Witnessed-$k$-distance and $d^{\mathrm{w}}_{P,k}$-VCM}\label{subsection:witnessed}
Let $P\subset \bbR^d$ be a point cloud, $k$ an integer. We have seen
in the previous section that the $k$-distance is a power distance on
the set $\mathrm{Bary}_{P,k}$ of $k$ points of $P$. Lemma
\ref{lemma:power-distance-vcm} then allows us in theory to calculate
the $d_{P,k}$-VCM. Unfortunately, it is not computable in
practice. The set $\mathrm{Bary}_{P,k}$ is indeed huge since its
cardinal is of the order of $\binom{n}{k}$, if $n$ is the cardinal of
$P$. To overcome this problem, we use the witnessed $k$-distance that
was introduced in \cite{witnessed}.

\paragraph{Relaxation with the witnessed $k$-distance}
A point is said to be a $k$ witness of $P$ if it is the barycenter of $k$ points $p_1,\hdots, p_k$ of $P$ such that  $p_2,\hdots,p_{k}$ are the $(k-1)$-nearest neighboors of $p_1$ in $P\setminus\{p_1\}$. We denote by $\mathrm{Bary}_{P,k}^\mathrm{w}$ this set of points.
The {\em witnessed $k$-distance} is then defined as the following power distance 
\begin{equation*}
d^{\mathrm{w}}_{P,k}(x) := \left(\min_{b \in \mathrm{Bary}_{P,k}^\mathrm{w}} \| x - b
  \|^2 + \omega_{b}\right)^{1/2},
\end{equation*}
where $\omega_{b} = \frac{1}{k} \sum_{p_i \in \NN_{P,k}(b)} \|b - p_i\|^2$. We then use Lemma \ref{lemma:power-distance-vcm} for the computation of its VCM. Note that the number of cells in the power diagram is bounded by the number of points in $P$, and can therefore be computed efficiently using e.g. CGAL \cite{cgal}. We also have a stability result.

\begin{proposition}\label{prop:stability-witness-distance}
Let $\mu_S$ be the uniform measure on a surface $S\subset \bbR^3$ and  $\mu_P$ denotes the uniform probability measure on a point cloud $P$. 
Let $k =  W_2(\mu_P,\mu_S) \sqrt{\alpha_{S}} n / 4$, where $n$ is the number of points of $P$. Then for any bounded Lipschitz function $\chi:\bbR^d\to \bbR^+$ 
\begin{equation*}
\| \mathcal{V}_{d^{\mathrm{w}}_{P,k},R}(\chi) - \mathcal{V}_{S,R}(\chi)  \|_{\op} 
\leq C_5\ \alpha_S^{-1/8}\  \norme{\chi}_\BL\ W_2(\mu_P,\mu_S)^{\frac{1}{4}},
\end{equation*}
where the constant $C_5$  depends on $\diam(S)$ and $R$. 
\end{proposition}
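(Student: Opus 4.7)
My plan is to mirror the proof of Proposition~\ref{prop:stability-distance-to-a-measure} almost verbatim, replacing the uniform bound on $\|d_{P,k}-d_S\|_\infty$ with the analogous bound for the witnessed $k$-distance. The bulk of the work therefore reduces to invoking the correct stability estimate for $d^{\mathrm{w}}_{P,k}$ from \cite{witnessed} and then applying Theorem~\ref{thm:main-stability}.

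First, I would quote (or derive) from Theorem~1 of \cite{witnessed} a bound of the form
\begin{equation*}
\|d^{\mathrm{w}}_{P,k} - d_S\|_\infty \;\leq\; a\,m_0^{-1/2}\,W_2(\mu_P,\mu_S) \;+\; b\,\alpha_S^{-1/2}\,m_0^{1/2},
\end{equation*}
where $m_0 = k/n$ and $a,b$ are absolute constants. This is the same functional form as the bound used in the proof of Proposition~\ref{prop:stability-distance-to-a-measure}, but with possibly different constants: the inclusion $\mathrm{Bary}^{\mathrm{w}}_{P,k} \subset \mathrm{Bary}_{P,k}$ immediately gives $d_{P,k} \leq d^{\mathrm{w}}_{P,k}$, while the reverse control $d^{\mathrm{w}}_{P,k} \leq c\,d_{P,k}$ with an absolute constant $c$ is precisely the key lemma established in \cite{witnessed}. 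The prescribed factor $1/4$ in the statement corresponds to the optimizer $m_0 = (a/b)\,W_2(\mu_P,\mu_S)\sqrt{\alpha_S}$ of the right-hand side, so this ratio of constants is what dictates the choice $k = W_2(\mu_P,\mu_S)\sqrt{\alpha_S}\,n/4$.

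Second, substituting this optimal $m_0$ collapses the two terms to the same order and yields
\begin{equation*}
\|d^{\mathrm{w}}_{P,k} - d_S\|_\infty \;\leq\; C\,\alpha_S^{-1/4}\,W_2(\mu_P,\mu_S)^{1/2}
\end{equation*}
for an absolute constant $C$. I then feed this into Theorem~\ref{thm:main-stability} applied with $K=S$ and $\delta = d^{\mathrm{w}}_{P,k}$: the $\|\delta - d_K\|_\infty^{1/2}$ factor on the right-hand side produces both the exponent $1/4$ on $W_2(\mu_P,\mu_S)$ and the exponent $1/8$ on $\alpha_S^{-1}$, while the constant $C_1(R,d,\diam(S))$ from Theorem~\ref{thm:main-stability} absorbs into $C_5$ depending only on $\diam(S)$ and $R$.

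The only real obstacle is verifying step one: one must confirm that \cite{witnessed} does indeed provide the additive stability estimate for $d^{\mathrm{w}}_{P,k}$ with the functional dependence on $m_0$ and $W_2$ stated above, rather than just the multiplicative comparison $d^{\mathrm{w}}_{P,k} \leq c\,d_{P,k}$. If only the multiplicative comparison is explicitly recorded there, one combines it with the additive bound for $d_{P,k}$ used in Proposition~\ref{prop:stability-distance-to-a-measure} via the triangle inequality $\|d^{\mathrm{w}}_{P,k} - d_S\|_\infty \leq \|d^{\mathrm{w}}_{P,k} - d_{P,k}\|_\infty + \|d_{P,k} - d_S\|_\infty$, which produces the same form with slightly enlarged constants. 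Everything after that is routine bookkeeping identical to the previous proposition.
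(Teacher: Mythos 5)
Your proposal matches the paper's proof, which is exactly the argument of Proposition~\ref{prop:stability-distance-to-a-measure} with the sup-norm stability bound for the witnessed $k$-distance substituted in and then fed into Theorem~\ref{thm:main-stability}. The only detail to fix is the reference: the additive estimate you need (with the constant that produces the factor $1/4$ in the choice of $k$) is Theorem~4 of \cite{witnessed}, not Theorem~1, so your fallback via the multiplicative comparison is unnecessary.
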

The proof of this proposition is similar to the one of Proposition \ref{prop:stability-distance-to-a-measure}, and is using Theorem 4 of \cite{witnessed}.

%
%

\subsection{Median-$k$-distance and $d^{\mathrm{m}}_{P,k}$-VCM}\label{subsection:other-distances}
We introduce here the \textit{median-$k$-distance} which is a
distance-like function derived from the witnessed-$k$-distance. We do
not have yet stability results for this function, but numerical
experiments in the next section show that the corresponding
$\delta$-VCM is very stable in practice (see also the visual intuition
given by Figure~\ref{figure:dist-to-measure}).
Let $P$ be a point cloud and
$k>1$ be an integer. A point is said to be a {\em median $k$-witness}
of $P$ if it is the geometric median of $k$ points $p_1,\hdots, p_k$
of $P$ such that $p_2,\hdots,p_{k}$ are the $(k-1)$-nearest neightbors
of $p_1$ in $P\setminus\{p_1\}$. We denote by
$\mathrm{Bary}_{P,k}^{\mathrm{m}}$ this set of points.

 \begin{definition}
The \textit{median-$k$-distance} is the power distance defined by
\begin{equation*}
d^{\mathrm{m}}_{P,k}(x) := \left(\min_{b \in \mathrm{Bary}_{P,k}^{\mathrm{m}}} \| x - b
  \|^2 + \omega_{b}\right)^{1/2},
\end{equation*}
where $\omega_{b} = \frac{1}{k} \sum_{p_i \in \NN_{P,k}(b)} \|b - p_i\|^2$.
\end{definition}
The key idea is to replace the barycenter of $k$ points involved in
the witnessed-$k$-distance by the geometric median. This can be seen
as replacing the $L^2$-norm by the $L^1$-norm. Indeed, it is
well-known that the barycenter of $p_1,\hdots,p_k$ is the point $b$
that minimizes $\sum_{i=1}^d \| b-p_i \|^2$. Similarly, a geometric
median is a point $b$ that minimizes $\sum_{i=1}^d \| b-p_i \|$. Note
that the geometric median is unique when the $k$ points are not
colinear.

As for the witnessed-$k$-distance, one can compute the
$d^{\mathrm{m}}_{P,k}$-VCM by using Lemma
\ref{lemma:power-distance-vcm}.


\section{Computation and Experiments}

\subsection{Computation of the VCM for a power distance}
We describe here our algorithm to compute an approximation of the VCM
of a power distance. Let thus $(P,\omega)$ be a weighted point cloud
that defines a distance-like function $\delta_{P}$.  For the purpose
of normal and curvature estimation, we have to compute for each
point $q$ in $P$ the covariance matrix
$\mathcal{V}_{\delta_{P},R}(\chi_q^r)$, where $\chi_q^r$ is the
indicatrix of the ball $\bbB(q,r)$.
Using Lemma~\ref{lemma:power-distance-vcm}, we have
\begin{displaymath}
  \mathcal{V}_{\delta_{P},R}(\chi_q^r) = \sum_{p \in P \cap \bbB(q,r)} M_p,
\end{displaymath}
where the matrix $M_p$ is defined by
\begin{align*}
C_p &= \Pow_P(p) \cap  \mathbb{B}\left(p, (R^2 - \omega_p)^{1/2}\right)\\
M_p &= \int_{C_p} (x - p) \otimes (x-p) \d x
\end{align*}
The main difficulty is to compute these covariance matrices, and in
practice we approach it by replacing the ball in the definition of
$C_p$ by a convex polyhedron. The input of our algorithm (summarized
in Algorithm~\ref{algo:iterative}) is a weighted point cloud, a radius
$R$ and an approximation of the unit ball by a convex polyhedron
$B$. In all examples, the polyhedron $B$ was a dodecahedron
circumscribed to the unit ball. To compute the approximate covariance
matrices $(M_p^B)_{p\in P}$, we first build the power diagram of
the weighted point cloud. Then, we proceed in two steps for each
point:
\subsubsection{Intersection}
\label{subsec:intersection} We compute an approximation of the
cell $C_p$ by
\begin{equation}
C_p^B := \Pow_P(p) \cap (p + (R^2 - \omega_p)^{1/2} B)
\end{equation}
To perform this computation, we gather the half-spaces defining both
polyhedra and we compute their intersection. By duality, this is
equivalent to the computation of a convex hull.

\subsubsection{Integration}
\label{subsec:integration} The boundary of the polyhedron $C_p^B$ is
decomposed as a union of triangles, and we consider the tetrahedron
$\Delta_p^1,\hdots,\Delta_p^{k_p}$ joining these triangles to the
centroid of $C_p$.  We compute
\begin{equation}
M^B_p = \int_{C^B_p} (x - p) \otimes (x-p) \d x
\end{equation}
by summing the signed contribution of each tetrahedron, which can be
evaluated exactly using the same formulas as in \cite{voronoi-based}.
We implemented this algorithm using CGAL \cite{cgal} for the power
diagram calculation and the intersection step.  We report in
Table~\ref{tab:time} some running times for the
witnessed-$k$-distance and the median-$k$-distance.

\begin{algorithm}
  \caption{Computation of  $\mathcal{V}_{\delta_{P},R}$.}
  \label{algo:iterative} 
  \begin{algorithmic} \label{algorithm:iterative}
       \REQUIRE{$P \subseteq \bbR^d$ point cloud,
                $R>0$, $B=$ approximation of $\bbB(0,1)$}
	\STATE{Computation of the power diagram $(\Pow_P(p))_{p\in P}$}
	\FORALL{$p \in P$} 
                \STATE{$C_p^B \gets \Pow_P(p) \cap (p + (R^2 - \omega_p)^{1/2} B)$}  \hfill\COMMENT{\S\ref{subsec:intersection}}
		\STATE{$\Delta_p^1,\hdots,\Delta_p^{k_p}\gets$ decomposition of $C_p^B$ into tetrahedra}   \hfill\COMMENT{\S\ref{subsec:integration}}
		\STATE{$M^B_p$ $\leftarrow$ $\sum_{i=1}^{k_p} \int_{\Delta_p^i} (x - p) \otimes (x-p) \d x$} 	  \hfill\COMMENT{\S\ref{subsec:integration}}
	\ENDFOR
	\RETURN $(M_p^B)_{p\in P}$.
  \end{algorithmic}
\end{algorithm}

%
%

\begin{table}
\begin{center}
   \begin{tabular}{|c|c|c|c|}
  \hline
  shapes &  number of points & $d^{\mathrm{w}}_{P,k}$-VCM & $d^{\mathrm{m}}_{P,k}$-VCM \\
  \hline
  ellipsoid &  10K & 3.31 s & 3.44 s\\
  \hline
 hand & 36K & 12.31s & 13.86s\\
\hline
 bimba & 74K & 25.98 s & 29.08 s\\
  \hline
 ceasar & 387K & 175.16 s & 201.20 s\\
  \hline
   \end{tabular}
\end{center}
\caption{Computation times of $(M_p^B)_{p\in P}$ for the $d^{\mathrm{w}}_{P,k}$-VCM and the $d^{\mathrm{m}}_{P,k}$-VCM (with a $4 \times 2.5 Ghz$ CPU)}
\label{tab:time}
\end{table}



\subsection{$d^{\mathrm{w}}_{P,k}$-VCM evaluation}
We show in this section that the $d^{\mathrm{w}}_{P,k}$-VCM provides
in practice a robust estimator for normal vectors and curvatures, as well as for feature
detection. It is also resilient to both Hausdorff noise and outliers.
Note that in our experiments, the point cloud $P$ is the set of
vertices of a mesh. The mesh itself is only used for
visualisation, our algorithm being only based on the point cloud
$P$. The diameter of the point cloud in all our examples is
$D=2$. We say that the
Hausdorff noise is $\varepsilon$ if every point $p$ of $P$ can be
randomly moved at a distance less than $\varepsilon$. We will speak
about outliers if in addition a certain amount of points can be moved
much further.

\subsubsection{Normal estimation} As suggested by Remark \ref{remark:davis-kahan}, we define a normal at each point $p$ of a point cloud $P$ as the eigenvector associated to the largest eigenvalue of $\mathcal{V}_{\delta_{P},R}(\chi_p^r)$.


\begin{figure}[!h]  
\begin{tabular}{cc}
\includegraphics[width=.39\textwidth]{./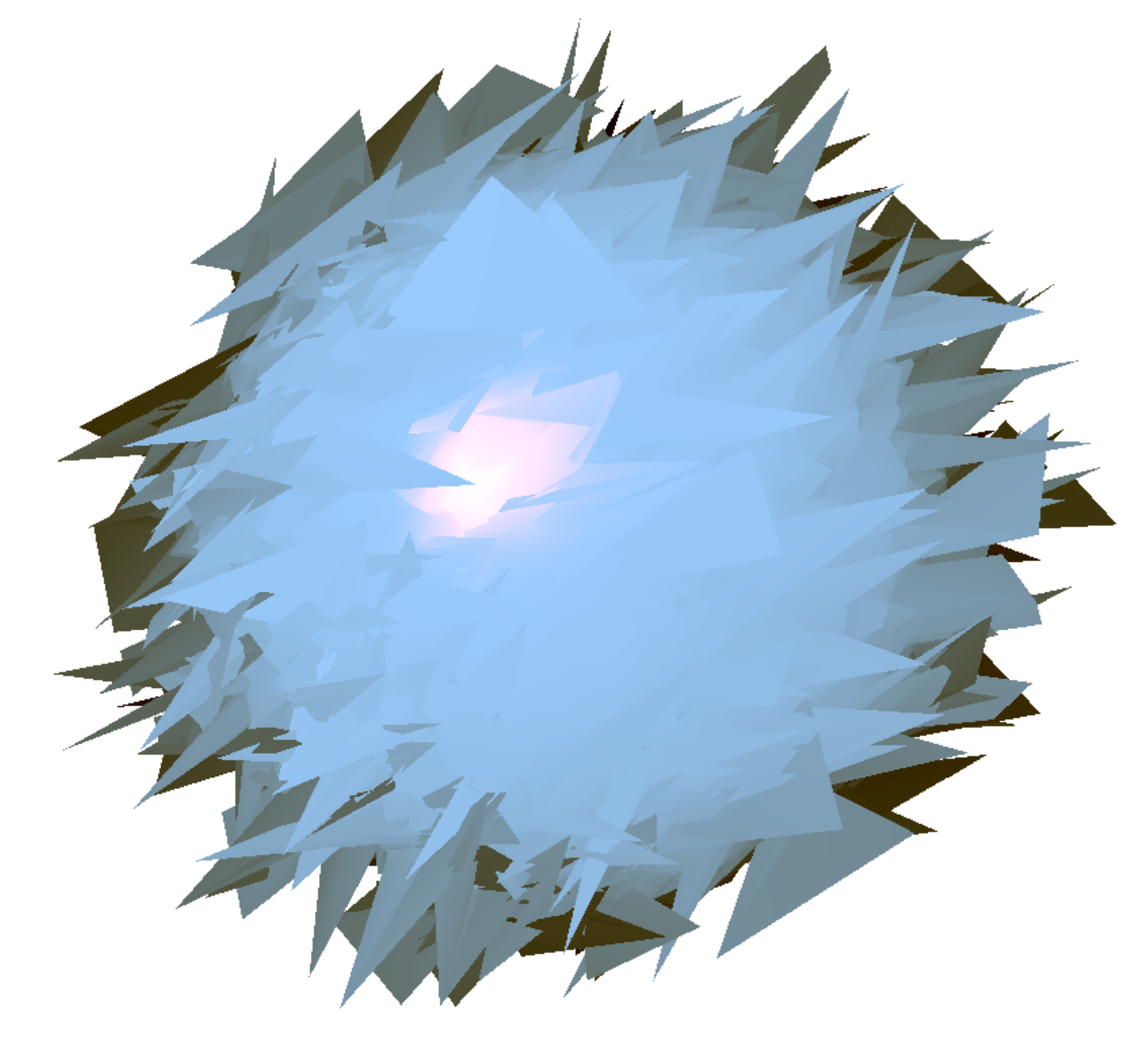} &
\includegraphics[width=.6\textwidth]{./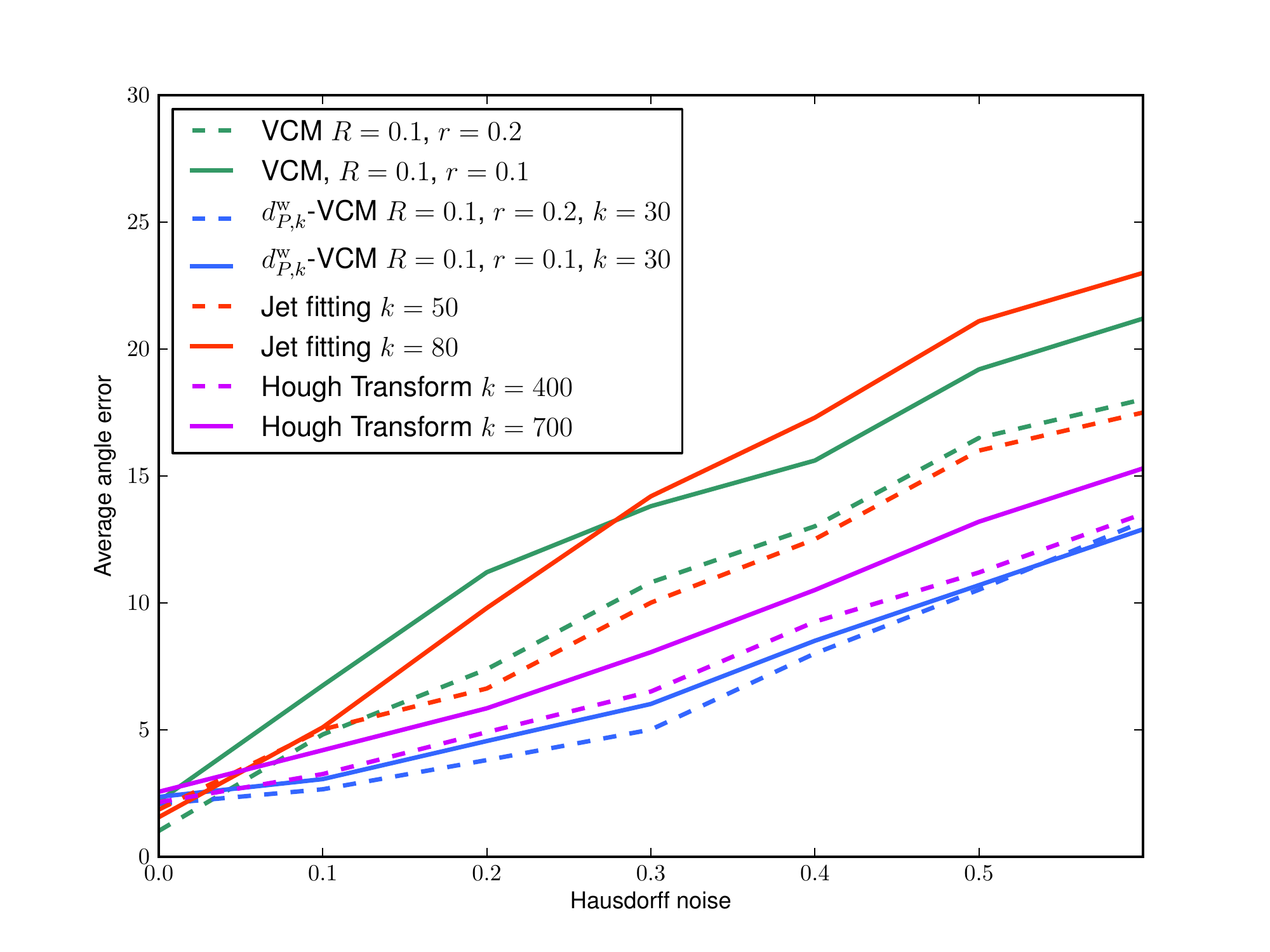}\\
\end{tabular}
\caption{Ellipsoid normal estimation. Left: the rendering illustrates the quality of the normal estimation (parameters: $k=30$, $R=0.2\diamnoise$, $r=0.2\diamnoise$, Hausdorff noise $=0.4\diamnoise$), where $\diamnoise$ is the diameter of the original shape. Right: Comparison of the average normal deviation for Jet Fitting, VCM,  $d^{\mathrm{w}}_{P,k}$-VCM and the Randomized Hough
  Transform method \cite{sgp-2012} for different values of Hausdorff noise.}
\label{ell}
\end{figure}

\paragraph{Comparison with other methods}
We compare the accuracy of our method to the VCM method~\cite{vcm}, to the
\textit{Jet Fitting} method~\cite{Pouget} and to the Randomized Hough
  Transform method\cite{sgp-2012}.  Figure \ref{ell} displays the
average angular deviation between the estimated normal and the ground
truth in the case of a noisy ellipsoid with outliers. Our method
gives better result than the others.

\paragraph{Sensitivity to parameters}
To measure the dependence of our method on the different parameters,
we computed the average deviation of the normal on a noisy ellipsoid
for different choices of parameters $k$ and $R$. As can be seen in
Figures~\ref{parameter1} and~\ref{parameter2}, the results are
essentially stable. Figure~\ref{parameter1} shows that we should
choose $k$ larger when the noise is large. We observe that the value
$k=30$ gives good results in both presence and absence of
noise. Therefore all our experiments (except of course the ones of
Figure~\ref{parameter1}) are done with the value $k=30$. Note that the
classical VCM method coincides exactly with the
$d^{\mathrm{w}}_{P,k}$-VCM method when $k=1$.

\begin{figure}[!h]  
\begin{tabular}{cc}
\includegraphics[width=0.49\textwidth]{./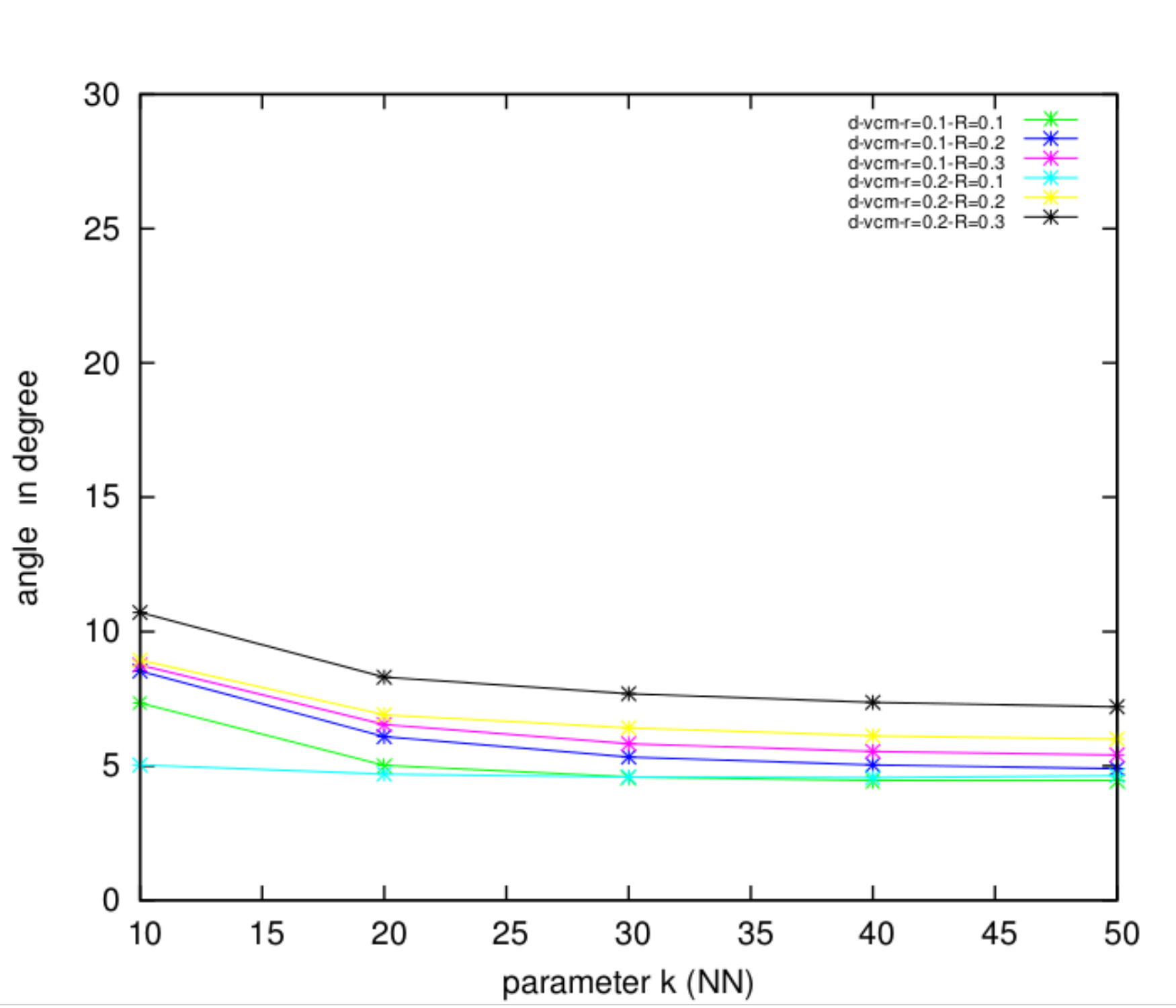}&
\includegraphics[width=0.49\textwidth]{./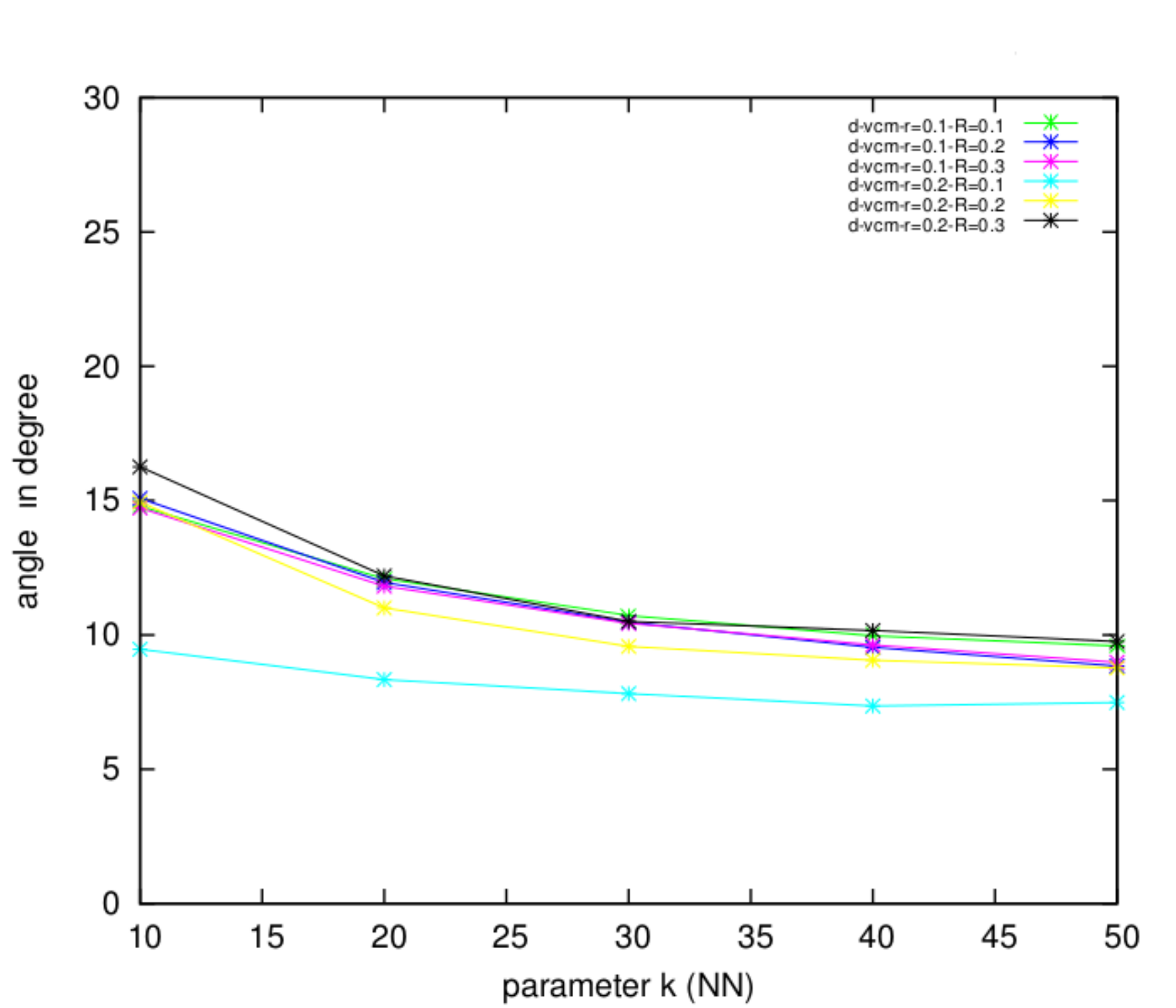}\\
\end{tabular}
\caption{Dependence on the parameter $k$: we see the influence of the
  parameter $k$ (in abscissa) on the average error on the normal estimation
  using $d^{\mathrm{w}}_{P,k}$-VCM. The experiment is done for
  different values of $r$ and $R$ and for two different noisy
  ellipsoids. Left: Hausdorff noise $=0.2\diamnoise$. Right: Hausdorff noise
  $=0.4\diamnoise$. Here $\diamnoise$ is the diameter of the original shape.}
\label{parameter1}
\end{figure}

\begin{figure}[!h]  
\begin{tabular}{cc}
\includegraphics[width=0.49\textwidth]{./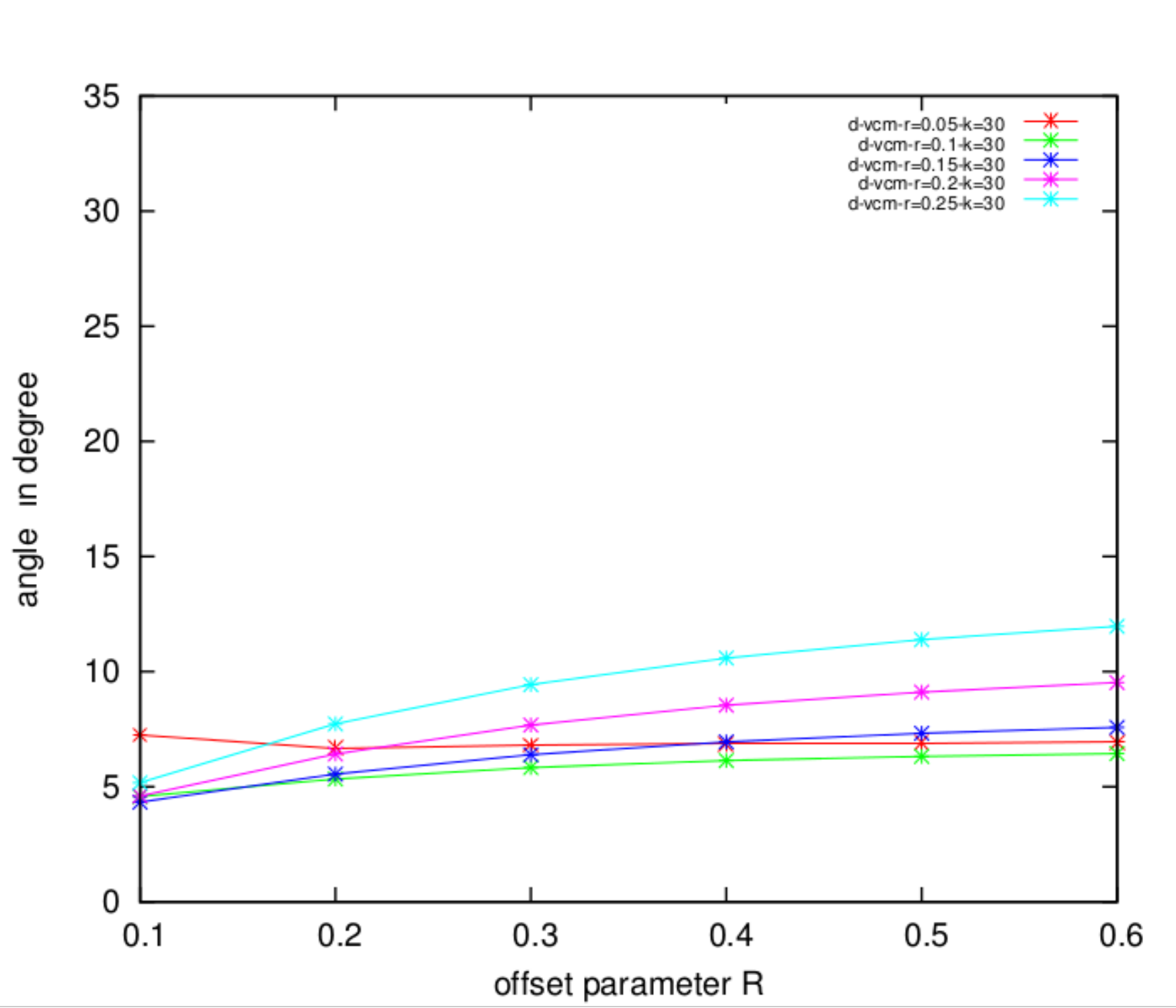}&
\includegraphics[width=0.49\textwidth]{./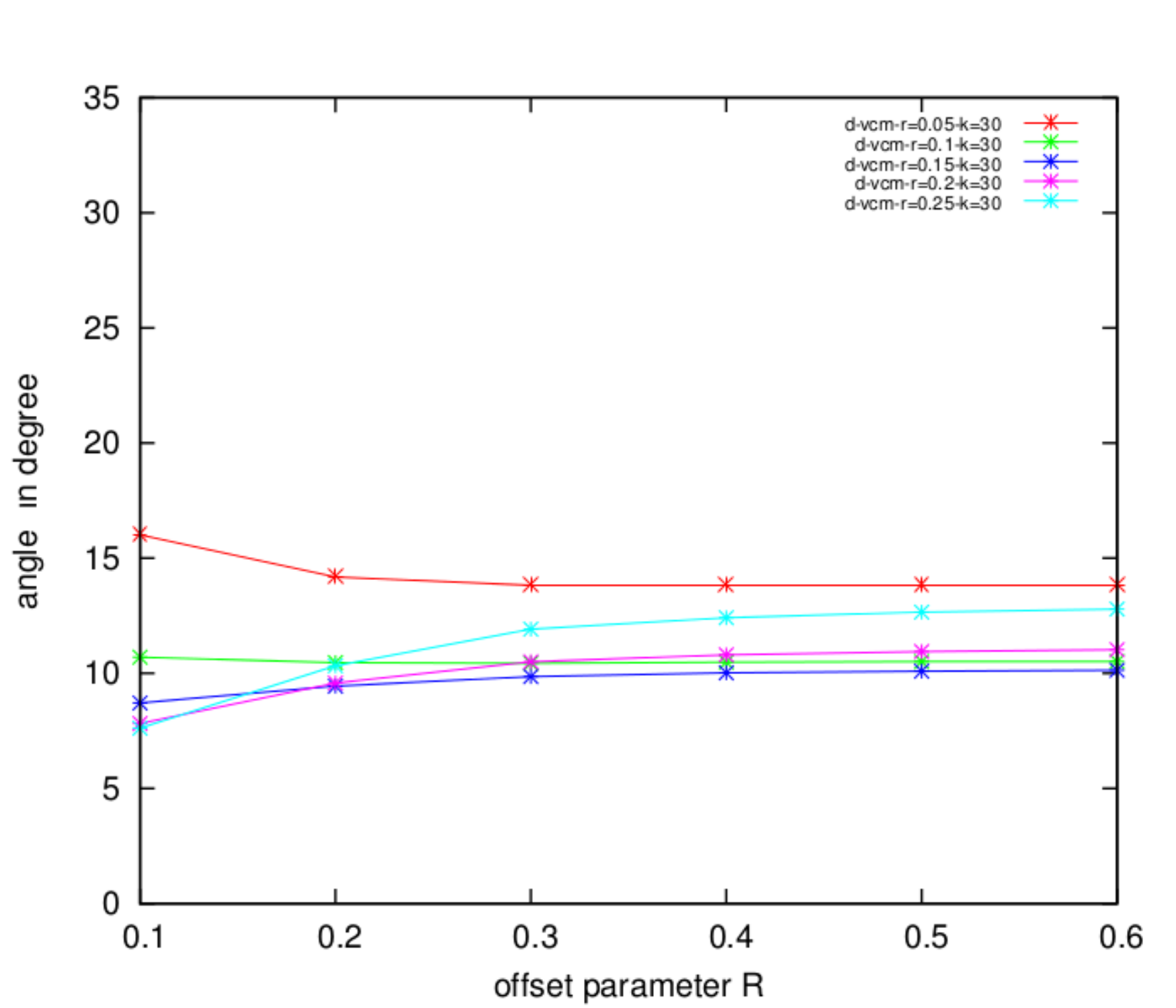}\\
\end{tabular}
\caption{Dependence on the parameter $R$: we see the influence of the
  parameter $R$ (in abscissa) on the average error on the normal estimation
  using $d^{\mathrm{w}}_{P,k}$-VCM. The experiment is done for
  different values of $r$ and for two different noisy
  ellipsoids. Left: Hausdorff noise $=0.2\diamnoise$. Right: Hausdorff noise
  $=0.4\diamnoise$. Here  $\diamnoise$ is the diameter of the original shape.}
\label{parameter2}
\end{figure}

\paragraph{Visualisation of normal estimation}

\begin{figure}[!h]  
\centering
\includegraphics[width=.19\linewidth]{./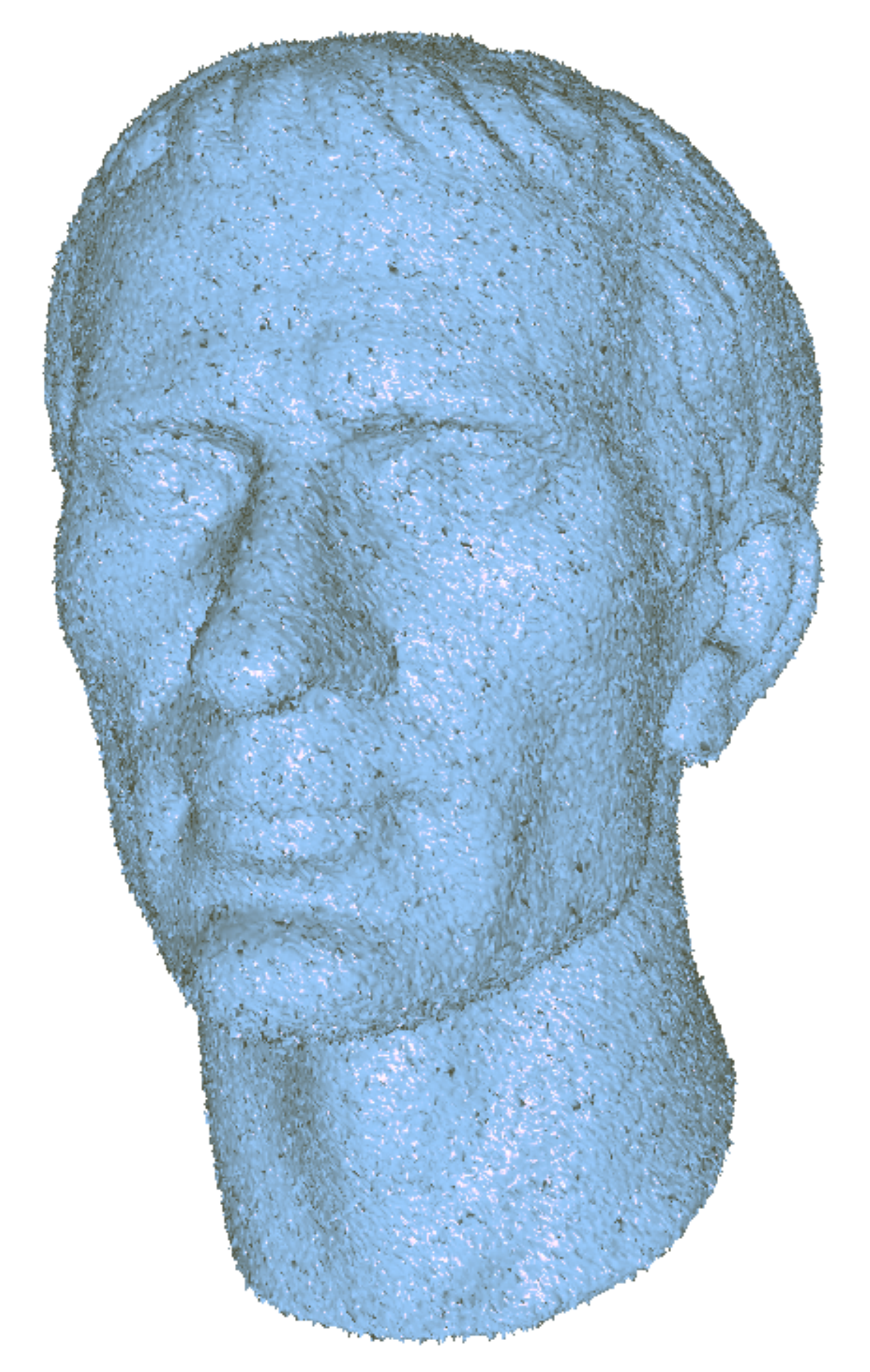}\hspace{0.9cm}
\includegraphics[width=.19\linewidth]{./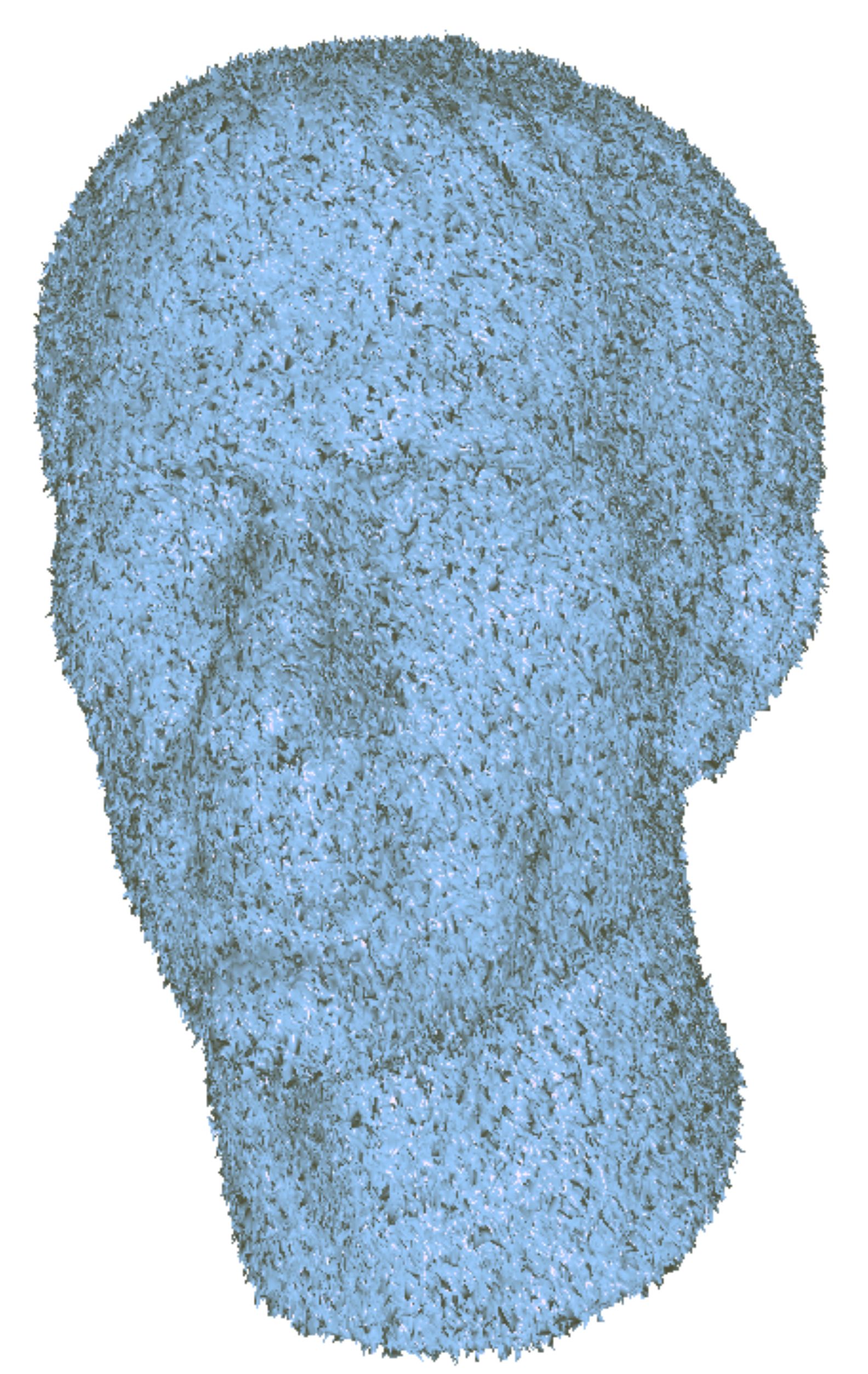}\hspace{0.9cm}
\includegraphics[width=.187\linewidth]{./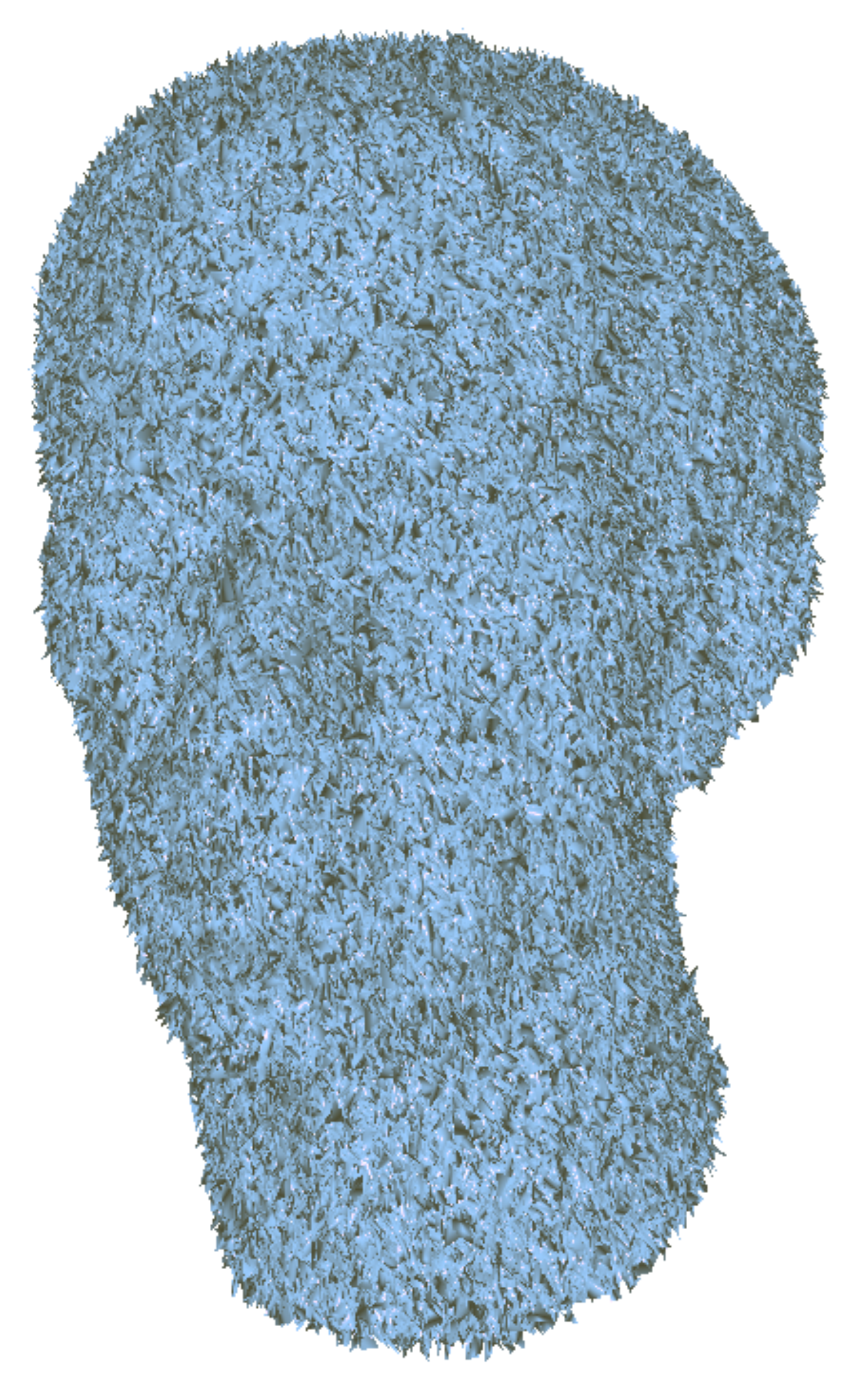}\\
\includegraphics[width=.19\linewidth]{./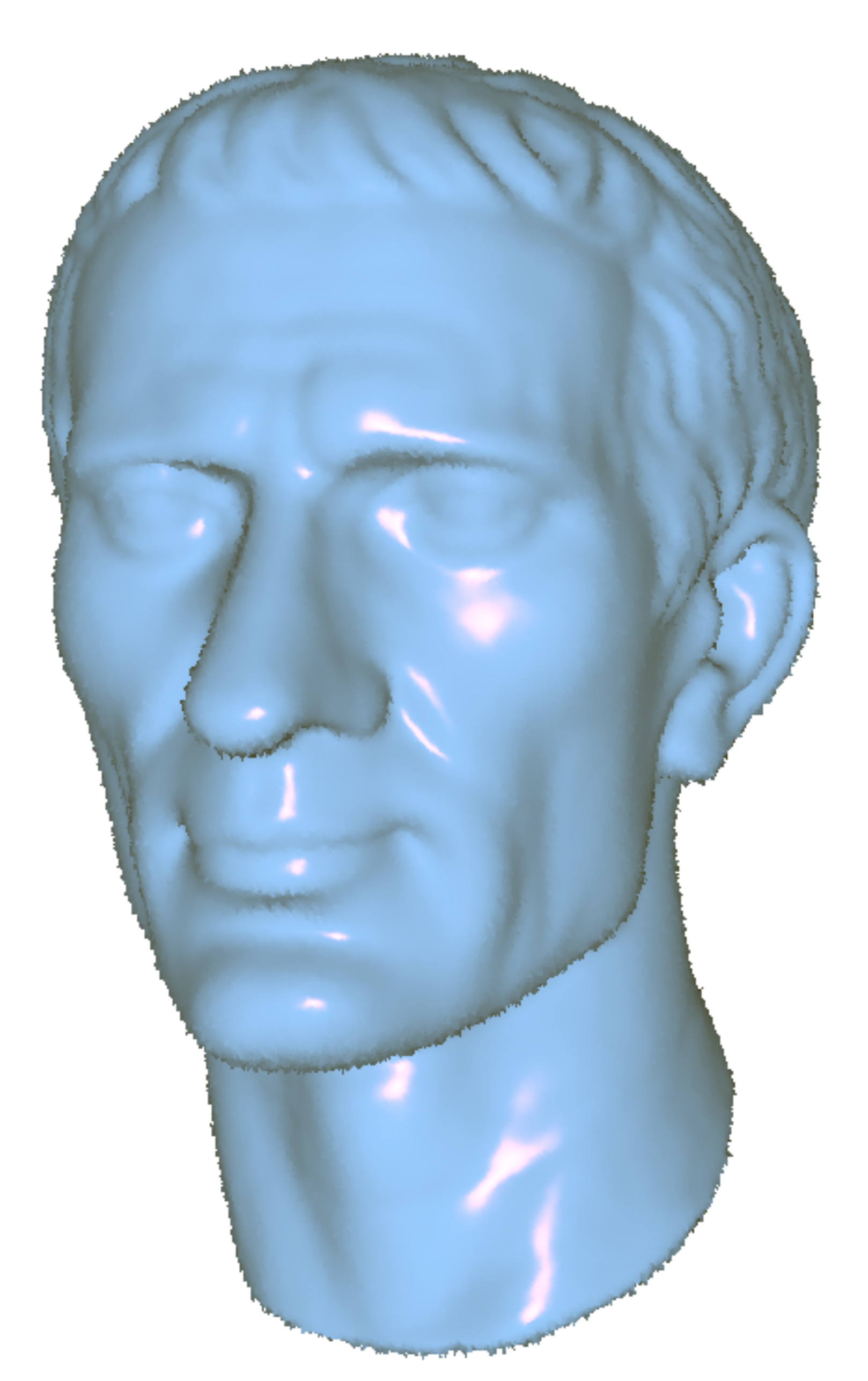}\hspace{0.9cm}
\includegraphics[width=.192\linewidth]{./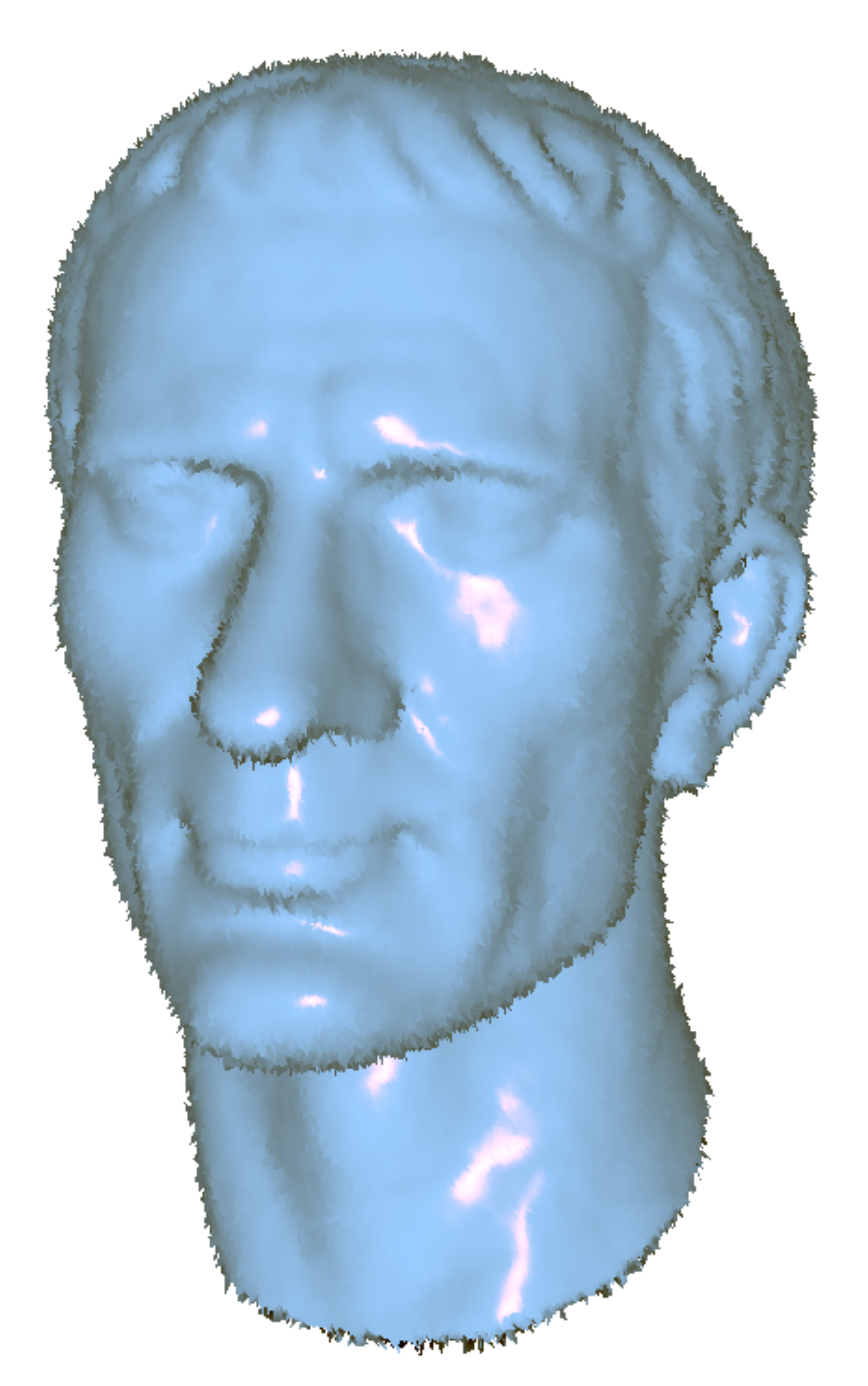}\hspace{0.9cm}
\includegraphics[width=.191\linewidth]{./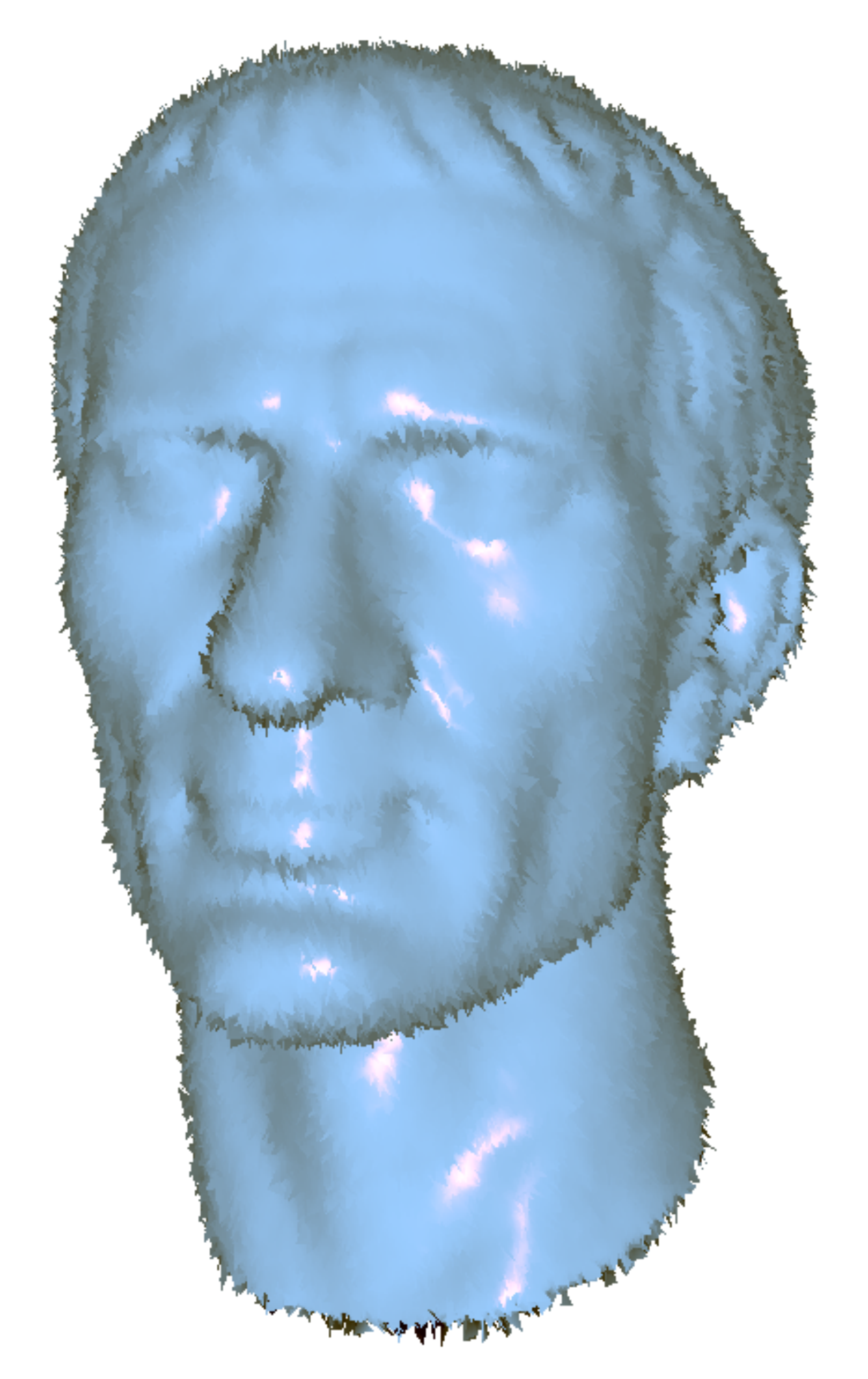}
\caption{Rendering of ``caesar'' data using triangle normal (top row)
  and estimated $d^{\mathrm{w}}_{P,k}$-VCM normal (bottom row) with
  Phong shading (parameters $R=0.04\diamnoise$, $r=0.04\diamnoise$, $k=30$, where $\diamnoise$ is the diameter of the original shape). From left to
  right, the Hausdorff noise is $0.02\diamnoise$, $0.04\diamnoise$ and $0.06\diamnoise$.}
\label{ceasarN}
\end{figure}



\begin{figure}[!h]  
\centering
\includegraphics[width=.2\linewidth]{./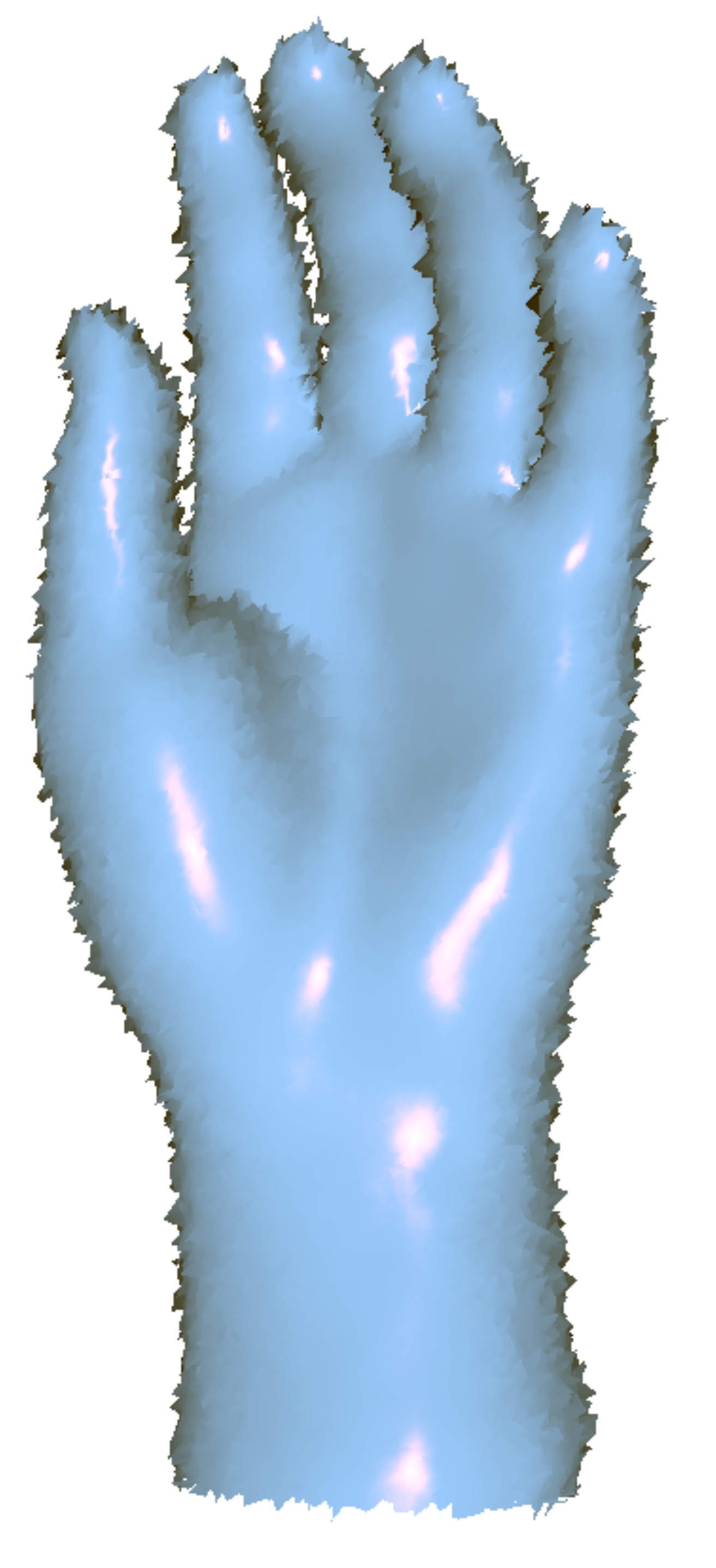}\hspace{0.9cm}
\includegraphics[width=.195\linewidth]{./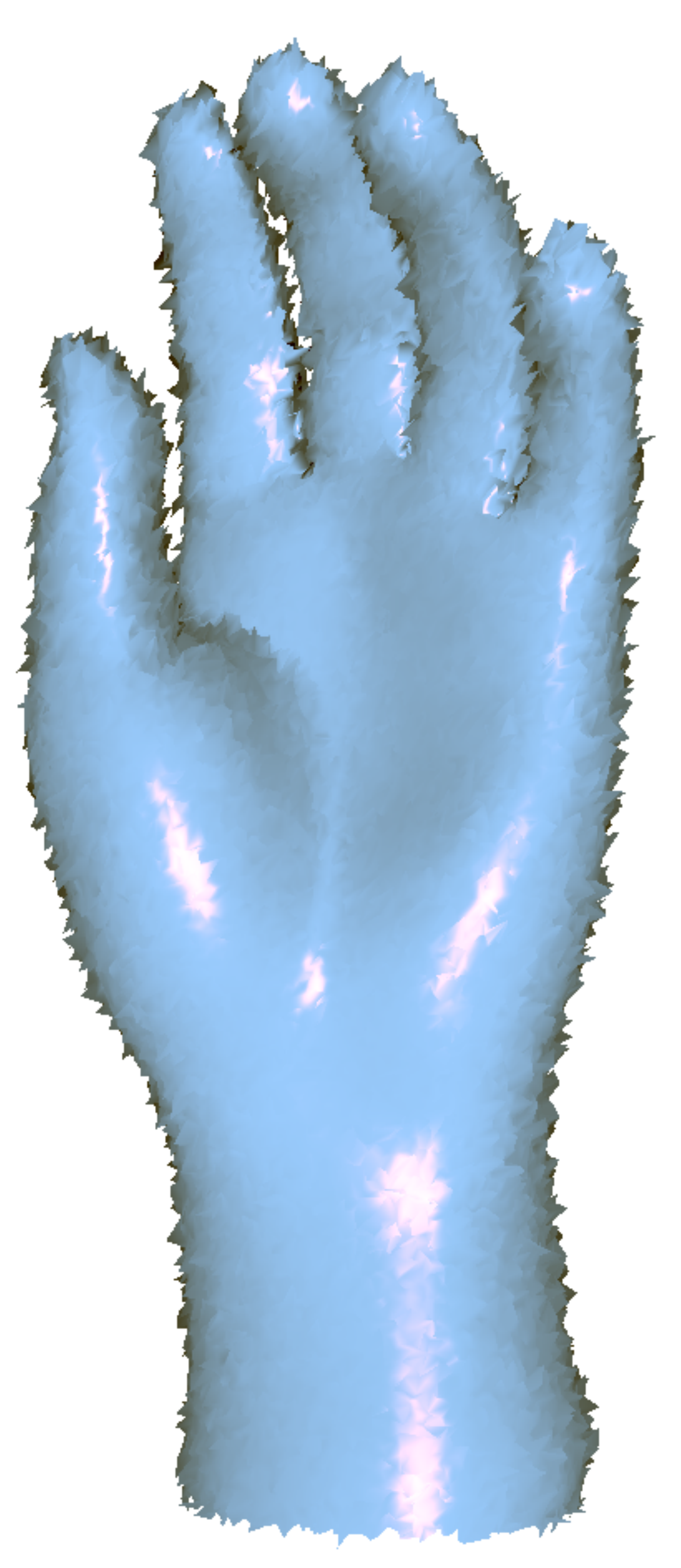}
\caption{Rendering comparison of ``hand'' data: (left) using our
  $d^{\mathrm{w}}_{P,k}$-VCM normal estimation with parameters
  $R=0.04\diamnoise$, $r=0.02\diamnoise$, $k=30$ ($\diamnoise$ is the diameter of the original shape), and (right) the Randomized Hough
  Transform method \cite{sgp-2012} with standard parameters: neighbors
  number $= 500$ and the others equal to default values. Triangles are
  displayed with Phong shading and ``hand'' is perturbated by a
  Hausdorff noise equal to $0.04\diamnoise$.}
\label{handN}
\end{figure}

We test the quality of our normal estimator on standard shapes, by
rendering meshes according to their estimated normals. In Figure
\ref{ceasarN}, we notice that the rendering done with normals computed
with our method is much better than the rendering done with normals
induced by the geometry of the underlying mesh, at the same time
robust to noise while keeping intact significant features. In Figure
\ref{handN}, we compare this rendering with the rendering obtained
when the normal is calculated by the Randomized Hough Transform
method \cite{sgp-2012}, which is a statistic method known to be
 resilient to noise and outliers.  Our method achieves a much
smoother rendering.

\subsubsection{Estimation of curvatures and features detection} 
The covariance matrix also carries curvature information along other
eigendirections \cite{vcm}. We denote by $\lambda_0 \geq \lambda_1
\geq \lambda_2$ the three eigenvalues of
$\mathcal{V}_{\delta_{P},R}(\chi_p^r)$ at a point $p$. Up to a
multiplicative constant, $\lambda_1$ and $\lambda_2$ correspond to the
absolute value of the minimal and maximal curvatures respectively
\cite{vcm}. We call the corresponding eigenvectors respectively
minimal and maximal principal directions.

We compare our method with the Jet Fitting method \cite{Pouget} in
Figure \ref{bimba1}. Experiments have been done for a large choice of
different parameters, and we find better principal directions with our method. In the presence of many outliers, we plot in Figure
\ref{bimba2} the minimal principal direction estimation of
$d^{\mathrm{w}}_{P,k}$-VCM projected on the initial mesh.

\begin{figure}[!h] 
\centering
\begin{tabular}{cc}
\includegraphics[width=.4172\linewidth]{./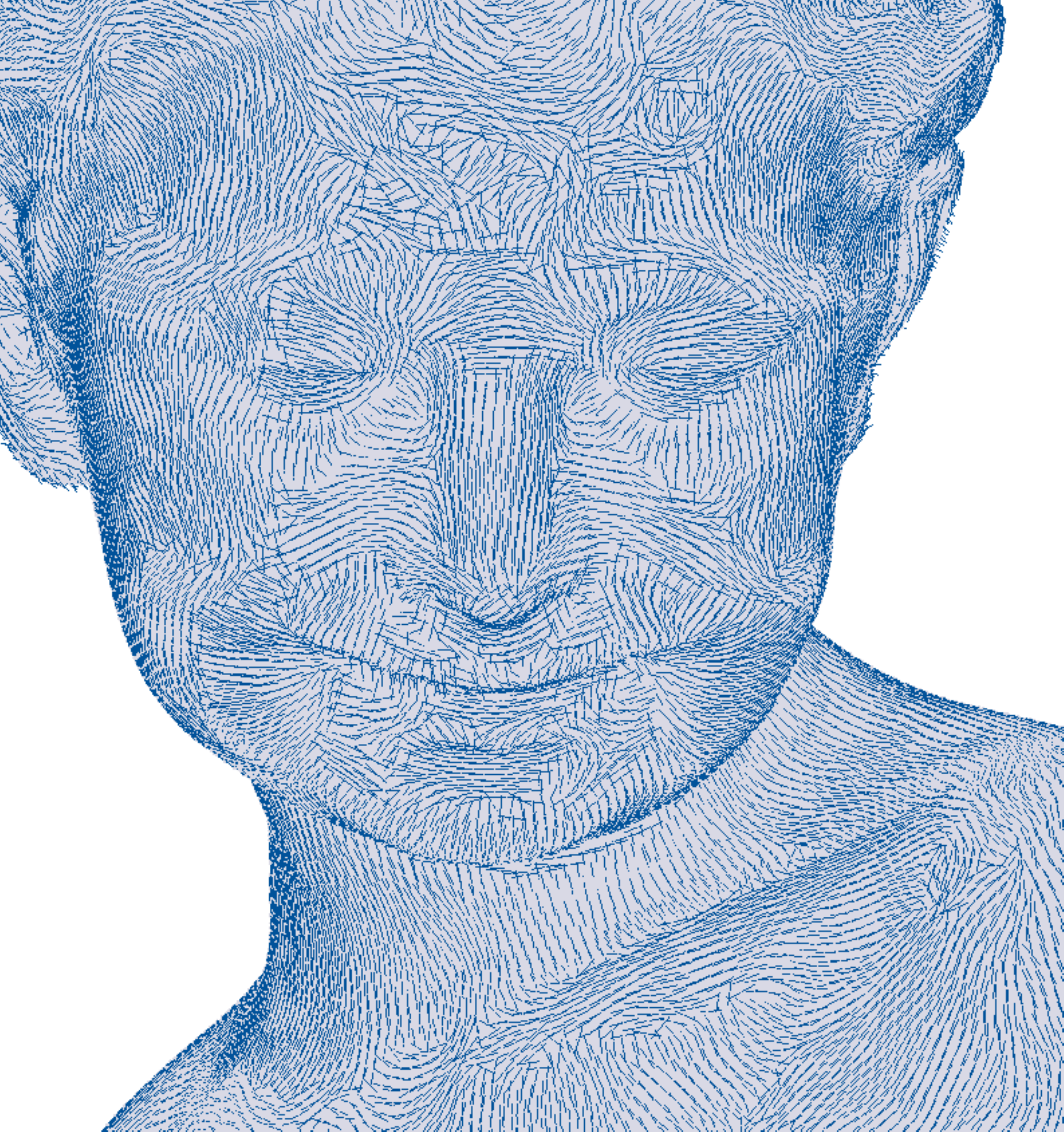}&
\includegraphics[width=.4\linewidth]{./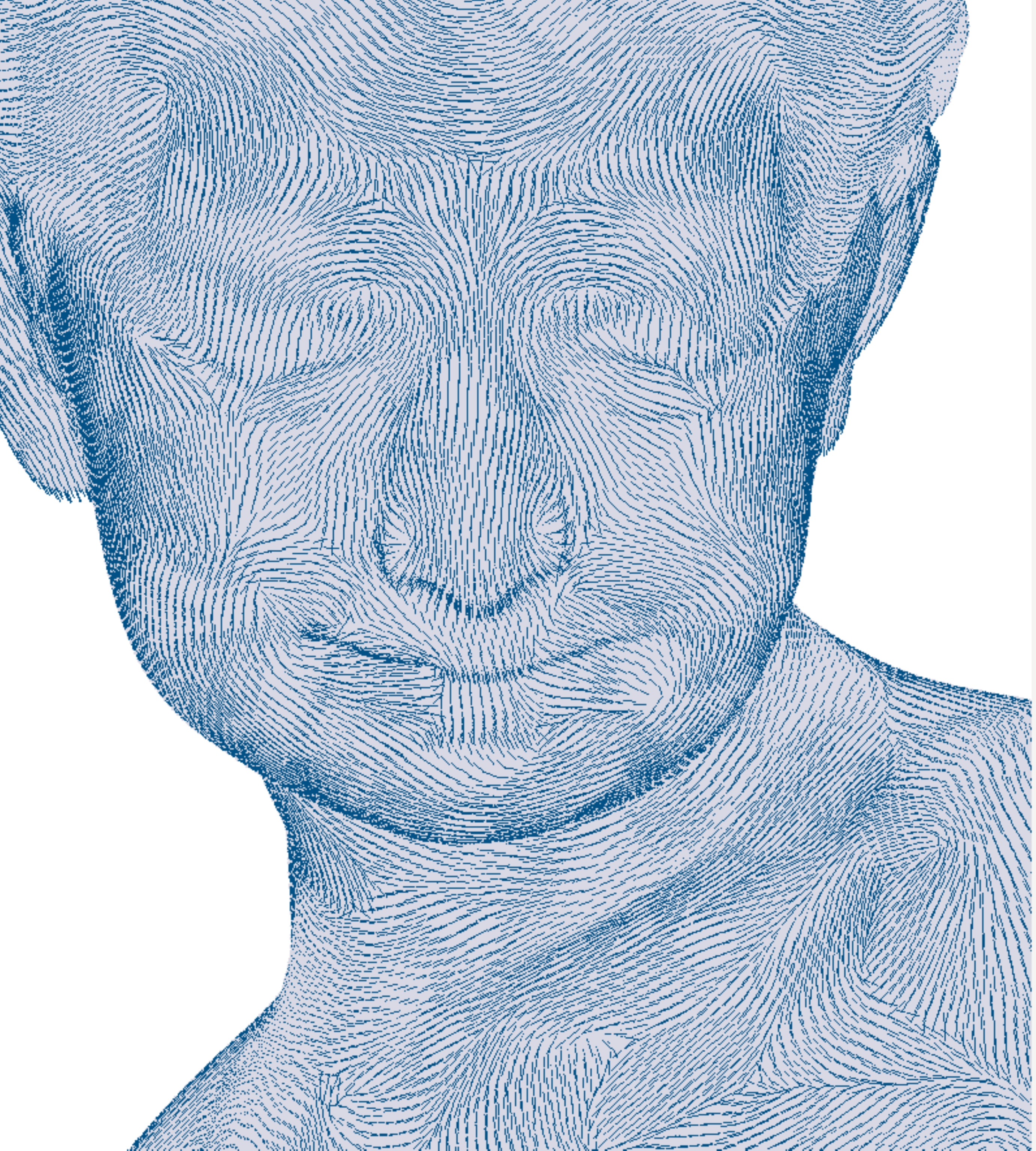}\\
\end{tabular}
\caption{Principal direction estimation on Bimba datas (no noise). We put a line-segment aligned at every point to the minimal principal direction. Left: Jet-fitting method (with a polynomial of degree $4\times 4$ and $k=100$ neighbors for each point). Right: $d^{\mathrm{w}}_{P,k}$-VCM method (parameters $R=0.04\diamnoise$, $r=0.08\diamnoise$, $k=30$, where $\diamnoise$ is the diameter of the original shape.)}
\label{bimba1}
\end{figure}

\begin{figure}[!h] 
\centering
\begin{tabular}{cc}
\includegraphics[width=.4\linewidth]{./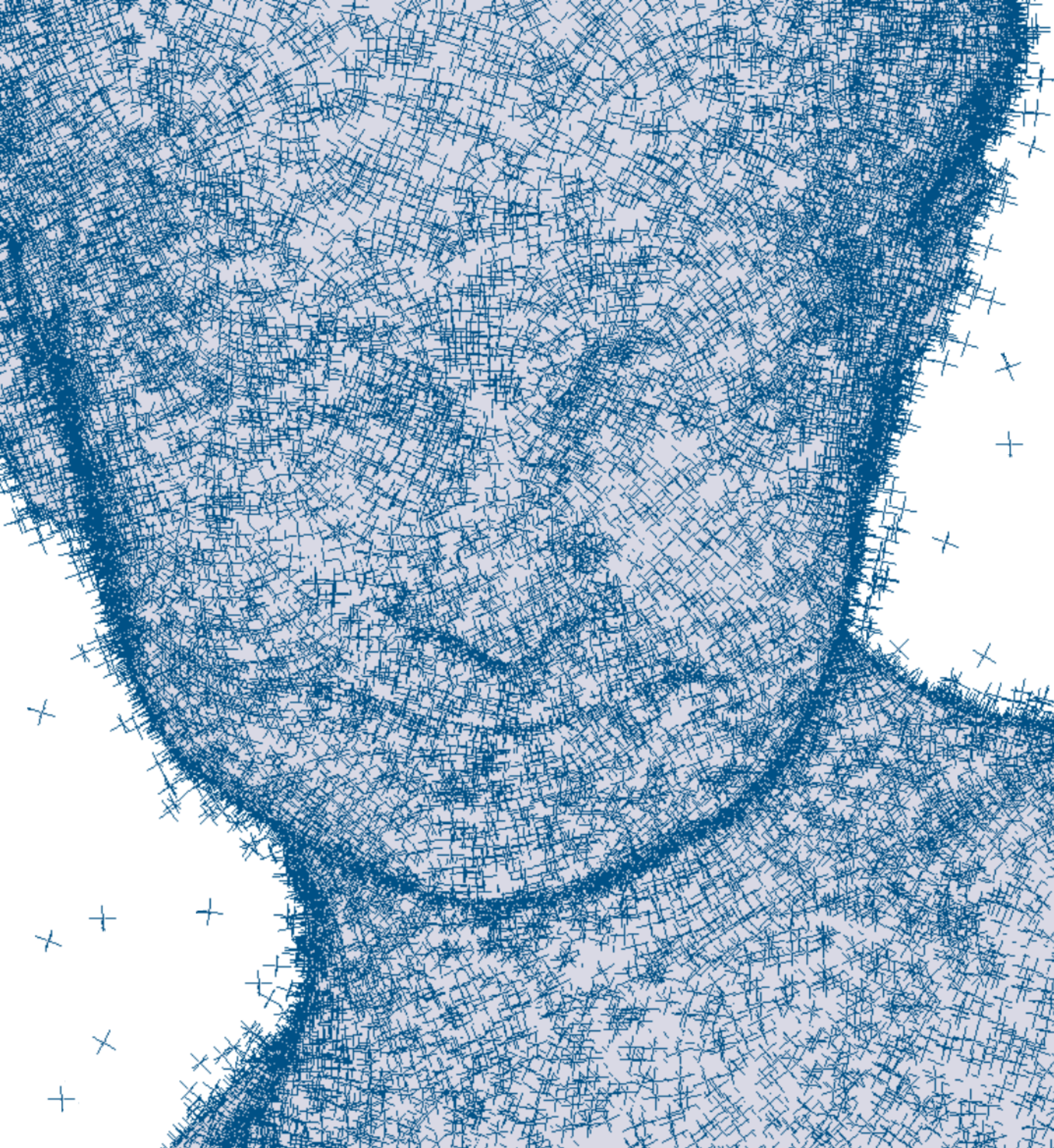}&
\includegraphics[width=.405\linewidth]{./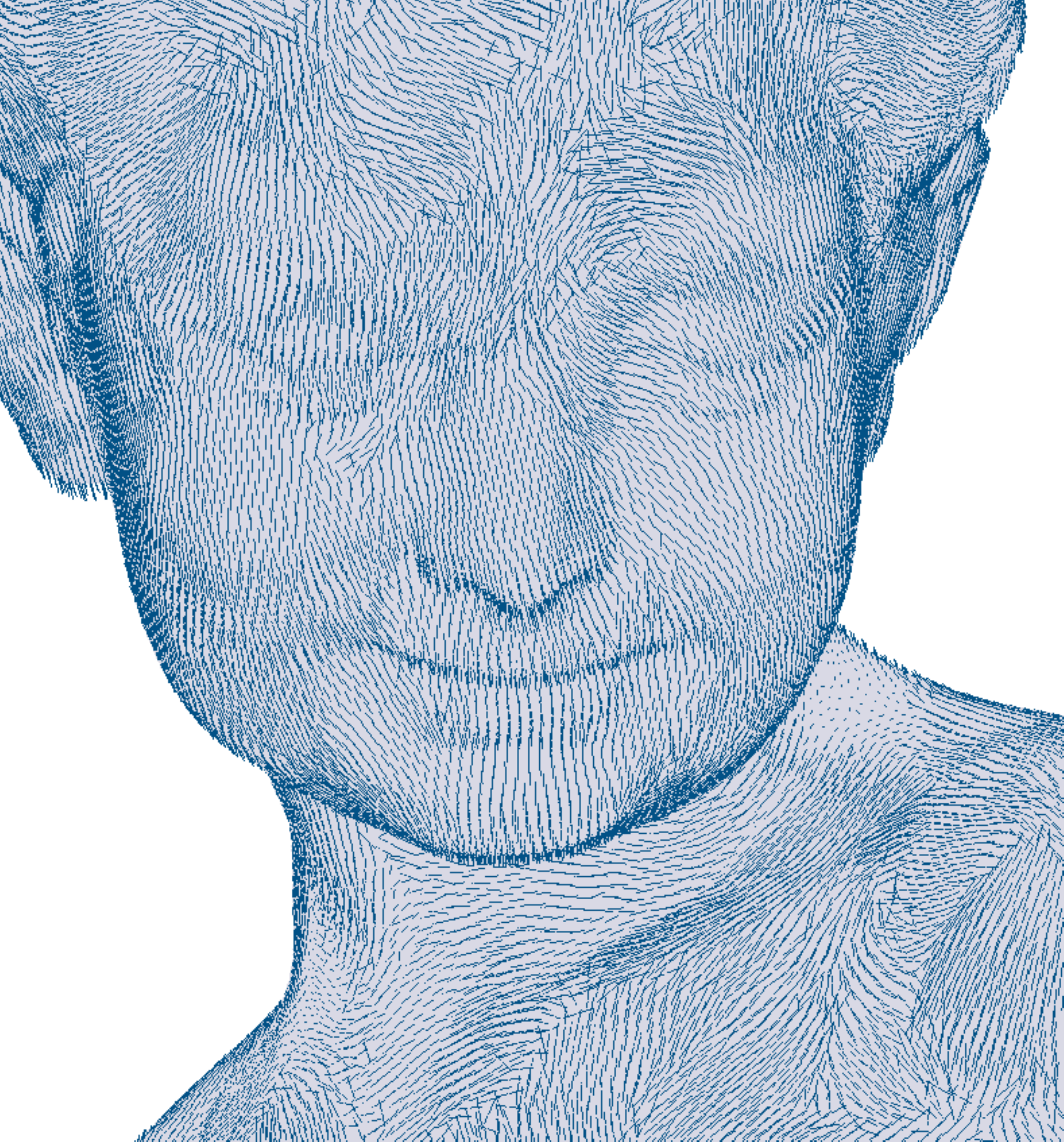}\\
\end{tabular}
\caption{Principal direction estimation on Bimba datas of diameter $\diamnoise$. Left: input data with outliers ($80\%$ of the points are moved at a distance at most $0.02\diamnoise$, $10\%$ are moved at a distance between $0.02\diamnoise$ and $0.1\diamnoise$ and $10\%$ are outliers taken randomly in the bounding box). Right: for every vertex, we project the minimal principal direction estimation of the $d^{\mathrm{w}}_{P,k}$-VCM on the initial mesh.}

\label{bimba2}
\end{figure}

\begin{figure}[!h]
\begin{tabular}{cc}
\includegraphics[width=.19\linewidth]{./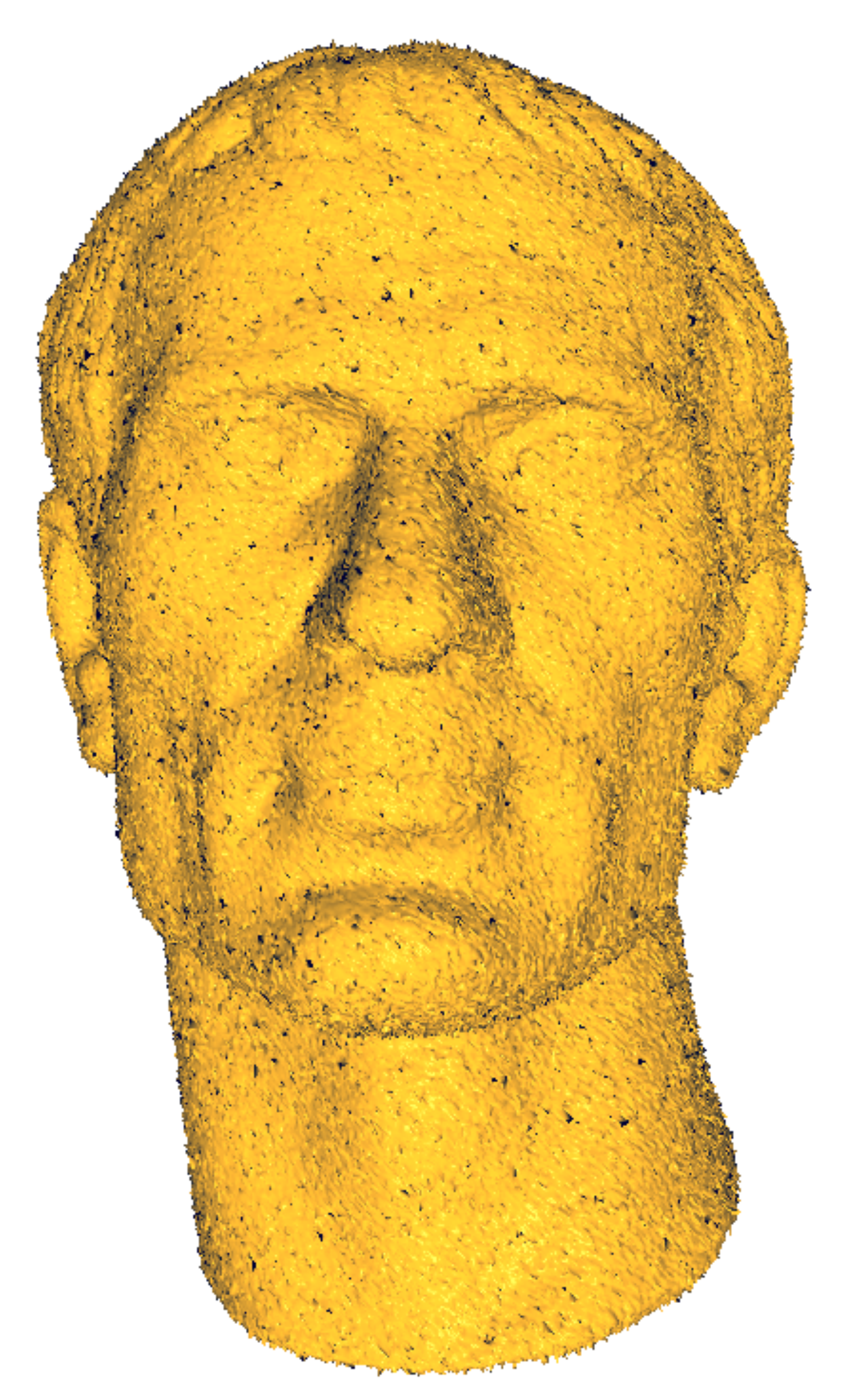}
\includegraphics[width=.202\linewidth]{./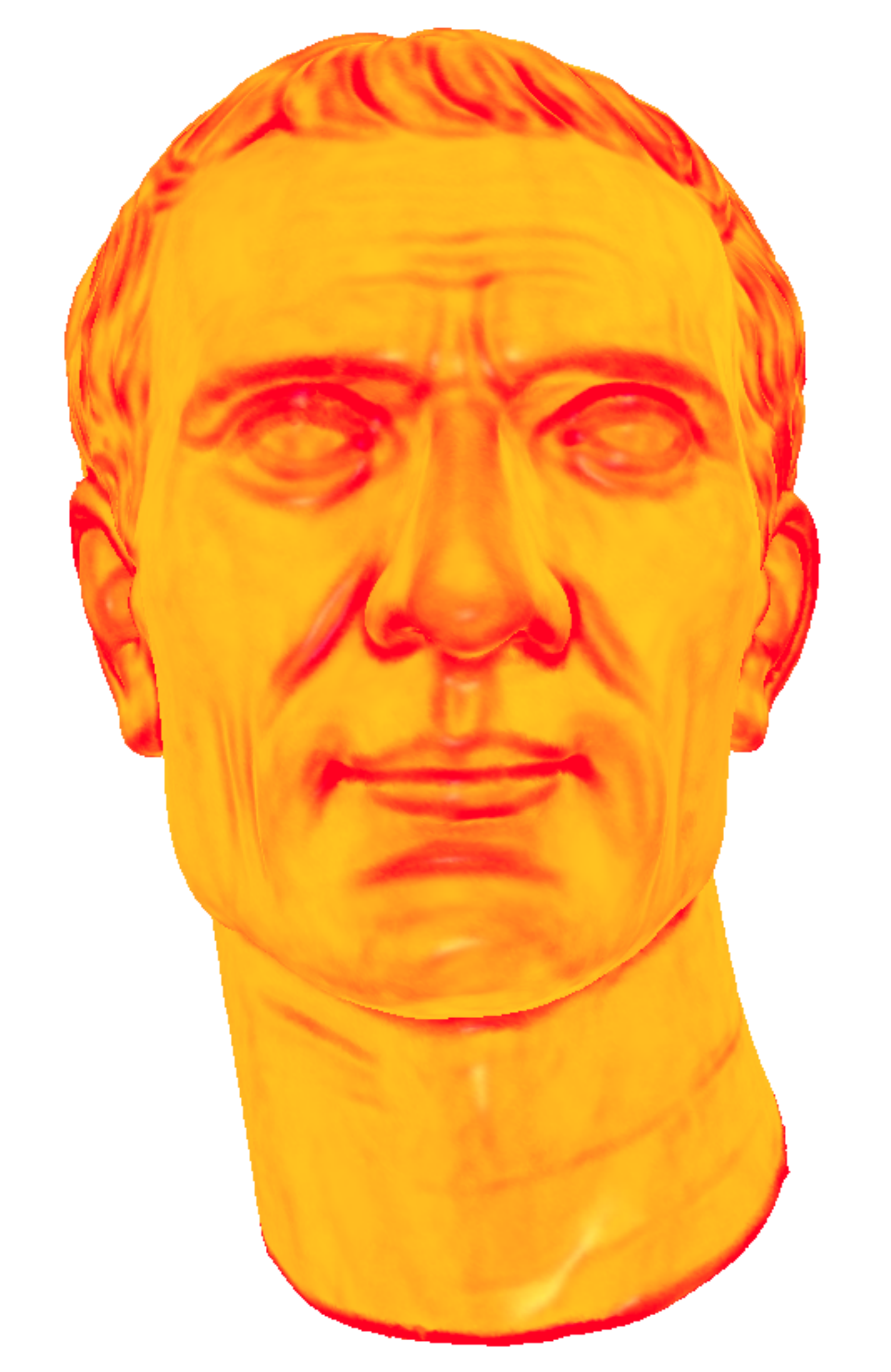} \hspace{0.3cm}&\hspace{0.3cm}
\includegraphics[width=.191\linewidth]{./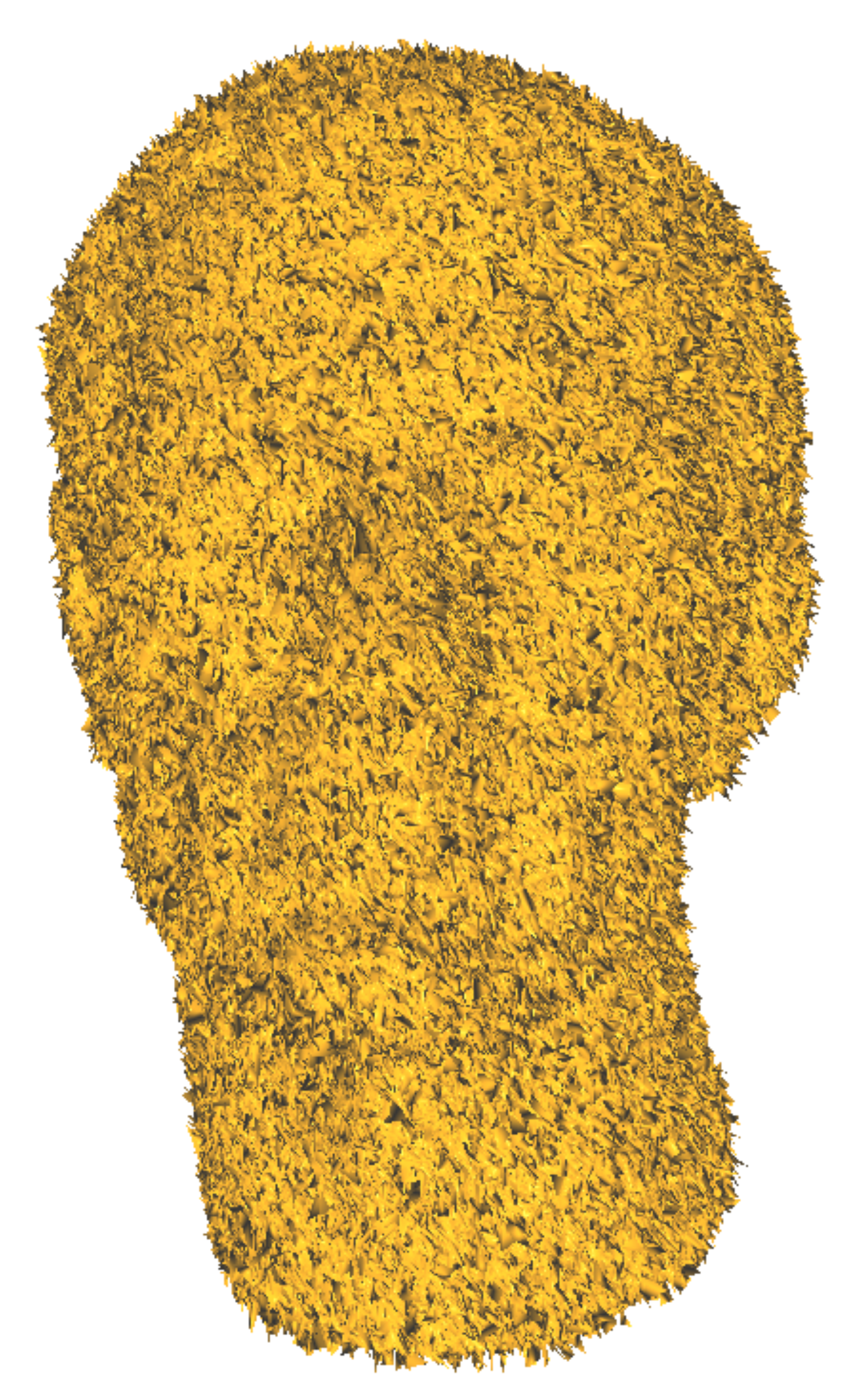}
\includegraphics[width=.19\linewidth]{./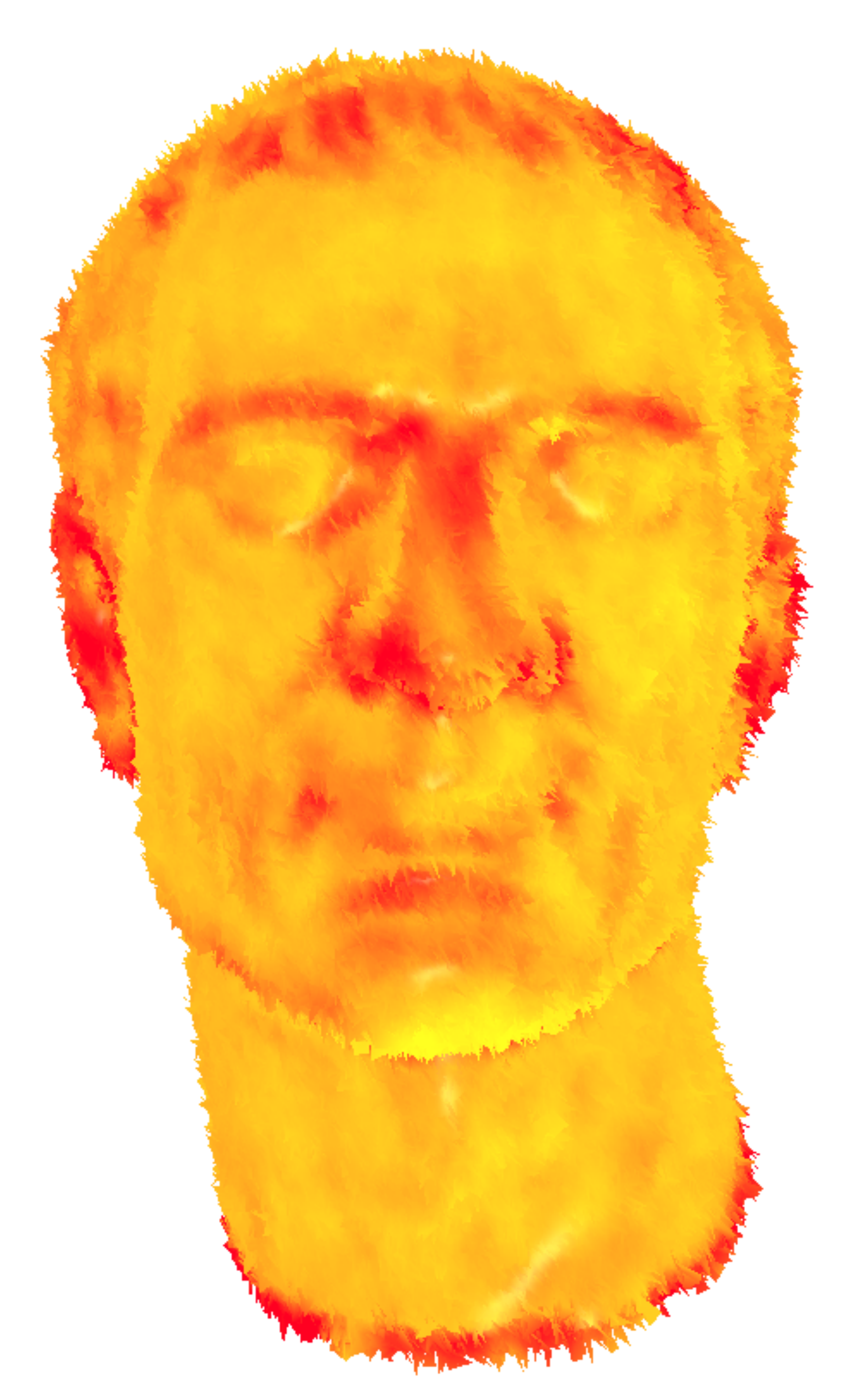}\\
\end{tabular}
\caption{Mean absolute curvature on ``ceasar'' data (of diameter $\diamnoise$) estimated with $d^{\mathrm{w}}_{P,k}$-VCM (yellow is low curvature while red is high curvature). Parameters are set as $R=0.04\diamnoise$, $r=0.04\diamnoise$, $k=30$. From left to right: input data with Hausdorff noise $=0.02\diamnoise$; corresponding curvature estimation; input data with Hausdorff noise $=0.06\diamnoise$; corresponding curvature estimation.}
\label{ceasar-curv}
\end{figure}

\begin{figure}[!h]
\centering
\includegraphics[width=.15\linewidth]{./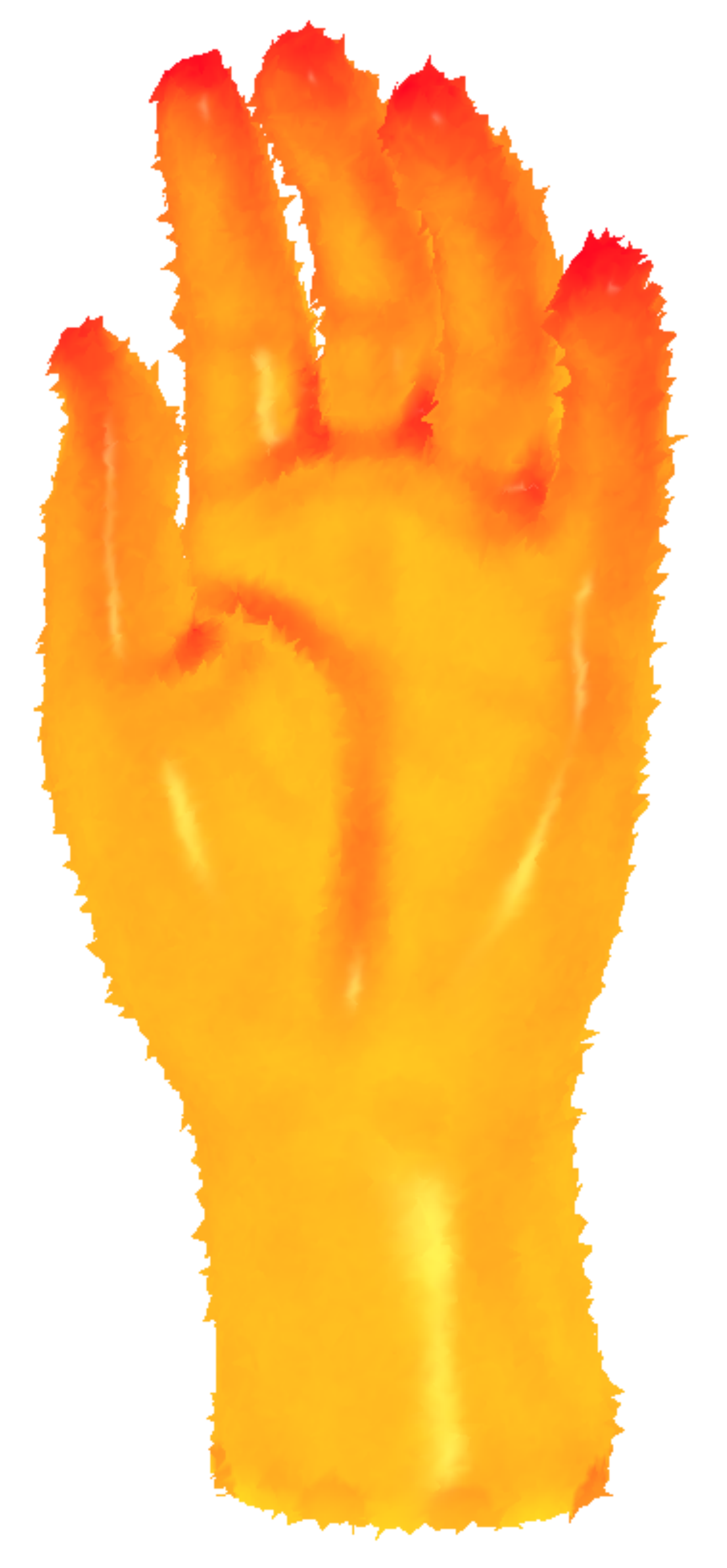}\hspace{0.9cm}
\includegraphics[width=.15\linewidth]{./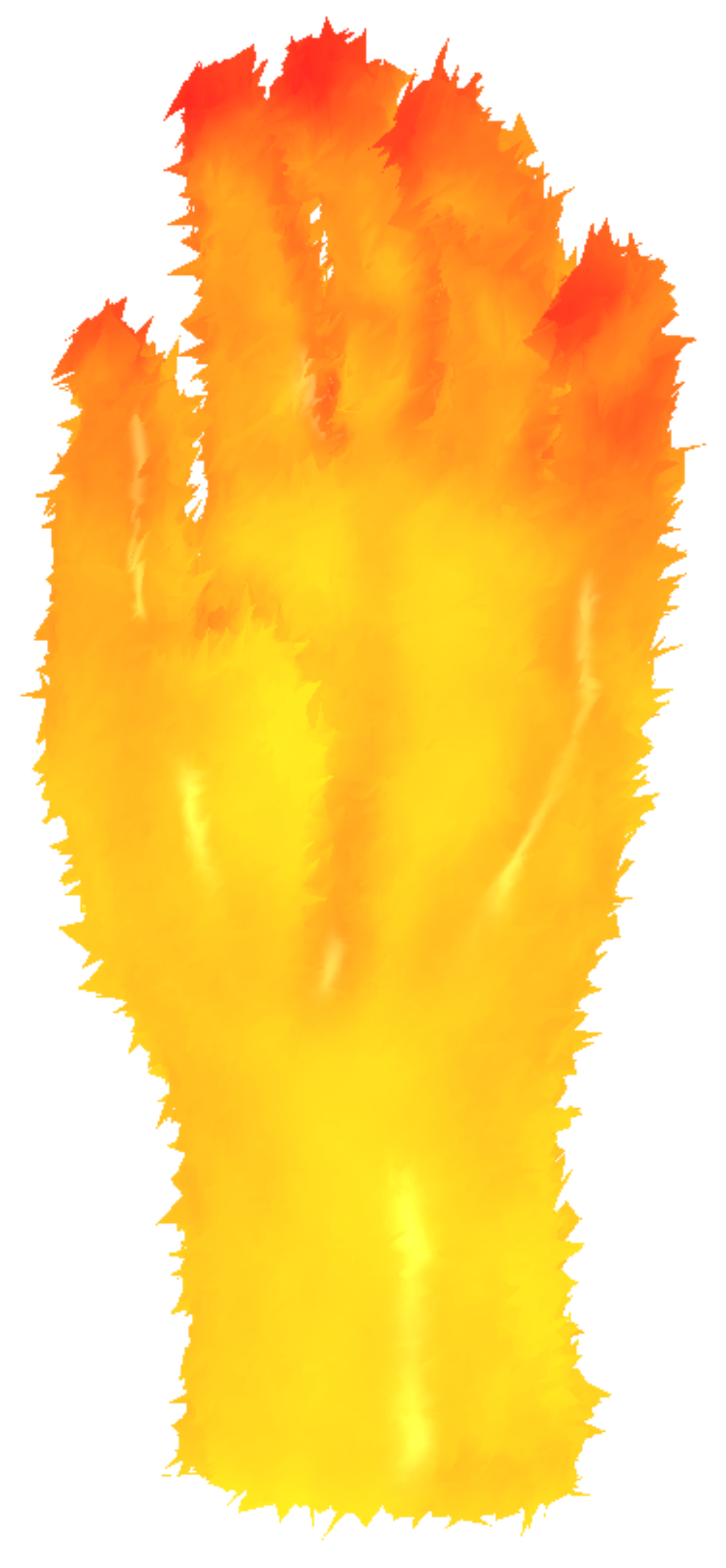}\hspace{0.9cm}
\includegraphics[width=.15\linewidth]{./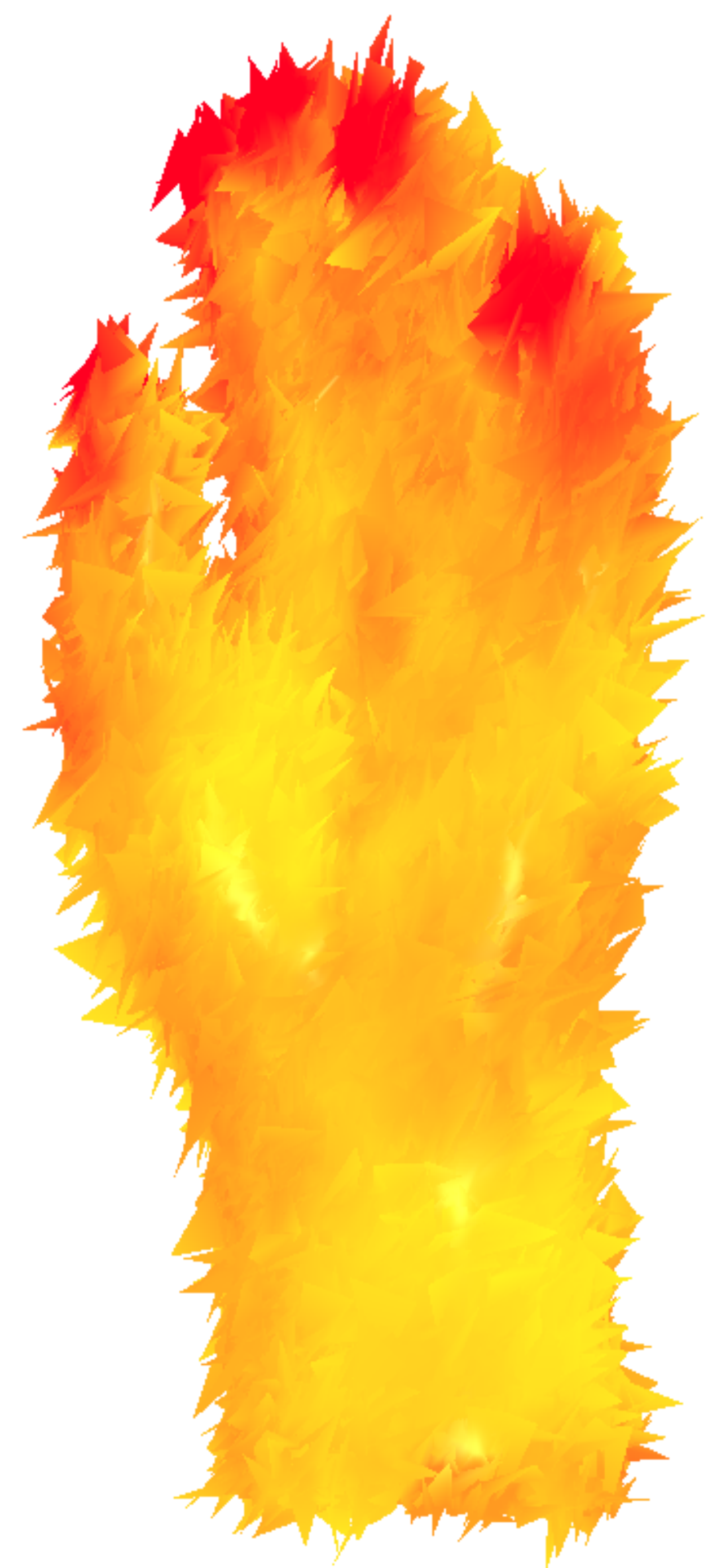}
\caption{Mean absolute curvature on ``hand'' data estimated with $d^{\mathrm{w}}_{P,k}$-VCM (yellow is low
  curvature while red is high curvature). Parameters are set as
  $R=0.04\diamnoise$, $r=0.02\diamnoise$, $k=30$. From left to right, the Hausdorff noise
  is $0.04\diamnoise$, $0.08\diamnoise$ and $0.16\diamnoise$, where $\diamnoise$ is the diameter of the original shape.}
\label{hand-curv}
\end{figure}

To illustrate the performance of curvature estimation by
$d^{\mathrm{w}}_{P,k}$-VCM, the estimated mean absolute curvature
$\lambda_1 + \lambda_2 $ is displayed for the standard ``caesar'' and
``hand'' models (Figures~\ref{ceasar-curv} and \ref{hand-curv}). On
Figure \ref{ceasar-curv}, we also render the triangulation before any
processing to better illustrate the amplitude of noise.


\begin{figure}
\begin{tabular}{ccc}

\includegraphics[width=.28\linewidth]{./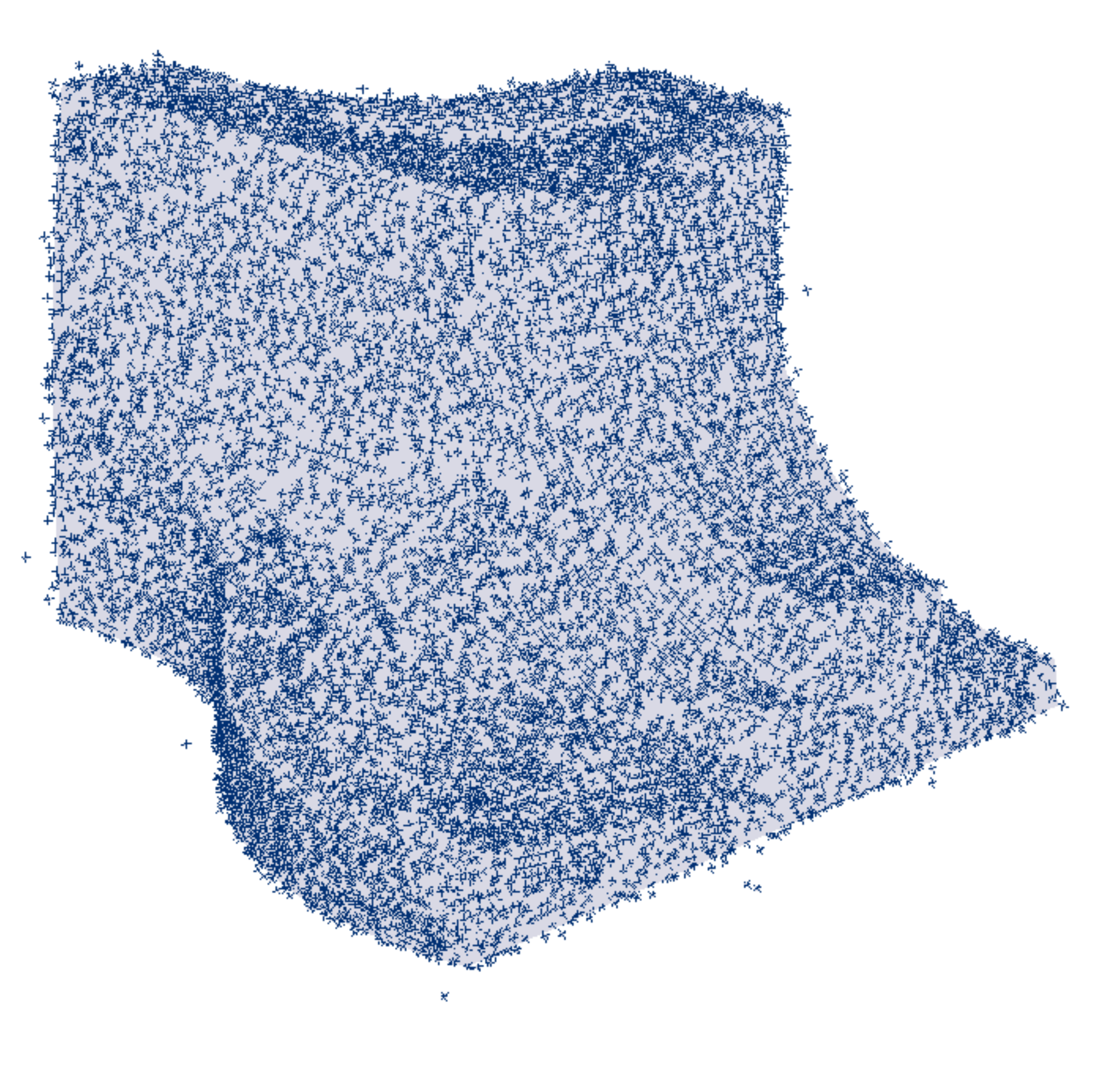}&
\includegraphics[width=.28\linewidth]{./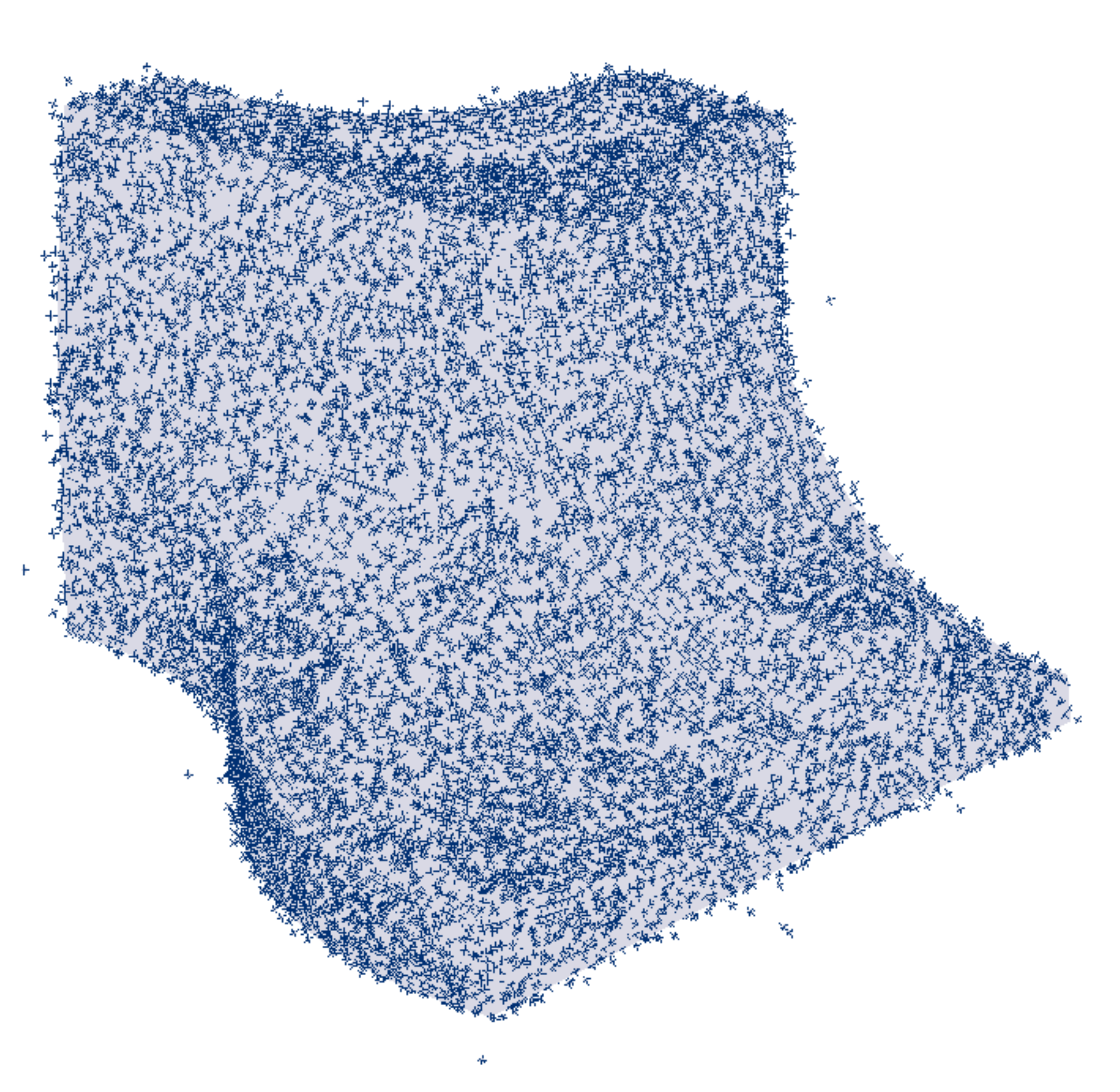}&
\includegraphics[width=.28\linewidth]{./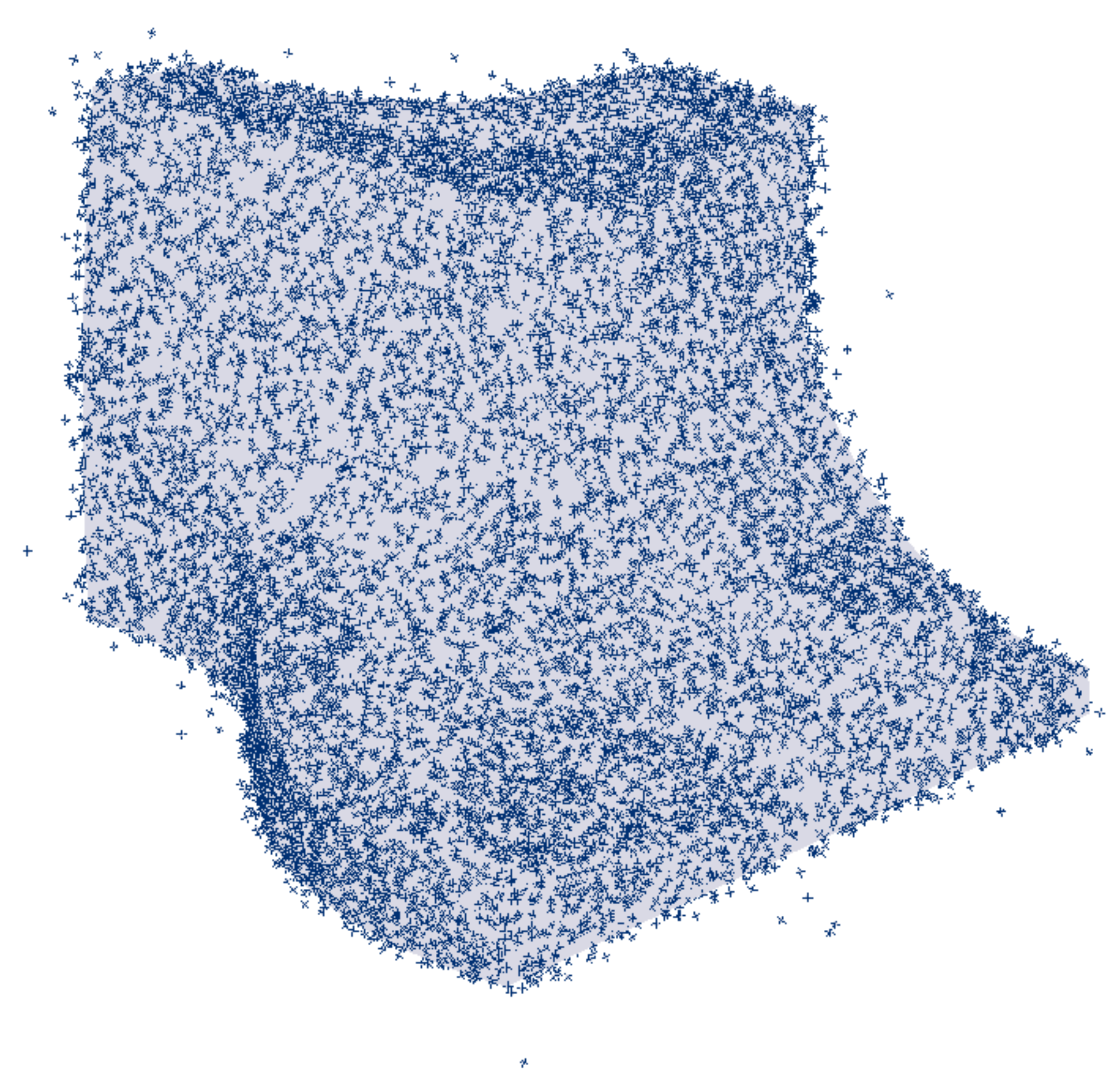}\\
Input 1 &
Input 2 &
 Input 3 \\
 ~&
~&
~\\
\includegraphics[width=.28\linewidth]{./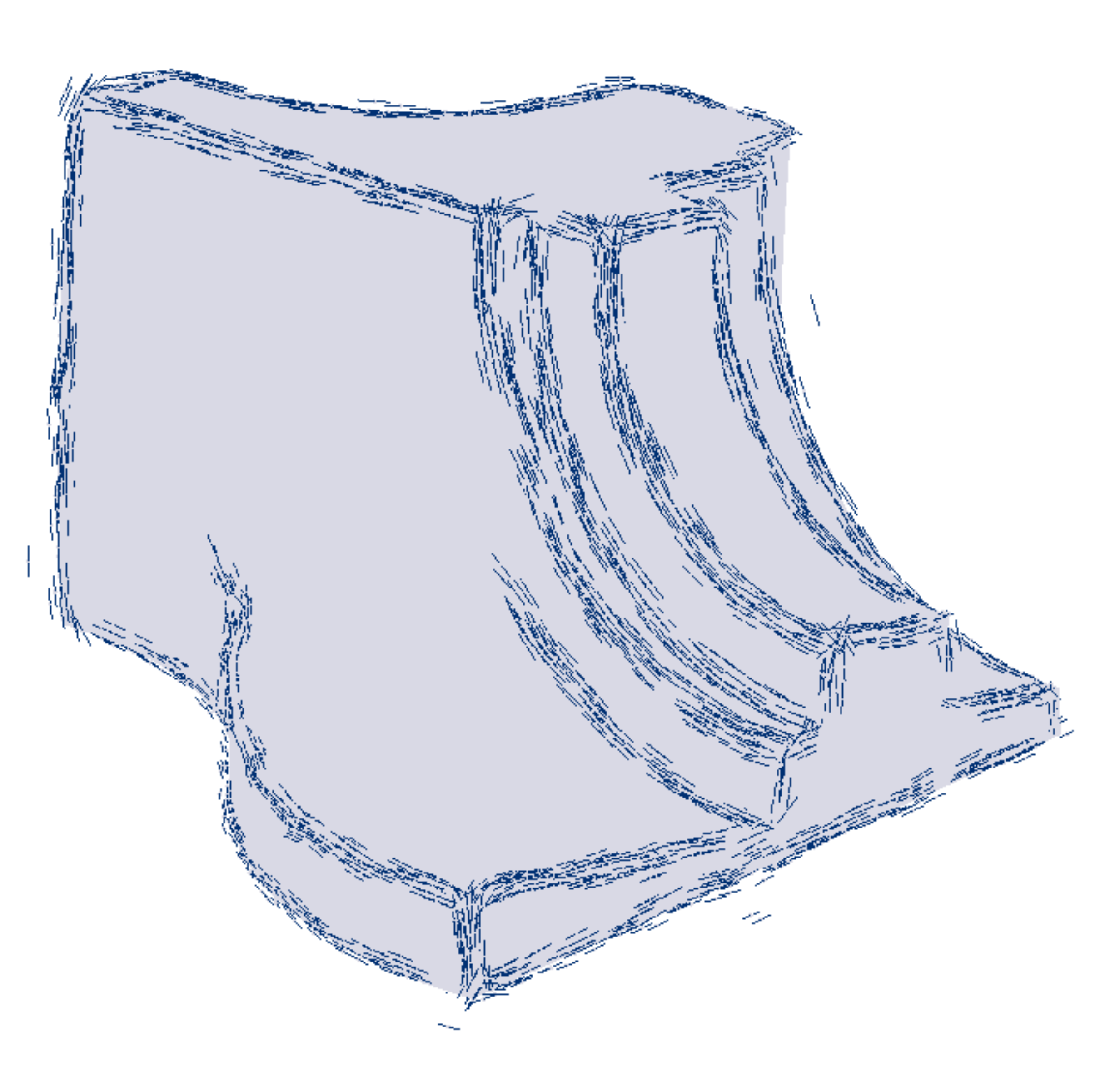}&
\includegraphics[width=.28\linewidth]{./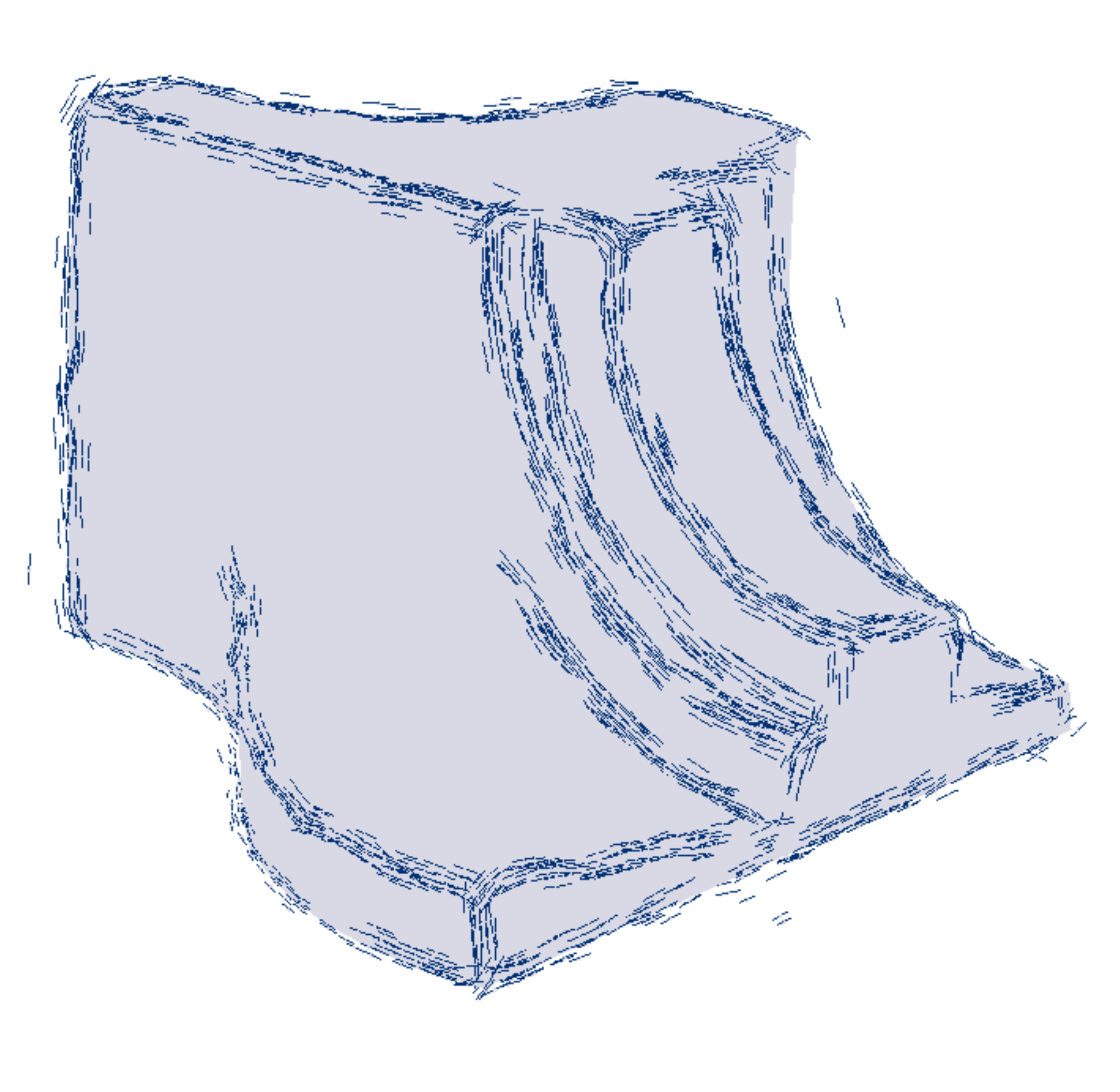}&
\includegraphics[width=.28\linewidth]{./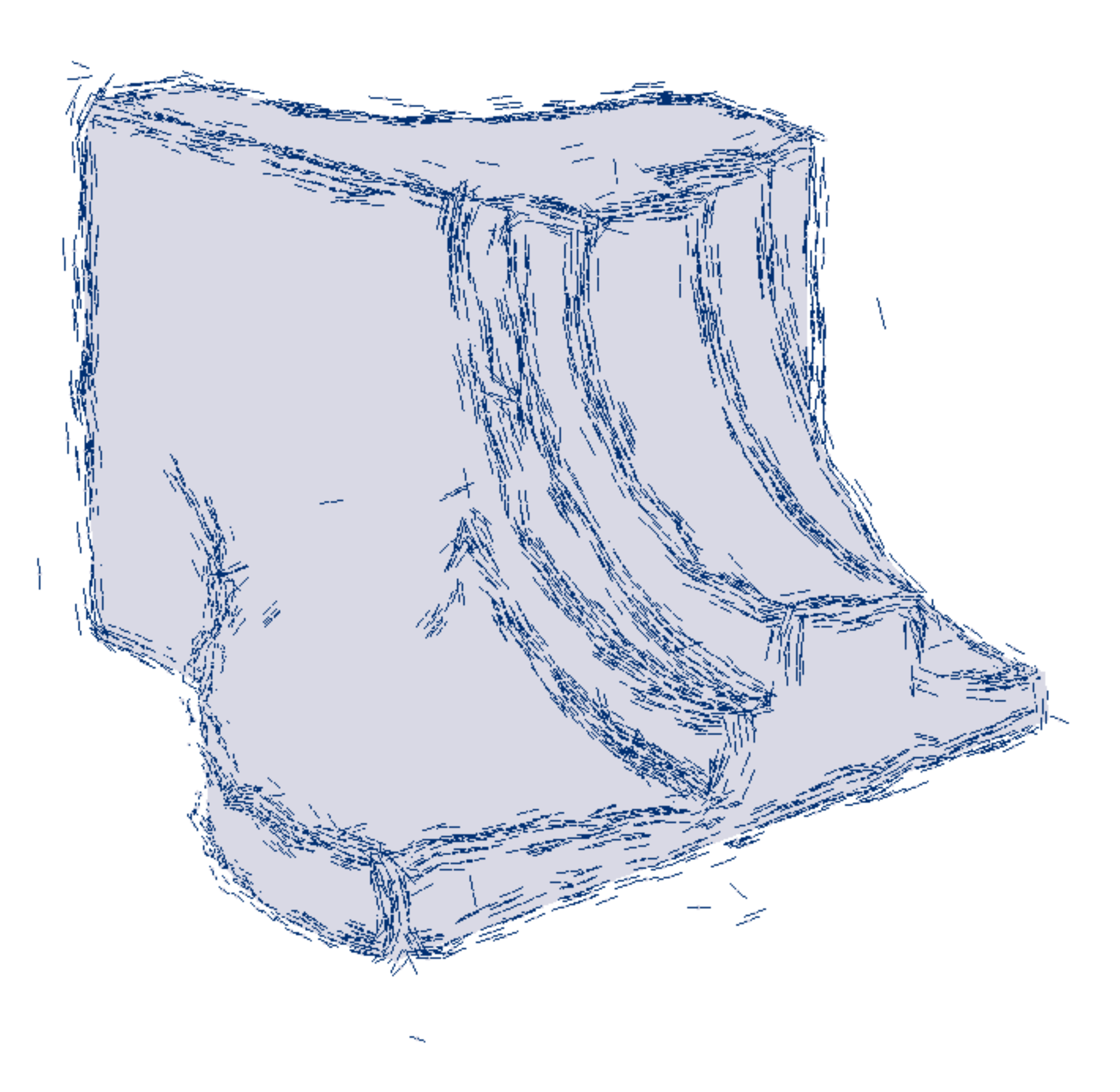}\\
$d^{\mathrm{w}}_{P,k}$-VCM &
$d^{\mathrm{w}}_{P,k}$-VCM &
$d^{\mathrm{w}}_{P,k}$-VCM \\
 ~&
~&
~\\
\includegraphics[width=.28\linewidth]{./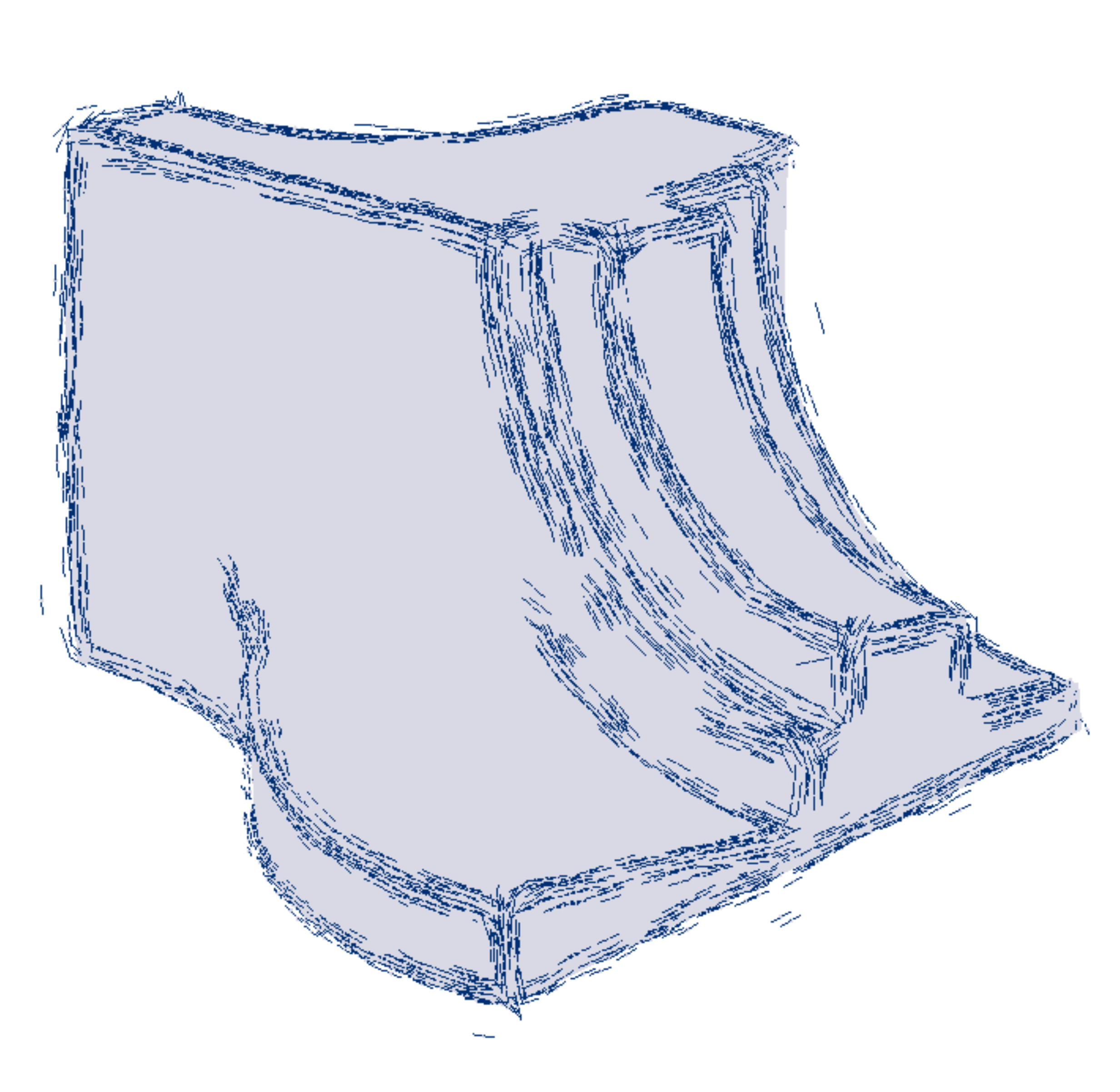}&
\includegraphics[width=.28\linewidth]{./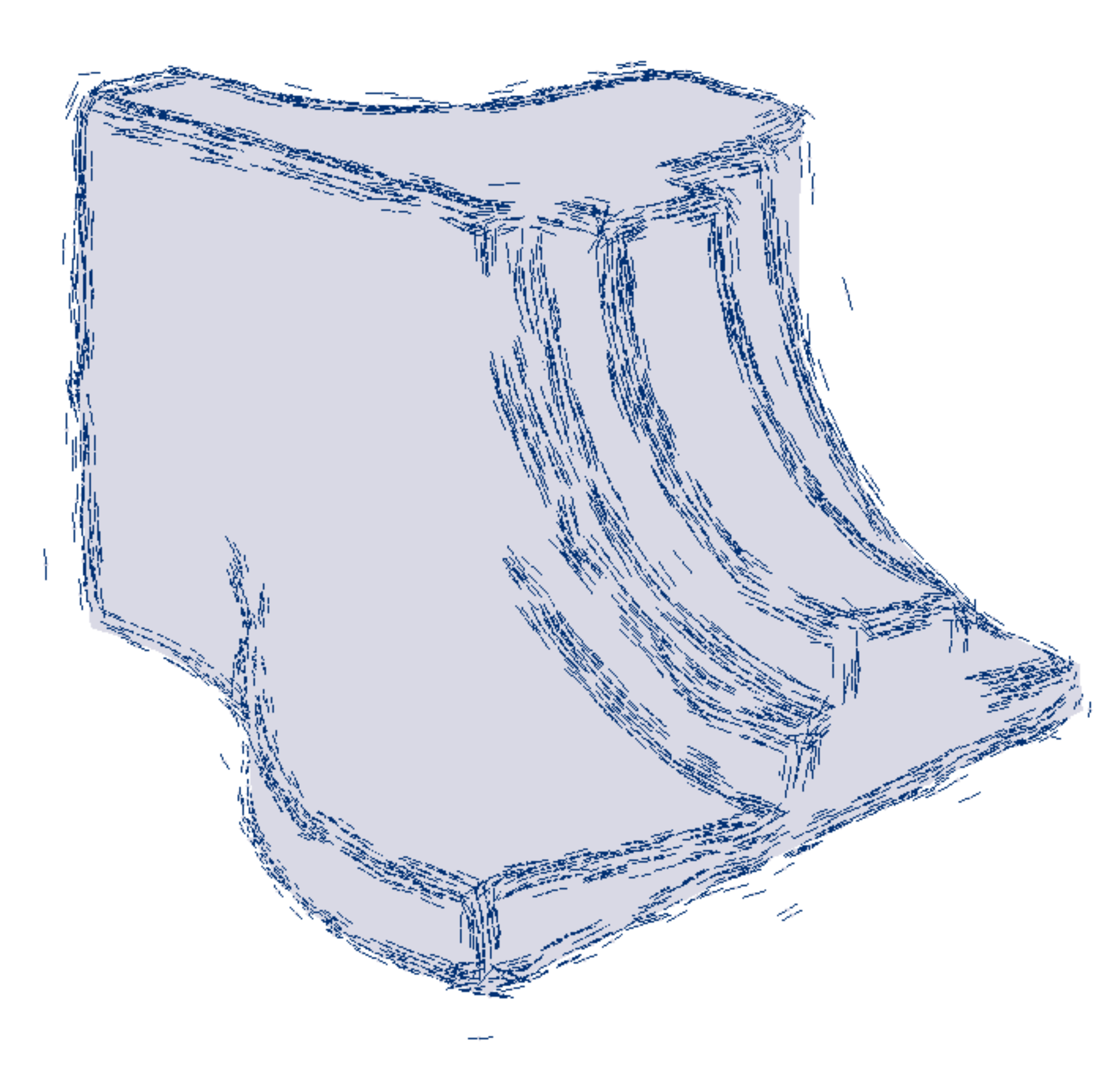}&
\includegraphics[width=.28\linewidth]{./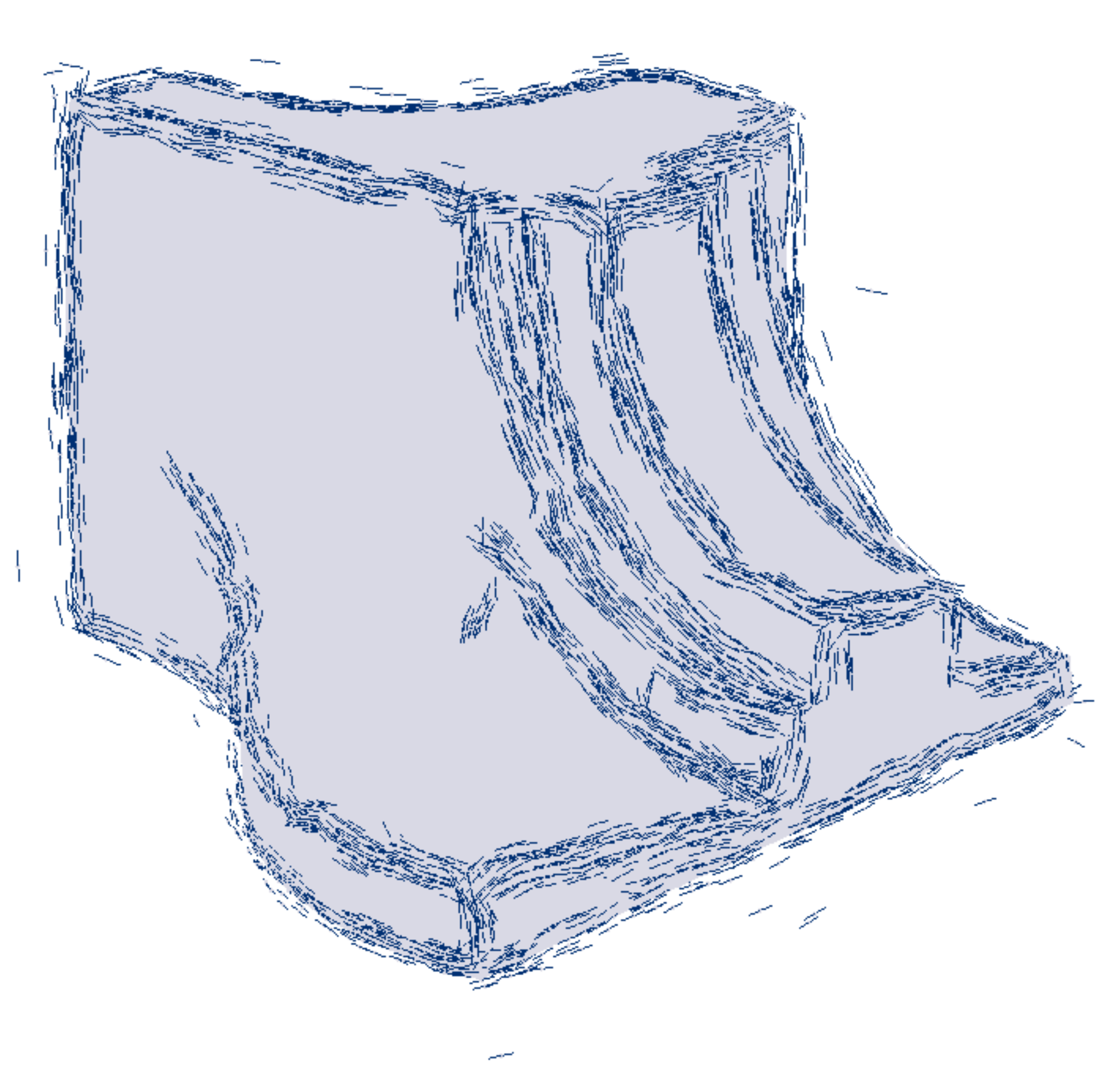}\\
$d^{\mathrm{m}}_{P,k}$-VCM &
$d^{\mathrm{m}}_{P,k}$-VCM &
$d^{\mathrm{m}}_{P,k}$-VCM \\
 ~&
~&
~\\
\includegraphics[width=.28\linewidth]{./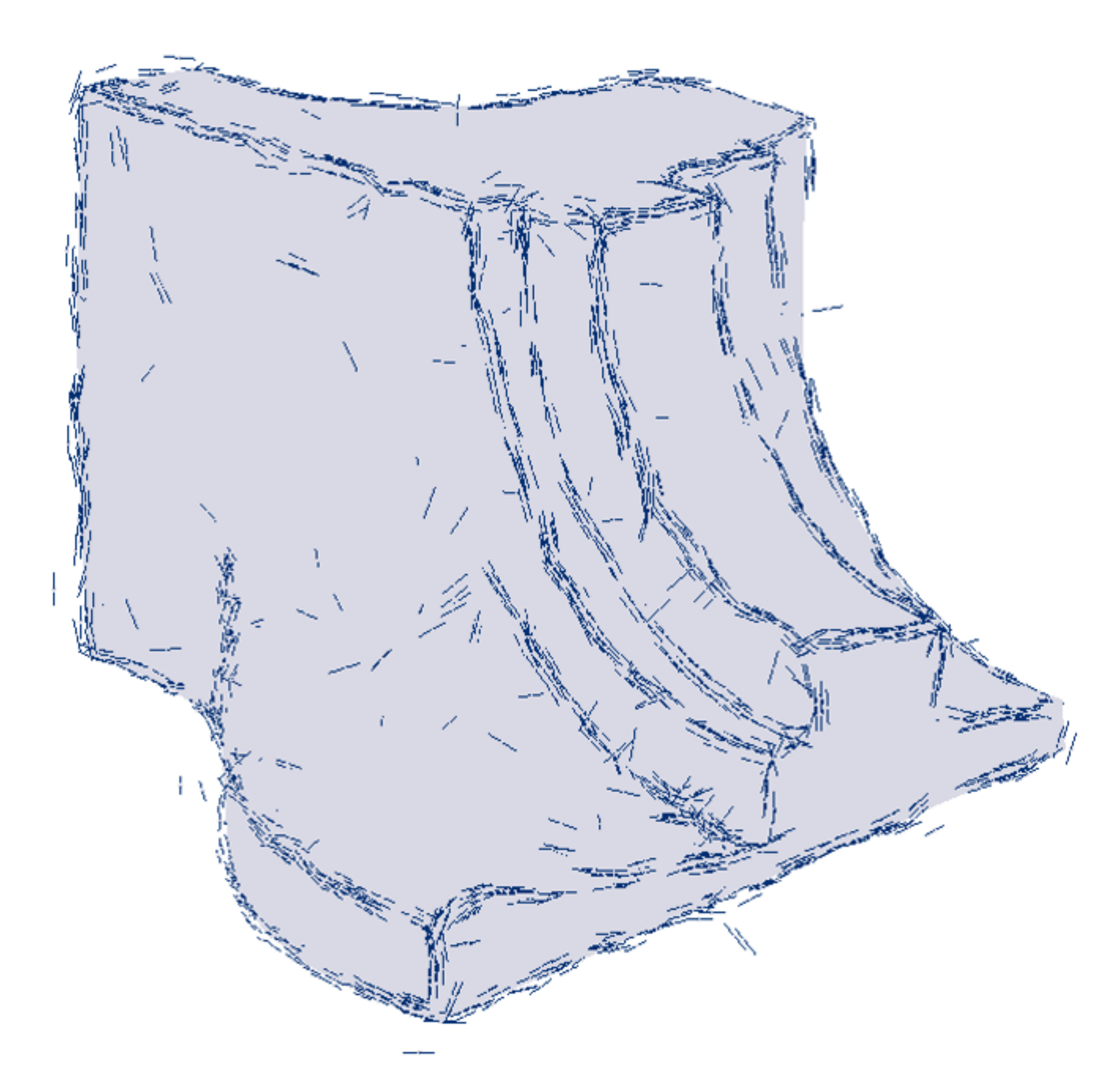}&
\includegraphics[width=.28\linewidth]{./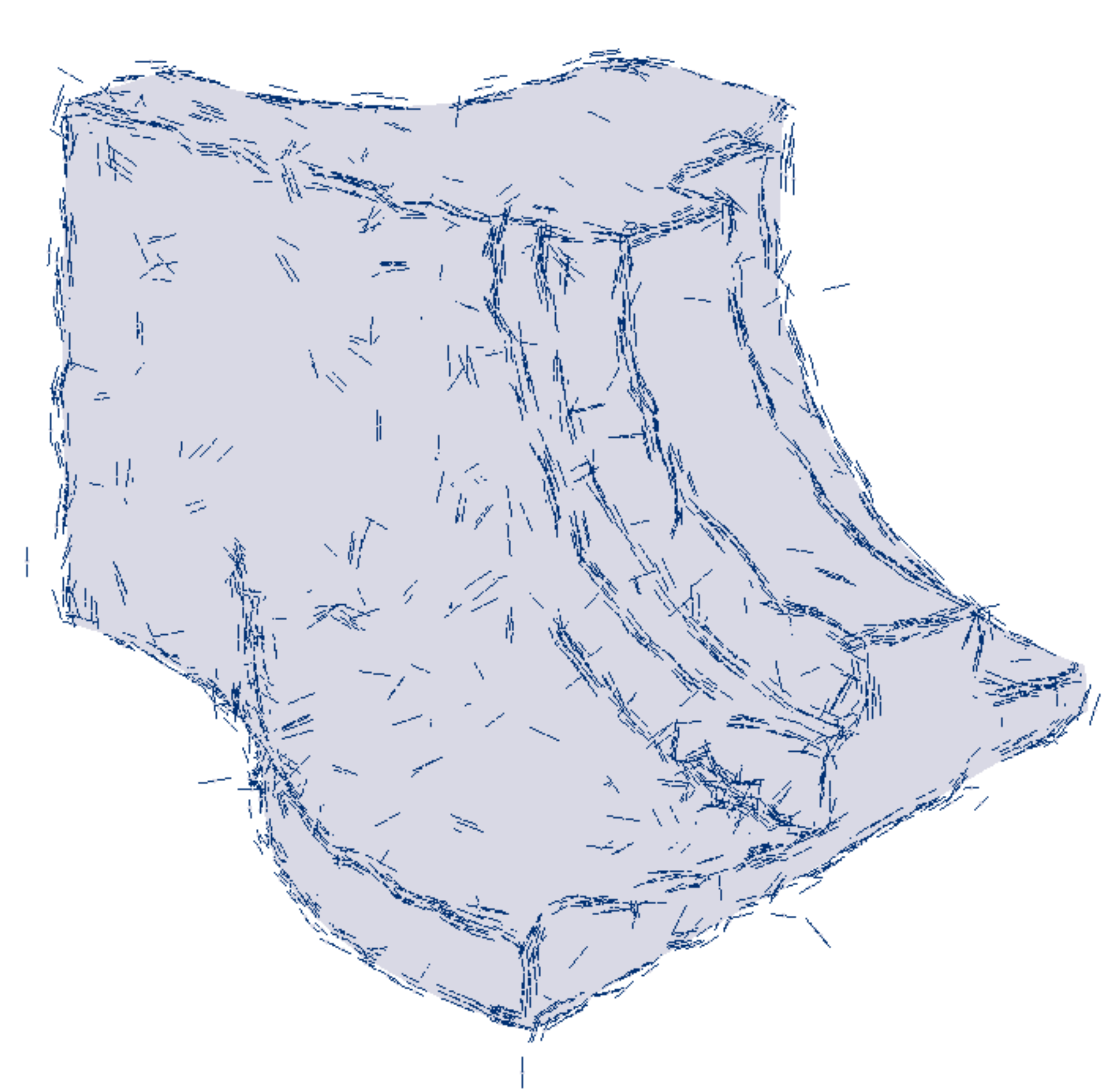}&
\includegraphics[width=.28\linewidth]{./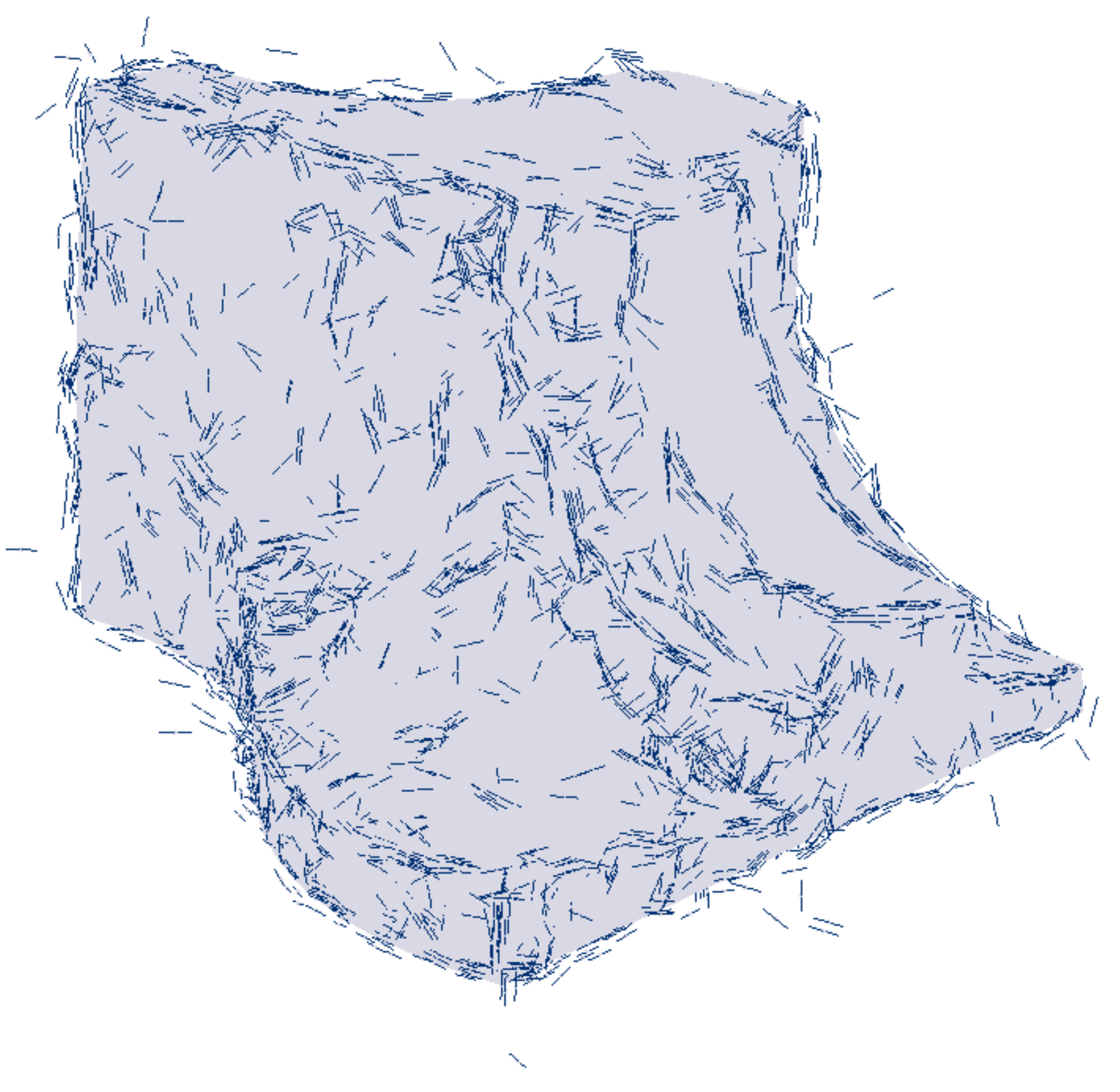}\\
VCM &
VCM &
VCM \\

\end{tabular}
\caption{Edge detection on a noisy fandisk of diameter $\diamnoise$: $98\%$ of the points are moved at a distance at most $0.02\diamnoise$ (Input 1), $0.035\diamnoise$ (Input 2) or $0.05\diamnoise$ (Input 3), $2\%$ are moved at a distance at most $0.1\diamnoise$ (Input 1), $0.175\diamnoise$ (Input 2) or $0.25\diamnoise$ (Input 3). All the results are computed with the parameters $R=0.06\diamnoise$ and $r=0.02\diamnoise$. The $d^{\mathrm{w}}_{P,k}$-VCM and the $d^{\mathrm{m}}_{P,k}$-VCM are computed with $k=30$.}
\label{fan}
\end{figure}

We also detect sharp features, using the same criteria as in \cite{vcm}: we say that a point $p$ belongs to a sharp feature if 
$\lambda_1 / (\lambda_0+\lambda_1+\lambda_2) \geq T$, for a given threshold $T$. 
Figure~\ref{fan} shows results on fandisk data with increasing noise (Hausdorff+outliers). It is clear that both witnessed-k-distance VCM and median-k-distance VCM are much less sensitive to outliers than VCM. More precisely our methods classify an outlier as a feature only if it is close to a sharp feature, while the VCM fails. This behavior is reinforced when the noise level increases.

\begin{figure}[!h]
\begin{tabular}{cccc}
\includegraphics[width=.24\linewidth]{./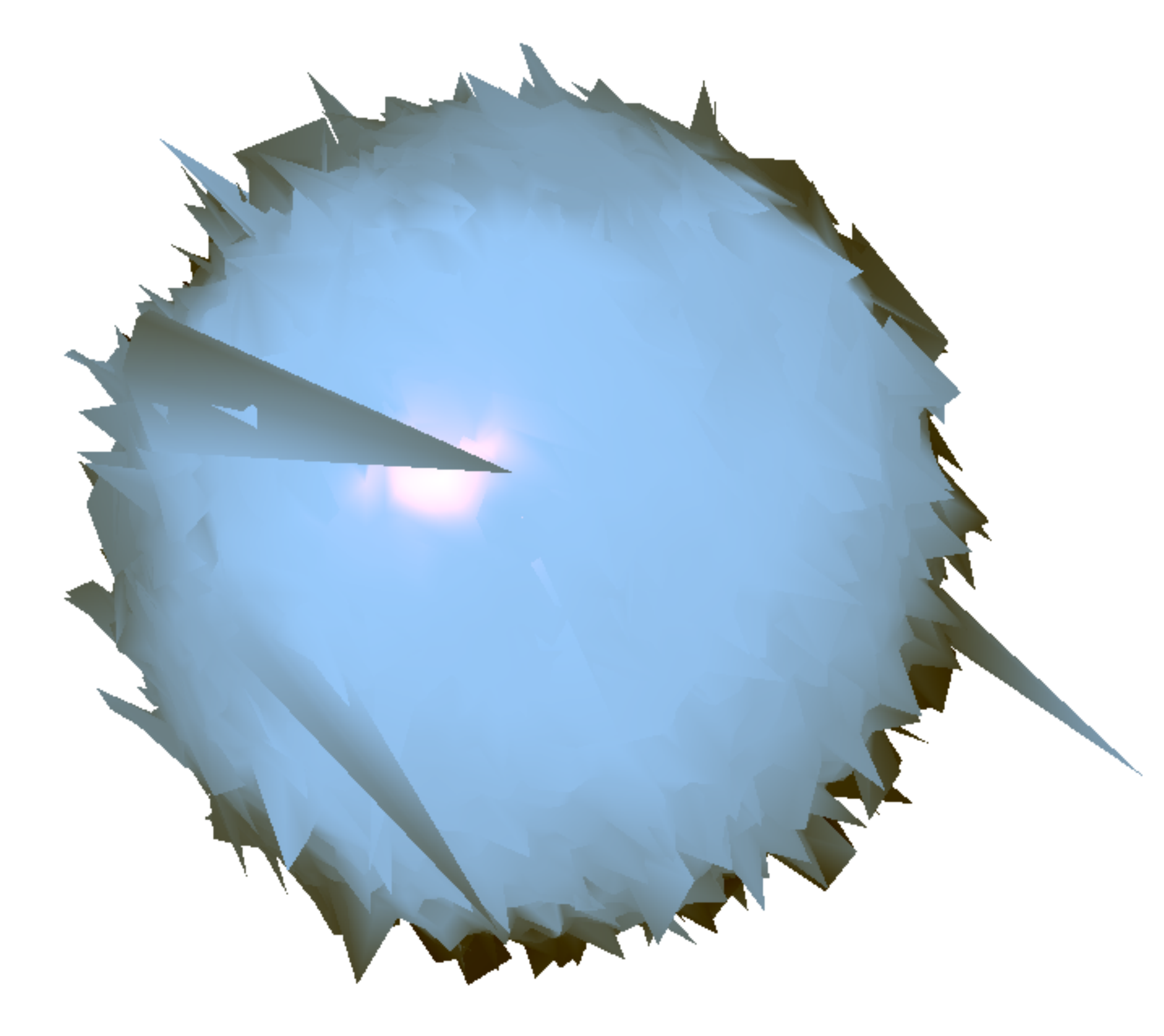}&
\includegraphics[width=.23\linewidth]{./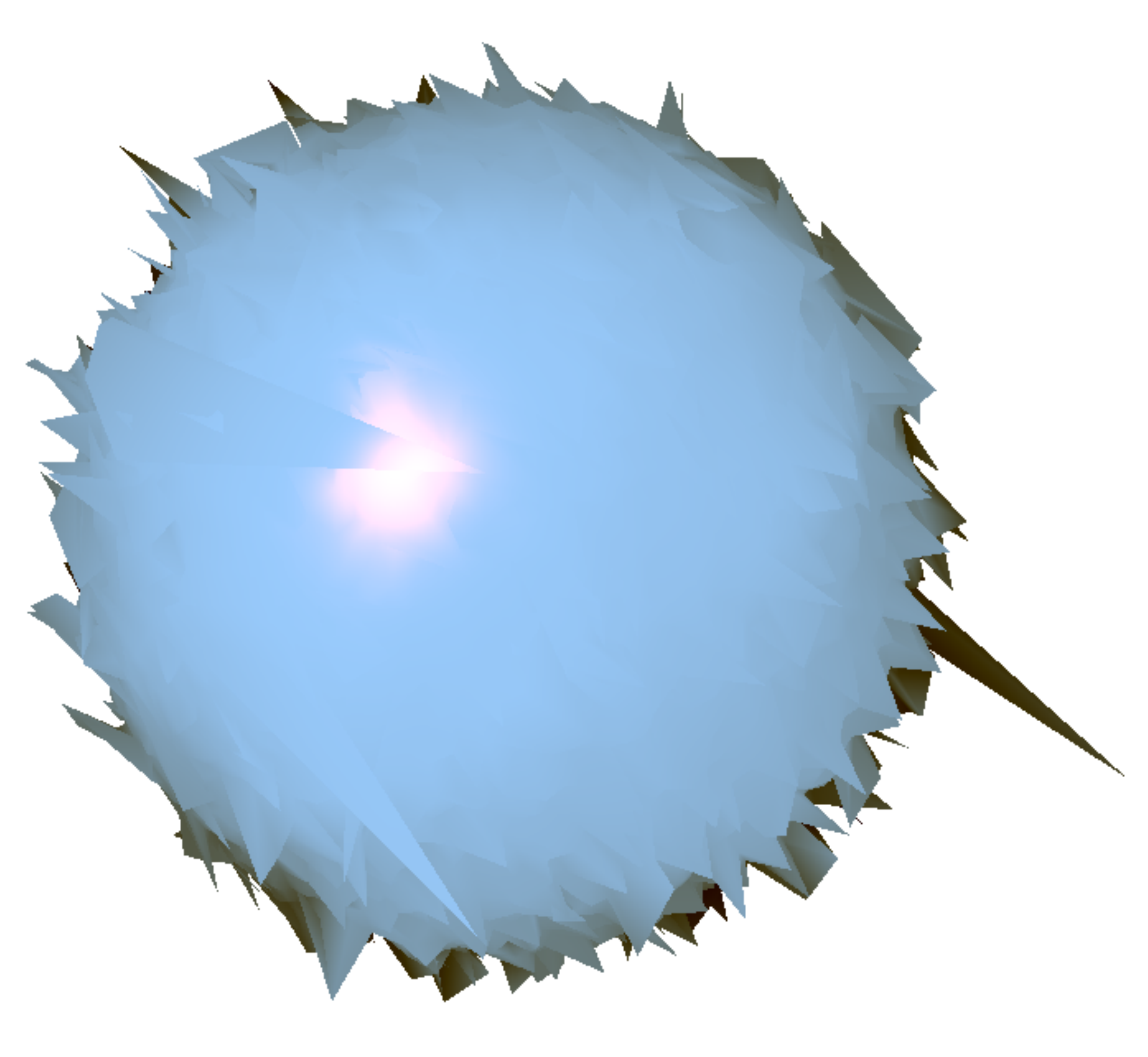}&
\includegraphics[width=.16\linewidth]{./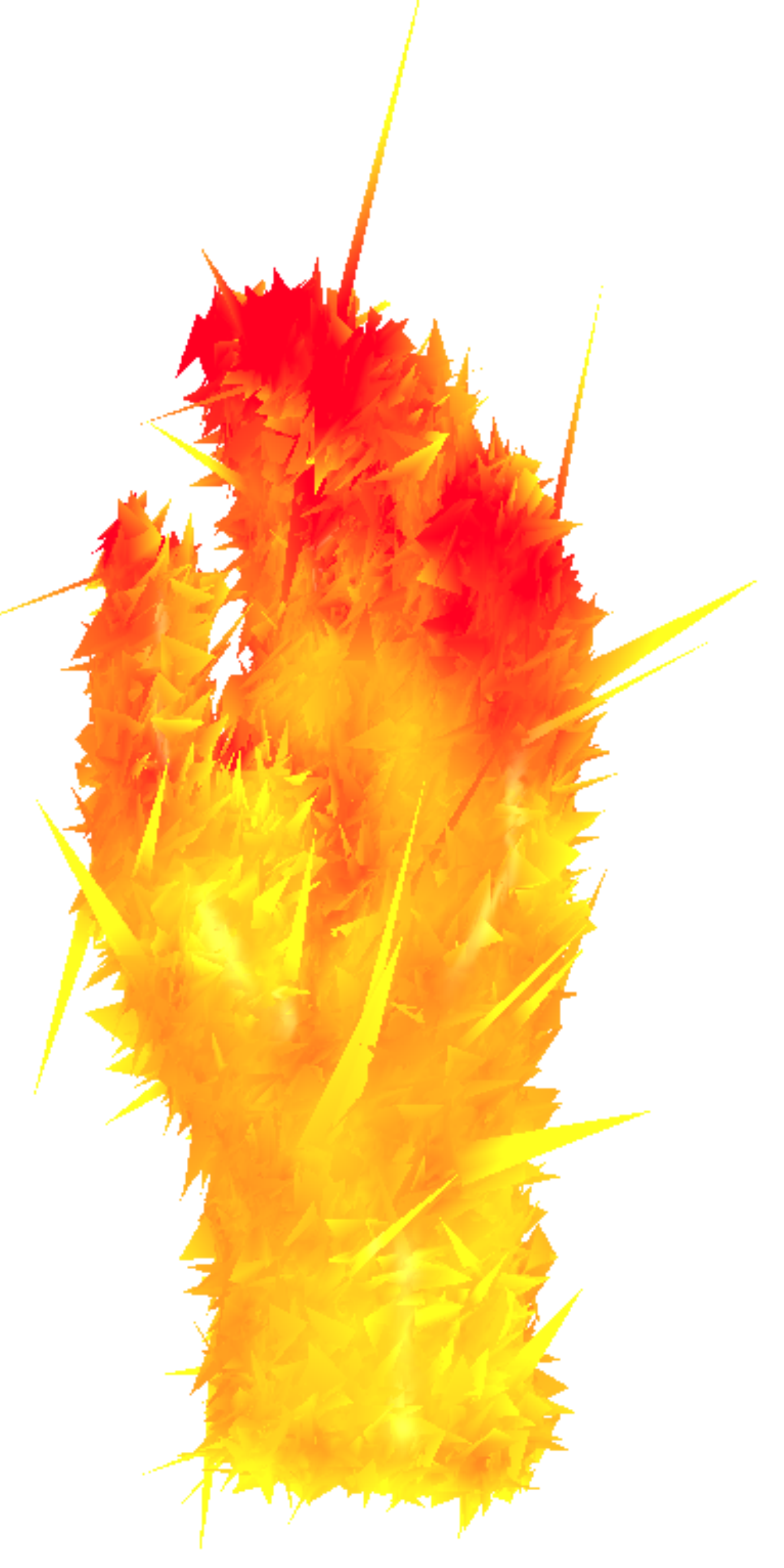}&
\includegraphics[width=.17\linewidth]{./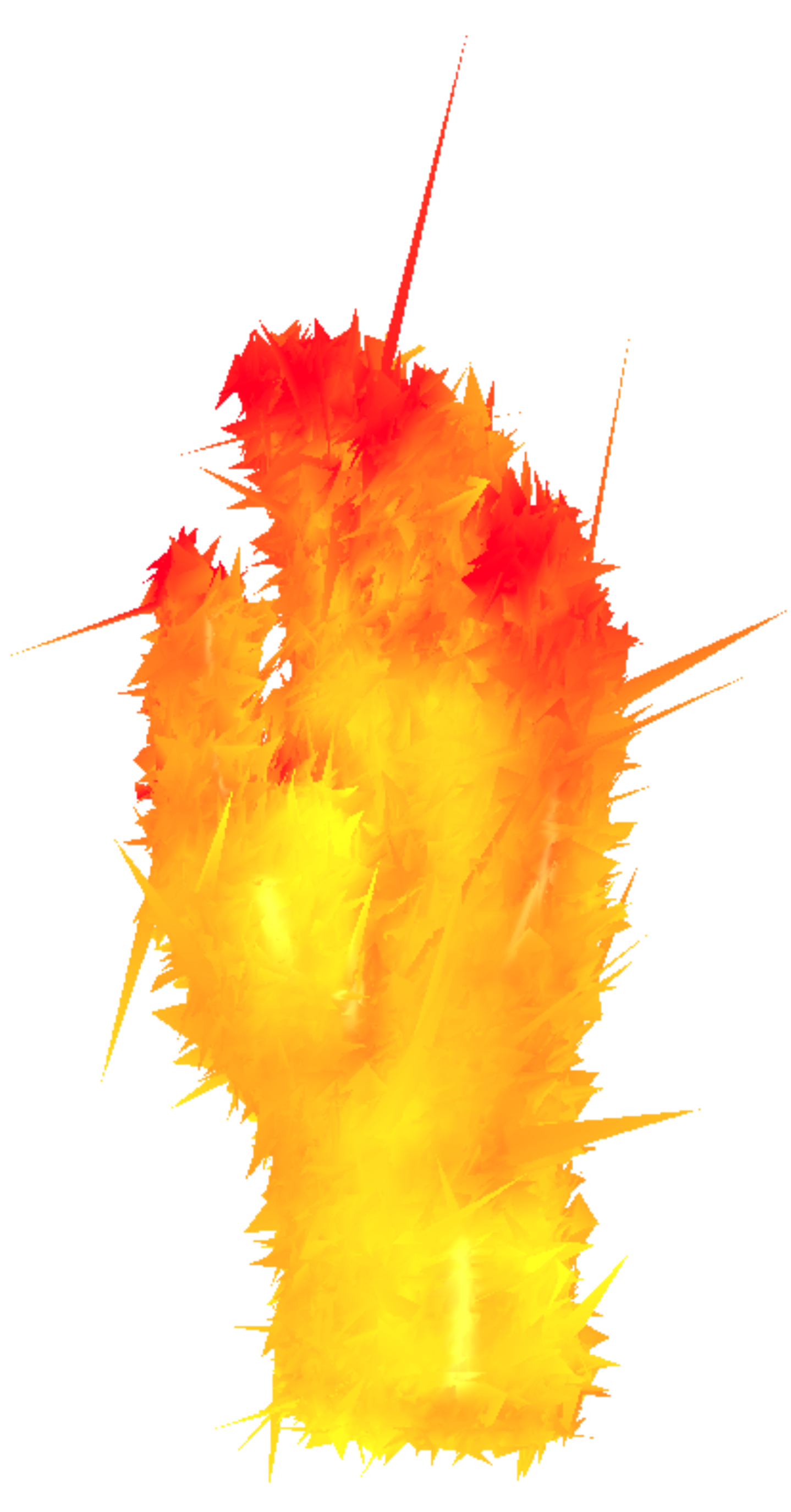}\\
$d^{\mathrm{w}}_{P,k}$-VCM & $d^{\mathrm{m}}_{P,k}$-VCM & $d^{\mathrm{w}}_{P,k}$-VCM & $d^{\mathrm{m}}_{P,k}$-VCM
\end{tabular}
\caption{Comparison between $d^{\mathrm{w}}_{P,k}$-VCM and $d^{\mathrm{m}}_{P,k}$-VCM (for the same parameters $R=0.04\diamnoise$, $r=0.02\diamnoise$, $k=30$ where $\diamnoise$ is the diameter of the original shape). Input datas contain outliers ($99.9\%$ of the points are moved at a distance at most $0.1\diamnoise$ and $0.1\%$ are moved at a distance between $0.1\diamnoise$ and $1\diamnoise$). Rendering illustrates the normal estimation quality on the ellipsoid data. The mean absolute curvature is displayed on the ``hand'' model.}
\label{out-estimation-comparison}
\end{figure}

\begin{figure}[!h]
\begin{tabular}{cc}
\includegraphics[width=.32\linewidth]{./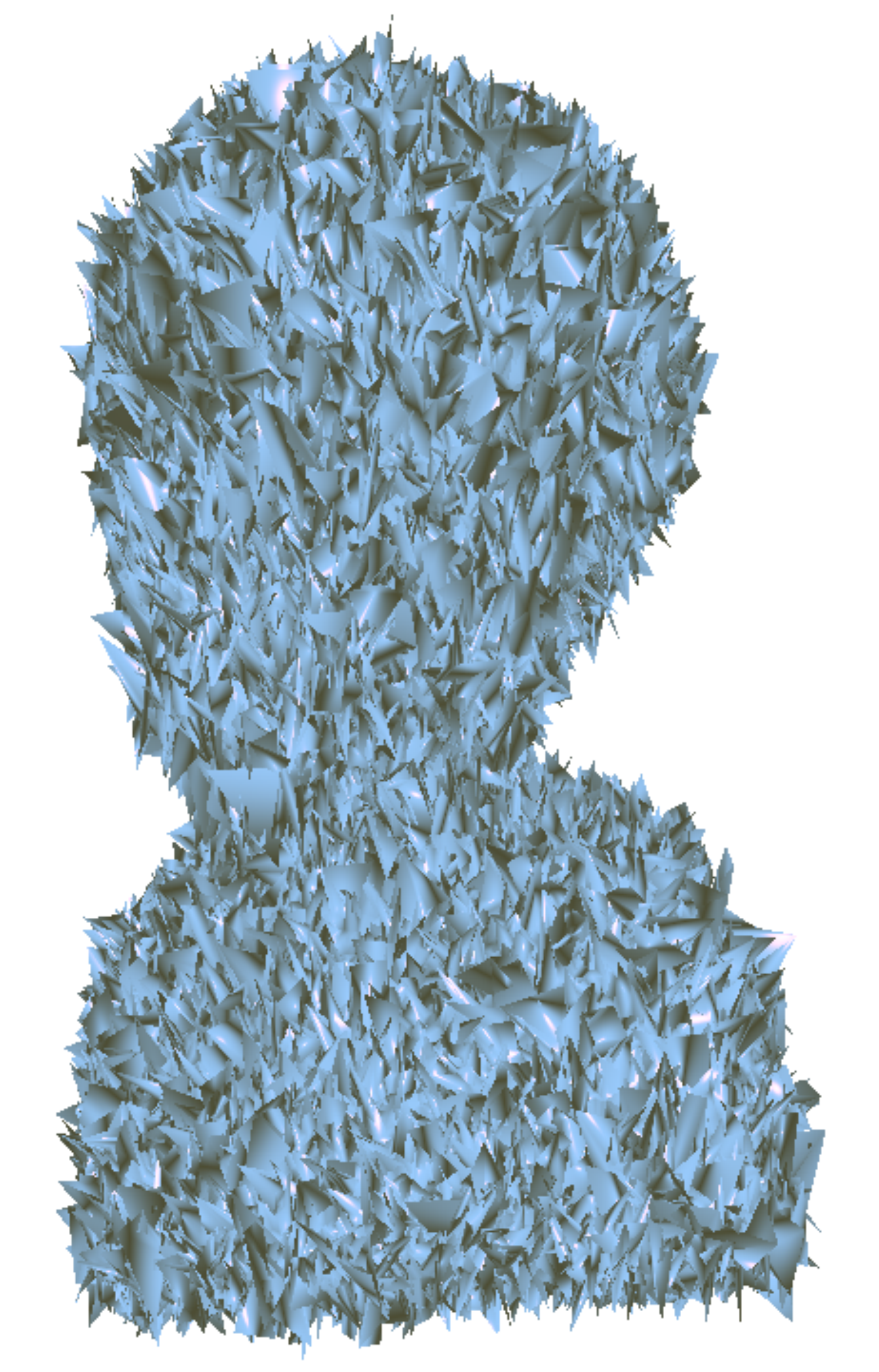}\hspace{0.2cm}&
\includegraphics[width=.6\linewidth]{./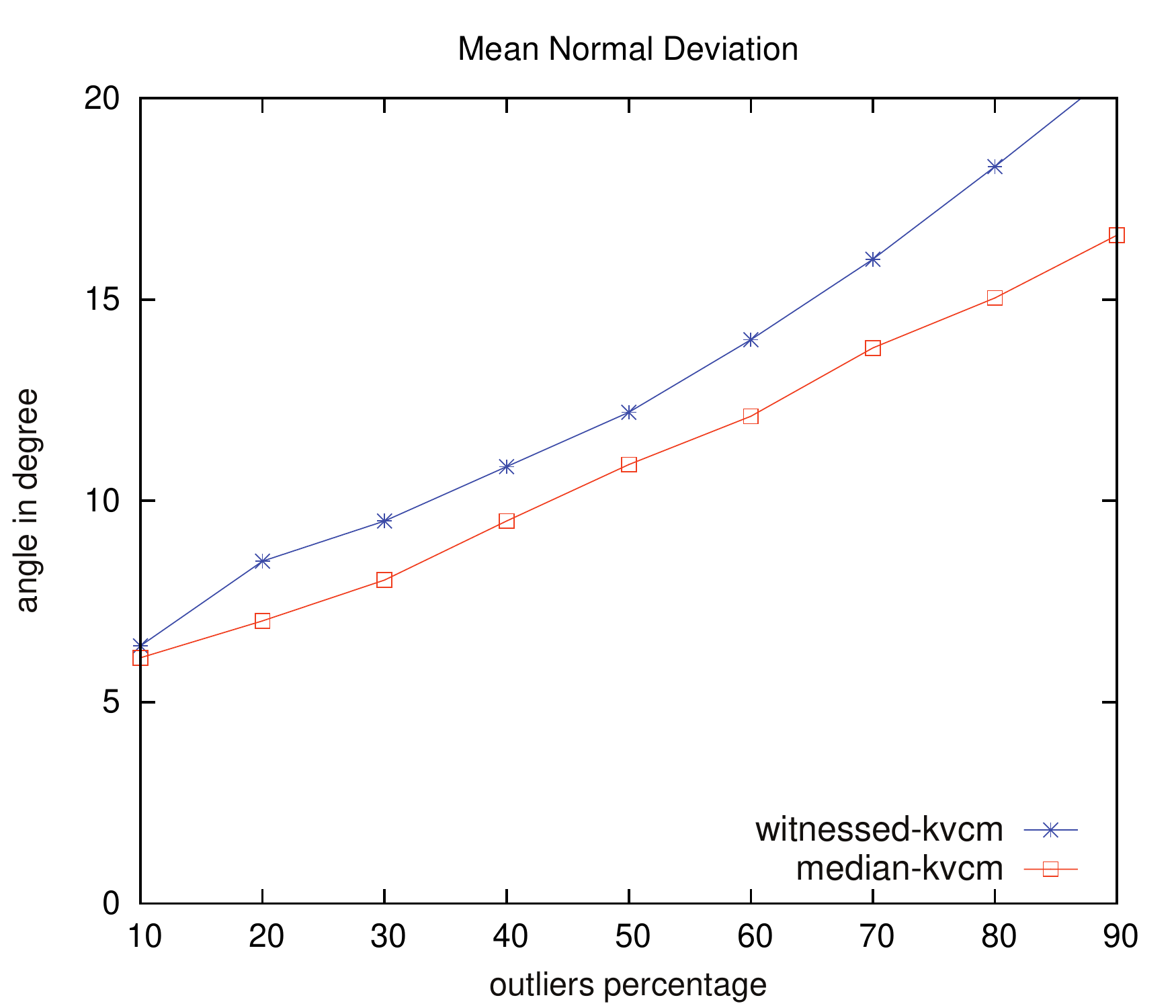}\\
\end{tabular}
\caption{Comparison between $d^{\mathrm{w}}_{P,k}$-VCM and $d^{\mathrm{m}}_{P,k}$-VCM (for the same parameters $R=0.1\diamnoise$, $r=0.1\diamnoise$, $k=30$, where $\diamnoise$ is the diameter of the original shape). Hausdorff noise is $0.2\diamnoise$. The percentage of ``outliers" corresponds to the percentage of points that have been moved at a distance in-between $0.05\diamnoise$ and $0.1\diamnoise$. Left: input mesh with $50\%$ of outliers. Right: evolution of the angle error when the proportion of outliers increases.} 
\label{normal-comparison}
\end{figure}

\subsection{Comparison with the $d^{\mathrm{m}}_{P,k}$-VCM}
Figures \ref{out-estimation-comparison} and \ref{normal-comparison}
indicate empirically that the $d^{\mathrm{m}}_{P,k}$-VCM gives a
slightly better estimation of the normal vector and absolute mean
curvature than the $d^{\mathrm{m}}_{P,k}$-VCM in the presence of
outliers. 
It appears in Figure~\ref{fan} that the median-k-distance and the witnessed-k-distance give very similar results. One may also remark that at very few points, the median-k-distance gives slightly better results.


\bibliographystyle{amsplain}
\bibliography{biblio}

\end{document}